\documentclass[a4paper,11pt]{article}

\usepackage[latin1]{inputenc}
\usepackage{amsmath,amssymb}
\usepackage{amsthm}
\usepackage{a4wide}
\usepackage{graphicx}
\usepackage{enumerate}
\usepackage{float}
\usepackage{calc}

\usepackage[round]{natbib}
\bibliographystyle{abbrvnat}

\newtheorem{theorem}{Theorem}[section]
\newtheorem{lemma}[theorem]{Lemma}

\newtheorem{corollary}[theorem]{Corollary}
\newtheorem{proposition}[theorem]{Proposition}

\floatstyle{ruled}
\newfloat{algorithm}{htbp!}{loa}
\floatname{algorithm}{Algorithm}

\newlength{\algindent}
\newenvironment{AlgorithmSteps}{%
  \setlength{\algindent}{0mm}
  \begin{enumerate}[\bf\small 1.]%
  \setlength{\itemsep}{3pt} 
}{%
  \end{enumerate}

}
\newcommand{\Step}[1]{\item\hspace*{\algindent}\parbox[t]{\textwidth-\algindent-\leftmargin}{#1}}
\newcommand{\Comment}[1]{\item[]\hspace*{\algindent}\parbox[t]{\textwidth-\algindent-\leftmargin}{\sl//#1}}
\newcommand{\IncreaseIndent}{\addtolength{\algindent}{7mm}}
\newcommand{\DecreaseIndent}{\addtolength{\algindent}{-7mm}}

\newcommand{\A}{\mathcal{A}}
\newcommand{\C}{\mathcal{C}}
\newcommand{\MH}{\mathcal{H}}
\newcommand{\I}{\mathcal{I}}
\newcommand{\M}{\mathcal{M}}
\newcommand{\Q}{\mathcal{Q}}
\newcommand{\eg}{\textit{e.g.}}
\newcommand{\ie}{\textit{i.e.}}
\newcommand{\lin}{\ensuremath{\text{\sc Lin}}}
\newcommand{\cir}{\ensuremath{\text{\sc Cir}}}
\renewcommand{\to}{\ensuremath{\longrightarrow}}

\newlength{\lengthTo}
\newlength{\lengthNot}
\settowidth{\lengthTo}{$\to$}
\settowidth{\lengthNot}{$\not\ $}
\newcommand{\nto}{%
 \ensuremath{%
  \to\hspace*{-.5\lengthTo}\hspace*{-\lengthNot}\not\ \hspace*{.5\lengthTo}%
 }%
}

\title{Subclasses of Normal Helly Circular-Arc Graphs}
\author{%
  Min Chih Lin~\thanks{CONICET}~$^\dag$ \and%
  Francisco J.\ Soulignac~\thanks{Universidad de Buenos Aires, Facultad de Ciencias Exactas y Naturales, Departamento  de  Computaci\'on, Buenos Aires, Argentina. {\small \{\texttt{oscarlin, fsoulign}\}\texttt{.dc.uba.ar}}} \and%
  Jayme L.\ Szwarcfiter~\thanks{Universidade Federal do Rio de Janeiro, Instituto de Matem\'atica, NCE and COPPE, Caixa Postal 2324, 20001-970 Rio de Janeiro, RJ, Brasil. {\small \texttt{jayme@nce.ufrj.br}}}%
}

\date{}

\begin{document}
\maketitle
\begin{abstract}
 A Helly circular-arc model $\M =(C,\A)$ is a circle $C$ together with a Helly family $\A$ of arcs of $C$. If no arc is contained in any other, then $\M$ is a proper Helly circular-arc model, if every arc has the same length, then $\M$ is a unit Helly circular-arc model, and if there are no two arcs covering the circle, then $\M$ is a normal Helly circular-arc model. A Helly (resp.\ proper Helly, unit Helly, normal Helly) circular-arc graph is the intersection graph of the arcs of a Helly (resp.\ proper Helly, unit Helly, normal Helly) circular-arc model. In this article we study these subclasses of Helly circular-arc graphs.  We show natural generalizations of several properties of (proper) interval graphs that hold for some of these Helly circular-arc subclasses.  Next, we describe characterizations for the subclasses of Helly circular-arc graphs, including forbidden induced subgraphs characterizations. These characterizations lead to efficient algorithms for recognizing graphs within these classes.  Finally, we show how do these classes of graphs relate with straight and round digraphs.

 \vspace*{3mm} {\bf Keywords:} Helly circular-arc graphs, proper circular-arc graphs, unit circular-arc graphs, normal circular-arc graphs
\end{abstract}

\section{Introduction}
\label{sec:introduction}

A graph is an \emph{interval graph} if its vertices can be arranged into a one-to-one mapping with a family of intervals of the real line in such a way that two vertices of the graph are adjacent if and only if their corresponding intervals have nonempty intersection.  The family of intervals is a called an \emph{interval model} or \emph{interval representation} of the graph.  Similarly, a graph is a \emph{circular-arc graph} if its vertices can be arranged into a one-to-one mapping with a family of arcs of some circle in such a way that two vertices of the graph are adjacent if and only if their corresponding arcs have nonempty intersection.  The circle together with the family of arcs is a called a \emph{circular-arc model} or \emph{circular-arc representation} of the graph.

Circular-arc graphs are a natural generalization of interval graphs; the intersection of segments of a line (\ie\ intervals) is replaced with the intersection of segments of a circle (\ie\ arcs).  Despite the replacement of a line with a circle can seem like a petty modification, circular-arc graphs have a much more complex structure than interval graphs.  One of the main reasons behind this complexity is that arcs can intersect in different ways as intervals do.  In an interval model, each family $\I$ of pairwise intersecting intervals is associated with a unique nonempty segment $(s, t)$ of the real line such that $(s, t)$ is contained in every interval of $\I$, and $(s, t)$ is inclusion-maximal.  The situation for circular-arc models is quite different, since for some families of pairwise intersecting arcs there could be zero or more than one of such segments. As a result of this generality, many properties that hold for interval graphs cannot be naturally generalized to circular-arc graphs.  Let $\C$ be the class of circular-arc models such that every family $\A$ of pairwise intersecting arcs is associated with a unique nonempty segment $(s,t)$ of the circle such that $(s,t)$ is contained in every arc of $\A$, and $(s,t)$ is inclusion-maximal.  It is not hard to see that a circular-arc model $\M$ belongs to $\C$ if and only if $\M$ has no two nor three arcs that together cover its circle.

The circular-arc models in $\C$ are related with two important subclasses of circular-arc models, namely normal and Helly circular-arc models.  A circular-arc model is \emph{normal} when no two arcs cover the circle, while it is \emph{Helly} when every family of pairwise intersecting arcs share a common point of the circle.  \citet{LinSzwarcfiter2006} proved that if a circular-arc model is not Helly, then it contains two or three arcs that together cover the circle.  Furthermore, if a normal circular-arc model has three arcs that cover the circle, then the model is not Helly. Consequently, $\C$ is precisely the class of circular-arc models that are simultaneously normal and Helly.  We define the \emph{normal Helly} circular-arc graphs as those circular-arc graphs that admit a model in $\C$.  

The purpose of this paper is to study the class of normal Helly circular-arc graphs, as well as its subclasses.  In Section~\ref{sec:preliminaries} we give the basic definitions used throughout the paper, and we introduce the families of circular-arc models and graphs.  The reasons for studying these subclasses of circular-arc graphs are given in Section~\ref{sec:motivation}, where we compile several properties of interval graphs that hold for normal Helly circular-arc graphs but do not hold for the general classes of normal circular-arc and Helly circular-arc graphs.  Of course, there is also a completely theoretical motivation for studying normal Helly circular-arc graphs, namely to study all the circular-arc subclasses that are obtained by the intersection of the most well known subclasses of circular-arc models.  Next, in Section~\ref{sec:characterizations}, we describe the forbidden induced subgraph characterizations of the normal Helly subclasses, while in Section~\ref{sec:algorithms} we deal with the corresponding recognition problems.  Finally, in Section~\ref{sec:properties} we provide some additional results on normal Helly circular-arc graphs that further relate this class of graphs with the class of interval graphs, and in Section~\ref{sec:conclusions} we discuss some possibilities for future research. 

\section{Preliminaries}
\label{sec:preliminaries}

For a graph $G$, we use $V(G)$ and $E(G)$ to denote the sets of vertices and edges of $G$, respectively, while we use $n$ and $m$ to denote $|V(G)|$ and $|E(G)|$, respectively.  We will write $uv$ to represent the edge of $G$ between the pair of adjacent vertices $u$ and $v$. For $v \in V(G)$, denote by $N_G(v)$ the set of vertices adjacent to $v$, and $N_G[v] = N_G(v) \cup \{v\}$.  The \emph{degree} of $v$ is $d_G(v) = |N_G(v)|$, and $v$ is \emph{universal} if $N_G[v]=V(G)$. Two vertices $v$ and $w$ are \emph{twins} in $G$ if $N_G[v] = N_G[w]$.  We omit the subscript from $N$ and $d$ when there is no ambiguity about the referred graph.

The subgraph of $G$ \emph{induced} by $V \subseteq V(G)$, denoted by $G[V]$, is the graph that has $V$ as vertex set and two vertices of $G[V]$ are adjacent if and only if they are adjacent in $G$.  A \emph{clique} is a maximal subset of pairwise adjacent vertices.  A \emph{hole} is a chordless cycle with at least four vertices.  A \emph{bipartite} graph is a graph whose vertex set can be partitioned into two sets of pairwise non-adjacent vertices, where one of the partitions may be empty.  A \emph{co-bipartite} graph is the complement of a bipartite graph.

As for graphs, we respectively use $V(D)$ and $E(D)$ to refer to the sets of vertices and edges of a digraph $D$, while we use $n$ and $m$ to denote $|V(D)|$ and $|E(D)|$, respectively.  We write $v \to w$ to indicate that there is a directed edge from $v$ to $w$ in $D$, and $v \nto w$ to indicate that there is no such directed edge.  The \emph{underlying graph} of $D$ is the graph $G(D)$ with vertex set $V(D)$ such that $v$ is adjacent to $w$ in $G$ if and only if $v$ is adjacent to $w$ in $D$.  When $v \to w$, we say that $v$ is an \emph{in-neighbor} of $w$ and that $w$ is an \emph{out-neighbor} of $v$.  The \emph{inset} of $v$ is the set $N_D^-(v)$ of all the in-neighbors of $v$, and the \emph{outset} of $v$ is the set $N_D^+(v)$ of all the out-neighbors of $v$.  The \emph{closed inset} is $N_D^-[v] = N_D^-(v) \cup \{v\}$, and the \emph{closed outset} is $N_D^+[v] = N_D^+(v) \cup \{v\}$. The cardinality of $N_D^-(v)$, denoted by $d_D^-(v)$,  is the \emph{indegree} of $v$ and the cardinality of $N_D^+(v)$, denoted by $d_D^+(v)$, is the \emph{outdegree} of $v$.  As for graphs, we omit the subscripts in $N$ and $d$ when there is no ambiguity about the referred digraph. 

The subdigraph of $D$ \emph{induced} by $V \subseteq V(D)$, denoted by $D[V]$, is the digraph that has $V$ as vertex set and $v \to w$ in $D[V]$ if and only if $v \to w$ in $D$.  A digraph $D$ is an \emph{oriented graph} when either $v \nto w$ or $w \nto v$, for every $v,w \in V(D)$.  In other words, $D$ is an oriented graph if it can be obtained from a graph $G$ by choosing an orientation for each edge $vw$ of $G$.  In such case, we call $D$ an \emph{orientation} of $G$, \ie, $D$ is an \emph{orientation} of $G$ if $D$ is an oriented graph whose underlying graph is isomorphic to $G$.

In this paper we mainly deal with collections that are of a circular (cyclic) nature, such as circles (viewed as a collection of points), circular orderings, etc. Usually, the objects of a collection are labeled with some kind of index that identifies the position of the object inside the collection.  Unless otherwise stated, we assume that all the operations on these indices are taken modulo the length of the collection.  Furthermore, we may refer to negative indices and to indices greater than the length of the collection.  In these cases, indices should also be understood modulo the length of the collection.  

\subsection{Circular-arc models}

A \emph{circular-arc model} $\M$ is an ordered pair $(C(\M), \A(\M))$ where $C(\M)$ is a circle and $\A(\M)$ is a finite family of open arcs of $C(\M)$.  The arcs in $\A(\M)$ are said to be \emph{arcs} of $\M$, and $C(\M)$ is said to be the \emph{circle} of $\M$.  Unless explicitly stated, we always choose the clockwise direction for traversing $C(\M)$. For $s, t \in C(\M)$, write $(s, t)$ to mean the open arc of $C(\M)$ defined by traversing the circle from $s$ to $t$.  Call $s, t$ the \emph{extremes} of $(s, t)$, while $s$ is the \emph{beginning point} and $t$ is the \emph{ending point} of the arc.  For each $A \in \A(\M)$, represent by $s(A)$ the beginning point of $A$ and by $t(A)$ the ending point of $A$.  The \emph{extremes} of $\M$ are those of all the arcs $A \in \A(\M)$.  

Throughout this paper, we assume that no pair of extremes of a circular-arc model coincide and that no single arc entirely covers the circle of the model \citep{Golumbic2004}.  A \emph{segment} of $\M$ is an arc of $C(\M)$ formed by two consecutive extremes of $\M$.  Say that $\epsilon > 0$ is \emph{small enough} if $\epsilon < \ell$, where $\ell$ is the minimum among the lengths of all the segments of $\M$. \emph{Duplicating the arc $A$} means inserting the arc $(s(A) + \epsilon, t(A) + \epsilon)$ into $\M$, for some small enough $\epsilon$.  An \emph{$s$-sequence} is a maximal sequence of consecutive beginning points.  Similarly, a \emph{$t$-sequence} is a maximal sequence of consecutive ending points.  In general, an \emph{extreme sequence} means either an $s$-sequence or a $t$-sequence.   The \emph{complement} of $A \in \A(\M)$ is the arc $\overline{A} = (t(A), s(A))$, and the \emph{complement} of $\M$ is the circular-arc model $\overline{\M} = (C(\M), \{\overline{A} \mid A \in \A(\M)\})$.

There are five elementary classes of circular-arc models that are of special interest \citep{LinSzwarcfiterDM2009}.  Let $\M$ be a circular-arc model.  If every family of pairwise intersecting arcs share a common point, then $\M$ is a \emph{Helly} circular-arc (HCA) model.  If $\M$ has no pair of arcs that together cover the circle, then $\M$ is a \emph{normal} circular-arc (NCA) model.  When there is no arc contained in any other arc, we say that $\M$ is a \emph{proper} circular-arc (PCA) model.  If in addition all the arcs have the same length, then $\M$ is a \emph{unit} circular-arc (UCA) model.  Finally, $\M$ is an \emph{interval} circular-arc (ICA) model when some point of $C(\M)$ is covered by no arcs.

An \emph{interval model} is a finite family $\I$ of open intervals on the real line.  It is easy to see that every interval circular-arc model $\M$ is in a one-to-one correspondence with an interval model $\I$.  For this reason we will use interval models in replace of ICA models, and every definition on circular-arc models is translated to interval models with such correspondence.  

The five elementary classes of circular-arc models can be combined so as to generate a total of $32$ classes of circular-arc models, as follows.  Let $X \subseteq \{$N, P, U, H, I$\}$.  Say that $\M$ is an \emph{$X$CA} model if $\M$ is an $x$CA model, for every $x \in X$.  For instance, $\M$ is an \{N,H\}CA model if $\M$ is both an NCA and an HCA model.  Clearly, if $\M$ is an $X$CA model, then $\M$ is also an $Y$CA for every $Y \subseteq X$.  Not all the $32$ classes of circular-arc models are different, because some of the properties are implied by others.  For example, every UCA model is proper, thus every UCA  model is a \{U,P\}CA model as well.  Similarly, every interval model is Helly and normal.  This leaves us with $15$ different classes of circular-arc models that are listed in Table~\ref{tab:CA models}.  To avoid the ugly set notation to name a class of circular-arc models, we choose a better acronym for each class of circular-arc models.

\begin{table}
 \begin{center}
 \begin{tabular}{|l|l|c|c|c|c|c|}
  \hline %
  Name of the model                & Acronym  & I & U & P & H & N \\ 
  \hline %
  Circular-arc                     & CA       & 0 & 0 & 0 & 0 & 0 \\ 
  Normal circular-arc              & NCA      & 0 & 0 & 0 & 0 & 1 \\
  Helly circular-arc               & HCA      & 0 & 0 & 0 & 1 & 0 \\
  Normal Helly circular-arc        & NHCA     & 0 & 0 & 0 & 1 & 1 \\
  Proper circular-arc              & PCA      & 0 & 0 & 1 & 0 & 0 \\
  Normal proper circular-arc       & PNCA     & 0 & 0 & 1 & 0 & 1 \\
  Proper Helly circular-arc        & PHCA     & 0 & 0 & 1 & 1 & 0 \\
  Normal proper Helly circular-arc & NPHCA    & 0 & 0 & 1 & 1 & 1 \\
  Unit circular-arc                & UCA      & 0 & 1 & 1 & 0 & 0 \\
  Normal unit circular-arc         & NUCA     & 0 & 1 & 1 & 0 & 1 \\
  Unit Helly circular-arc          & UHCA     & 0 & 1 & 1 & 1 & 0 \\
  Normal unit Helly circular-arc   & NUHCA    & 0 & 1 & 1 & 1 & 1 \\
  Interval                         & IG       & 1 & 0 & 0 & 1 & 1 \\
  Proper interval                  & PIG      & 1 & 0 & 1 & 1 & 1 \\
  Unit interval                    & UIG      & 1 & 1 & 1 & 1 & 1 \\
  \hline
 \end{tabular}
 \end{center}
\caption[Subclasses of circular-arc models]{Subclasses of circular-arc models.  The five columns on the right show the properties that each class of model satisfies.}\label{tab:CA models}
\end{table}

\subsection{Circular-arc graphs}

A graph $G$ is a \emph{circular-arc graph} if there is a one-to-one correspondence between the vertices of $G$ and a family $\A$ of arcs of a circle $C$ such that two vertices are adjacent if and only if their corresponding arcs have nonempty intersection. The circular-arc model $(C, \A)$ is called a \emph{model} or \emph{representation} of $G$, and $G$ is said to \emph{admit} the model $(C, \A)$.  In other words, $G$ is a circular-arc graph if it is isomorphic to the intersection graph of $\A$.  Say that two circular-arc models are \emph{equivalent} when their intersection graphs are isomorphic.

By restricting the attention to a subclass of circular-arc models, we obtain a special class of circular-arc graphs, formed by those graphs that admit a model of the subclass.  For $X \subseteq \{$N, P, U, H, I$\}$, say that $G$ is an \emph{$X$CA} graph when $G$ admits an $X$CA model.  As before, these $32$ classes of circular-arc graphs are not all different.  Not even the fifteen classes defined by the special models in Table~\ref{tab:CA models} are all different, because graphs may admit many circular-arc models.  

\begin{theorem}[\citealp{Roberts1969}, see Theorem~\ref{thm:roberts}]
Every PIG model is equivalent to a UIG model.
\end{theorem}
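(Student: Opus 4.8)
The plan is to view the given PIG model, through the one-to-one correspondence recorded in Section~\ref{sec:preliminaries}, as an interval model $\I = \{I_1, \dots, I_n\}$, and to build a unit interval model with the same intersection graph. First I would exploit the defining property of a proper model: since no interval contains another and (by our standing assumption) all $2n$ extremes are distinct, sorting the intervals by their beginning points also sorts them by their ending points. Hence I may index the intervals so that $s(I_1) < \dots < s(I_n)$ and simultaneously $t(I_1) < \dots < t(I_n)$. The adjacency structure then becomes transparent: for $i<j$ the intervals $I_i$ and $I_j$ meet if and only if $s(I_j) < t(I_i)$.

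Next I would record two order-theoretic facts that drive the construction. For each $i$ let $g(i) = \min\{\, j : I_j \cap I_i \neq \emptyset\,\}$ and $f(i) = \max\{\, j : I_j \cap I_i \neq \emptyset\,\}$ be the leftmost and rightmost intervals meeting $I_i$. Because the ending points increase with the index, the neighbors of $I_i$ form a contiguous block $\{g(i), \dots, f(i)\}$ of indices; because the beginning points increase, both $g$ and $f$ are non-decreasing in $i$. These facts are purely combinatorial, so they are shared by every equivalent model and, in particular, removing the rightmost interval $I_n$ leaves the leftmost-neighbor values $g(i)$ unchanged for $i \le n-1$.

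The core of the argument is an induction on $n$ producing reals $p_1 < \dots < p_n$ such that the open unit intervals $(p_i, p_i + 1)$ have exactly these adjacencies, i.e.\ for $i<j$ one has $p_j - p_i < 1$ if and only if $j \le f(i)$. Deleting $I_n$ leaves a PIG model on $I_1, \dots, I_{n-1}$, so by the inductive hypothesis I obtain a unit representation $p_1 < \dots < p_{n-1}$ of it; in particular, since $I_{n-1}$ is adjacent to its leftmost neighbor $I_{g(n-1)}$ in that submodel, the representation satisfies $p_{n-1} - p_{g(n-1)} < 1$. It remains to place $p_n$. As $I_n$ must meet precisely $I_{g(n)}, \dots, I_{n-1}$, I would choose $p_n$ in the open window $\bigl(\max(p_{n-1},\, p_{g(n)-1}+1),\ p_{g(n)}+1\bigr)$, adopting the convention $p_0 = -\infty$ for boundary cases. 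The upper bound $p_n < p_{g(n)}+1$ forces an edge to $I_{g(n)}$, and hence (since the $p_i$ increase) to the whole block up to $I_{n-1}$; the lower bound $p_n \ge p_{g(n)-1}+1$ forbids the edge to $I_{g(n)-1}$, and a fortiori to everything further to the left.

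The one point that needs care—and is the crux of the proof—is that this window is nonempty, i.e.\ that its lower endpoint lies strictly below $p_{g(n)}+1$. This amounts to the two inequalities $p_{g(n)-1} < p_{g(n)}$ and $p_{n-1} < p_{g(n)}+1$; the former is immediate from $p_1 < \dots < p_{n-1}$. For the latter I would invoke the monotonicity of $g$: from $g(n) \ge g(n-1)$ we get $p_{g(n)} \ge p_{g(n-1)}$, whence $p_{n-1} - p_{g(n)} \le p_{n-1} - p_{g(n-1)} < 1$ by the inductive inequality above. Thus the window is nonempty, a valid $p_n$ exists, and the adjacencies for index $n$ hold by construction, closing the induction and yielding the desired UIG model. (The same feasibility could be phrased as the consistency of the difference constraints $\sum_{k=i}^{f(i)-1}(p_{k+1}-p_k) < 1 \le \sum_{k=i}^{f(i)}(p_{k+1}-p_k)$, but the inductive window argument keeps the decisive monotonicity of $g$ in plain view.)
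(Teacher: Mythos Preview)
The paper does not prove this theorem; it merely cites Roberts' 1969 result and cross-references the fuller statement in Theorem~\ref{thm:roberts}, which is likewise quoted without proof. There is therefore no ``paper's own proof'' to compare against.

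Your argument is correct and self-contained. The key insight---that in a proper model the left-neighbor function $g$ is monotone, and that this monotonicity is exactly what guarantees the placement window $\bigl(\max(p_{n-1},\,p_{g(n)-1}+1),\ p_{g(n)}+1\bigr)$ is nonempty---is clean and makes the induction go through. Two minor remarks: (1) your phrase ``the lower bound $p_n \ge p_{g(n)-1}+1$'' should read $p_n > p_{g(n)-1}+1$ since the window is open, though the weaker inequality is all you need for non-adjacency with open unit intervals; (2) you should state the base case $n=1$ explicitly, trivial as it is. Neither affects the validity of the proof.
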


\begin{theorem}[\citealp{TuckerDM1974}, see Theorem~\ref{thm:PCA->normal}]
Every PCA model is equivalent to an NPCA model.
\end{theorem}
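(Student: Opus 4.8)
The plan is to start from a PCA model $\M=(C(\M),\A(\M))$ and, since a model is normal exactly when no two of its arcs cover $C(\M)$, to destroy one covering pair at a time without altering the intersection graph; iterating drives the number of covering pairs to zero and produces the desired NPCA model. The first step is to make everything combinatorial. Because each arc $(s,t)$ is the clockwise arc from its beginning point to its ending point, the intersection graph of $\M$ is determined by the cyclic word obtained by reading the $2n$ extremes around $C(\M)$ together with their $s/t$ labels and the arc each belongs to. Properness is equivalent to the matched-order condition that the cyclic order of the beginning points coincides with the cyclic order of the ending points; under this hypothesis two distinct arcs $A,B$ are adjacent iff $s(A)\in B$ or $s(B)\in A$, and they cover $C(\M)$ iff \emph{both} $s(A)\in B$ and $s(B)\in A$. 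I record these characterizations first so that all later reasoning is carried out on the cyclic word.

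Next I would introduce the local move. Suppose $\{A,B\}$ is a covering pair with $s(B)\in A$ such that $s(B)$ and $t(A)$ are consecutive extremes, i.e. the clockwise arc $(s(B),t(A))$ is a segment of $\M$. Replace $A$ by the slightly shorter arc with the same beginning point whose ending point is moved, for a small enough $\epsilon$, just counterclockwise of $s(B)$; this pulls $t(A)$ back across $s(B)$ and across no other extreme. Since no ending point changes its rank among the ending points, the cyclic orders of the beginning points and of the ending points are each unchanged, so the model remains proper. The deleted sliver contains only the extreme $s(B)$, so the only containment of the form $s(\cdot)\in(\cdot)$ that disappears is $s(B)\in A$, and the only pair whose relation can change is $\{A,B\}$, which stays adjacent through $s(A)\in B$ and ceases to be a covering pair. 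Hence the move preserves the intersection graph, keeps the model proper, and strictly decreases the nonnegative integer potential $\Phi(\M)=\#\{(X,Y):s(Y)\in X\}$. As $\Phi$ is bounded below and drops by one at each move, finitely many moves reach a proper model with no covering pair, i.e. an NPCA model.

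The crux is to guarantee that, while a covering pair exists, some such move is applicable; equivalently, that $\M$ has a covering pair $\{A,B\}$ for which $(s(B),t(A))$ contains no extreme. I would obtain one by an extremal argument: among all ordered covering pairs $(A,B)$ with $s(B)\in A$, choose one for which $(s(B),t(A))$ contains the fewest extremes, and show this number must be $0$. An intervening beginning point $s(D)$ gives $s(D)\in A$, and if $\{A,D\}$ were also covering we would get a covering pair with a strictly shorter gap, contradicting minimality; the delicate cases are an intervening ending point, or an intervening beginning point of an arc that merely overlaps $A$. Here I would exploit the symmetry $\M\leftrightarrow\overline{\M}$: complementing swaps beginning and ending points and reverses the orientation, converting the operation ``pull an ending point back across a covering partner's beginning'' into the mirror operation ``push a beginning point forward across a covering partner's ending'', so it suffices that at least one of the two extremal choices is unobstructed. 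Carrying out this case analysis cleanly---and verifying that each move preserves both properness (the matched cyclic order of extremes) and the full adjacency relation---is the main obstacle; granting it, induction on $\Phi(\M)$ shows that every PCA model is equivalent to an NPCA model, as claimed.
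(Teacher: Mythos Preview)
Your iterative shrinking plan is a different route from the paper's, and as you yourself flag, it is incomplete: the extremal claim that some covering pair always has an empty gap $(s(B),t(A))$ is not proved, and the symmetry $\M\leftrightarrow\overline\M$ you invoke for the ending-point case does not work as stated. Complementation sends a covering pair in $\M$ to a \emph{disjoint} pair in $\overline\M$ (if $A\cup B=C(\M)$ then $\overline A\subset B$, so $\overline A\cap\overline B=\emptyset$), so there is no covering pair in $\overline\M$ on which to perform the mirror move. The shrinking idea is essentially the older normalization procedure of Dur\'an et al.\ and of Lin--Szwarcfiter that the paper discusses in Section~\ref{sec:transformation:normalization}; it can be made rigorous, but the bookkeeping is precisely the ``somehow difficult analysis'' the paper sets out to avoid.

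The paper bypasses all of this with one structural observation, Lemma~\ref{lem:two covering arcs PCA}: in a PCA model, if two arcs $A_1,A_2$ together cover the circle then both are universal. (If some arc $A$ had both extremes outside $A_1$ then both would lie in $A_2\setminus A_1$, forcing $A\subset A_2$ or $A_1\subset A$ and contradicting properness; hence every arc has an extreme inside $A_1$.) Given this, the theorem is immediate: delete all universal arcs but one to obtain $U_1(\M)$, which is NPCA because its sole universal arc cannot cover $C(\M)$ together with any non-universal arc; then duplicate that universal arc to restore the deleted copies, preserving NPCA by Lemma~\ref{lem:no_mellizos modelo}. No iteration, no potential function, no extremal search---covering pairs are not dismantled one at a time but identified wholesale as living only among the universal arcs.
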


In total, ten different subclasses of circular-arc graphs are obtained by combining the normal, proper, unit, Helly, and interval properties of a circular-arc model.  These classes are enumerated in Table~\ref{tab:CA graphs}.  As before, we avoid the ugly set notation by defining appropriate acronyms for each subclass.  For historic reasons, proper interval graphs are also called unit interval graphs and the acronym UIG is also used for PIG graphs.

\begin{table}
 \begin{center}
 \begin{tabular}{|l|l|l|}
  \hline%
  Name of the class               & Acronyms  & Admitted models \\ 
  \hline%
  Circular-arc                    & CA        & CA  \\
  Normal circular-arc             & NCA       & NCA \\
  Helly circular-arc              & HCA       & HCA \\
  Normal Helly circular-arc       & NHCA      & NHCA \\
  Proper circular-arc             & PCA       & PCA and NPCA \\
  Proper Helly circular-arc       & PHCA      & PHCA and NPHCA \\
  Unit circular-arc               & UCA       & UCA and NUCA \\
  Unit Helly circular-arc         & UHCA      & UHCA and NUHCA \\
  Interval                        & IG        & interval \\
  Proper interval (unit interval) & PIG (UIG) & UIG and PIG \\
  \hline
 \end{tabular}
 \end{center}
\caption[Subclasses of circular-arc graphs]{Subclasses of circular-arc graphs, according to the class of models that they admit.}\label{tab:CA graphs}
\end{table}

In this paper we will show characterizations, by means of forbidden induced subgraphs, for the three subclasses of NHCA graphs.  For PHCA and UHCA graphs we are able to describe the family of forbidden induced subgraphs, while for NHCA graphs we only show those circular-arc graphs that are not NHCA.  The characterizations for the three NHCA subclasses yield $O(n+m)$ time algorithms for the related recognition problems.  Such characterizations and algorithms are known for almost all the classes of circular-arc graphs listed in Table~\ref{tab:CA graphs}.  For the classes of interval, PIG, UCA, and PCA graphs the families of forbidden induced subgraphs are known (see \citealp{LekkerkerkerBolandFM1962/1963} for interval graphs, \citealp{Roberts1969} for PIG graphs, and \citealp{TuckerDM1974} for UCA and PCA graphs), while, for HCA graphs, only the list of forbidden circular-arc graphs is known \citep{JoerisLinMcConnellSpinradSzwarcfiterA2011,LinSzwarcfiter2006}.  With respect to the recognition problem, there are $O(n+m)$ time recognition algorithms for the classes of CA, PCA, UCA, HCA, interval, and PIG graphs (see \eg, \citealp{KaplanNussbaum2006a,McConnellA2003} for CA, \citealp{DengHellHuangSJC1996,KaplanNussbaum2006} for PCA, \citealp{KaplanNussbaum2006,LinSzwarcfiterSJDM2008} for UCA, \citealp{JoerisLinMcConnellSpinradSzwarcfiterA2011,LinSzwarcfiter2006} for HCA, \citealp{BoothLuekerJCSS1976} for interval graphs, and~\citealp{CorneilKimNatarajanOlariuSpragueIPL1995,DengHellHuangSJC1996} for PIG), but there is no known polynomial time recognition algorithm for NCA graphs.  There is, however, an $O(n^5m^6\log m)$ time algorithm to recognize co-bipartite NCA graphs~\citep{MullerDAM1997,HellHuangJGT2004}.

Let $\epsilon$ be a small enough value for a circular-arc model $\M$.  If we replace an arc $A \in \A(\M)$ with the arc $(s(A) + \epsilon, t(A) + \epsilon)$, we obtain a model $\M'$ equivalent to $\M$.  This implies that any circular-arc graph admits an infinite number of equivalent models.  However, $\M$ and $\M'$ are essentially the same model, since all we are interested is in how do the arcs of $\M$ intersect.  Say that two circular-arc models $\M$ and $\M'$ have \emph{equal extremes}, or simple that $\M$ and $\M'$ are \emph{equal}, if their extremes appear in the same order.  With this definition, every circular-arc graph admits a finite number of non-equal models.  When we informally say that a graph $G$ \emph{admits $k$ models}, we mean that $G$ admits $k$ non-equal models.

We will use the same terminology used for graphs and vertices when talking about circular-arc models and arcs. For example, we say that an arc is \emph{universal} to mean that its corresponding vertex is universal in the intersection graph.  Similarly, we call a model \emph{connected} when its intersection graph is connected.  For $k \in \mathbb{N}_0$, define $U_k(\M)$ to be the model obtained from $\M$ by removing all but $k$ of its universal arcs, if existing.  Clearly, $\M = U_k(\M)$ if and only if $\M$ has at most $k$ universal arcs.

In a circular-arc model, all the arcs that cover some point $p$ of the circle form a complete set. If this complete set is also a clique $\Q$, then we call $p$ a \emph{clique point} and we say that $\Q$ is \emph{represented} by $p$.  Recall that, by definition, a circular-arc model is Helly precisely when every family of pairwise intersecting arcs share a common point.  In other words, an HCA model is a circular-arc model in which every clique is represented by a clique point.

A family of arcs $\A$ of a circular-arc model $\M$ is \emph{twin-consecutive} when both the set of beginning points and the set of ending points of $\A$ correspond to consecutive sequences.  Clearly, every family of twin-consecutive arcs is formed by twin arcs, but the converse is not necessarily true.  We say that $\M$ is \emph{twin-consecutive} if every maximal family of twin arcs is twin-consecutive.  The following lemma shows that every $X$CA graph admits a twin-consecutive $X$CA model.

\begin{lemma}\label{lem:no_mellizos modelo}
 Let $\M$ be an $X$CA model for $X \subset$ \{I, U, P, H, N\}, and $A$ be an arc of $\M$.  Then, the model $\M'$ that is obtained by duplicating the arc $A$ is an $X$CA model.
\end{lemma}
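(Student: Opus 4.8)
The whole statement hinges on one geometric reading of \emph{small enough}. Since $\epsilon$ is smaller than the length of every segment of $\M$, the new beginning point $s(A)+\epsilon$ lands in the segment immediately clockwise from $s(A)$, and the new ending point $t(A)+\epsilon$ lands in the segment immediately clockwise from $t(A)$; in particular the cyclic order of all the pre-existing extremes of $\M$ is untouched, so $\M'$ is a legitimate circular-arc model (no two extremes coincide, and $A'=(s(A)+\epsilon,t(A)+\epsilon)$, being a rotation of $A$, does not cover $C(\M)$). The plan is to first record two structural facts about $A'$ and then check each of the five properties separately. The first fact is a \emph{twin} property: for every arc $B\neq A$, the four points $s(A'),t(A'),s(B),t(B)$ occur in the same cyclic order as $s(A),t(A),s(B),t(B)$, because sliding $s(A)$ and $t(A)$ forward by $\epsilon$ crosses no extreme of $B$; consequently $A'$ is disjoint from, nested with, overlapping with, or circle-covering with $B$ exactly when $A$ is. The second fact concerns $A$ and $A'$ directly: they overlap in the nonempty arc $(s(A)+\epsilon,t(A))$, neither contains the other (since $A\setminus A'=(s(A),s(A)+\epsilon]$ and $A'\setminus A=[t(A),t(A)+\epsilon)$ are both nonempty), and their union $(s(A),t(A)+\epsilon)$ is a proper arc, so they do not cover $C(\M)$.

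With these facts the unit, proper, normal, and interval cases are routine, since the only relations that $\M'$ can create anew are those involving $A'$. If $\M$ is unit, then $A'$ has the same length as $A$, so $\M'$ stays unit. If $\M$ is proper, the only containments to rule out involve $A'$; for $B\neq A$ they coincide with the (already excluded) containments between $A$ and $B$, and $A,A'$ do not nest. If $\M$ is normal, the only covering pairs to rule out involve $A'$; for $B\neq A$ they coincide with the (already excluded) coverings of $A$ and $B$, and $A,A'$ do not cover $C(\M)$. If $\M$ is an interval model, its uncovered region is a nonempty union of segments of $\M$, each of length exceeding $\epsilon$; the only new points covered in $\M'$ are those of the sliver $[t(A),t(A)+\epsilon)$, which eats at most an $\epsilon$-prefix of one such segment, so an uncovered point survives and $\M'$ stays interval.

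The Helly property is the one genuine obstacle, since a duplicate could a priori annihilate a common point, and I would handle it by a transfer argument. Suppose for contradiction that $\M'$ carries a pairwise intersecting family $\mathcal{F}'$ with $\bigcap_{B\in\mathcal{F}'}B=\emptyset$. If $A'\notin\mathcal{F}'$ then $\mathcal{F}'\subseteq\A(\M)$ contradicts that $\M$ is Helly, so $A'\in\mathcal{F}'$. Put $\mathcal{G}=\{A\}\cup(\mathcal{F}'\setminus\{A,A'\})\subseteq\A(\M)$; by the twin fact, $A'\cap B\neq\emptyset$ forces $A\cap B\neq\emptyset$ for each $B\in\mathcal{F}'\setminus\{A,A'\}$, so $\mathcal{G}$ is pairwise intersecting, and since $\M$ is Helly it admits a common point $p\in A\cap\bigcap_{B\in\mathcal{F}'\setminus\{A,A'\}}B$. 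If $p\in A'$, then $p$ lies in every member of $\mathcal{F}'$ (in each such $B$, in $A'$, and in $A$ whenever $A\in\mathcal{F}'$), contradicting $\bigcap\mathcal{F}'=\emptyset$.

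The remaining case, $p\notin A'$, is the delicate step I expect to require the most care. Here $p\in A\setminus A'=(s(A),s(A)+\epsilon]$, which lies in a single segment $\sigma$ of $\M$ whose interior contains no extreme and whose length exceeds $\epsilon$. Any arc containing $p$ has both endpoints outside $\sigma$, hence contains all of $\sigma$; thus every $B\in\mathcal{F}'\setminus\{A,A'\}$ contains $\sigma$, and so does $A$. Choosing a point $p'$ just clockwise of $s(A)+\epsilon$ keeps $p'\in\sigma$, so $p'$ lies in each such $B$ and in $A$, while now $p'\in A'$ as well. Therefore $p'\in\bigcap\mathcal{F}'$, the final contradiction. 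The crux is exactly this sliding: verifying that a Helly witness of $\M$ that misses the duplicate is trapped in an extreme-free sliver, so it can be nudged into $A'$ without escaping any other arc of the family.
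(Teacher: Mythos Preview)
Your argument is correct. The paper states this lemma without proof, evidently regarding it as a routine verification, so there is no ``paper's own proof'' to compare against; you have simply supplied the details that the authors left implicit. The unit, proper, normal, and interval cases are immediate from your twin observation, exactly as you say. The Helly case is the only one with real content, and your sliding argument is sound: the crucial point---that any arc of $\M$ meeting the open segment $\sigma$ immediately clockwise of $s(A)$ must contain all of $\sigma$, since no extreme of $\M$ lies in its interior---is precisely what lets you nudge the Helly witness from $(s(A),s(A)+\epsilon]$ just past $s(A)+\epsilon$ into $A'$ without losing any other member of the family. One small check you left implicit but which does hold: the next extreme $e$ after $s(A)$ satisfies $e\le t(A)$ in the clockwise order (since $t(A)$ is itself an extreme), so any $p'\in\sigma$ with $p'>s(A)+\epsilon$ automatically satisfies $p'<t(A)<t(A)+\epsilon$ and hence lies in $A'$.
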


\begin{corollary}\label{cor:no_mellizos grafo}
 Let $v$ and $w$ be two twin vertices of a graph $G$. Then $G$ is an $X$CA graph if and only if $G \setminus \{v\}$ is an $X$CA graph, for every $X \subset$ \{I, U, P, H, N\}.  Furthermore, $\M$ is an $X$CA model of $G \setminus \{v\}$ where the arc $A$ corresponds to $w$ if and only if the model obtained by duplicating $A$ in $\M$ is an $X$CA model of $G$.
\end{corollary}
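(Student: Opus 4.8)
The plan is to prove the main equivalence by observing that $G$ and $G \setminus \{v\}$ differ by exactly one twin vertex, and that on the level of models this difference is precisely the duplication operation of Lemma~\ref{lem:no_mellizos modelo}. Since $v$ and $w$ are twins we have $N_G[v]=N_G[w]$, so in particular $v$ and $w$ are adjacent and $G$ is isomorphic to the graph obtained from $G \setminus \{v\}$ by adding a new vertex whose closed neighborhood is $N_{G\setminus\{v\}}[w]\cup\{v\}$, that is, a \emph{twin of $w$}. The whole argument then reduces to relating ``adding a twin of $w$'' in the graph with ``duplicating the arc corresponding to $w$'' in a model, which will also establish the ``furthermore'' part directly.

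For the implication that $G\setminus\{v\}$ being $X$CA forces $G$ to be $X$CA (the forward direction of the ``furthermore'' statement), I would start from an $X$CA model $\M$ of $G\setminus\{v\}$ and let $A$ be the arc corresponding to $w$. Lemma~\ref{lem:no_mellizos modelo} already guarantees that the model $\M'$ obtained by duplicating $A$ is again $X$CA, so it only remains to check that the intersection graph of $\M'$ is isomorphic to $G$. The key step, and the one deserving the most care, is the \emph{twin claim}: the duplicated arc $A'=(s(A)+\epsilon,t(A)+\epsilon)$ has exactly the same closed neighborhood as $A$ in $\M'$. To see this, recall that whether two arcs intersect is determined solely by the cyclic order of their four extremes; since $\epsilon$ is small enough, the new extremes $s(A)+\epsilon$ and $t(A)+\epsilon$ fall in the segments immediately clockwise of $s(A)$ and $t(A)$ and hence cross no other extreme of $\M$. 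Therefore $A'$ meets precisely the same arcs that $A$ meets, and moreover $A'$ and $A$ overlap on the nonempty arc $(s(A)+\epsilon,t(A))$, so $A$ and $A'$ are adjacent. Thus the intersection graph of $\M'$ is $G\setminus\{v\}$ with a twin of $w$ adjoined, namely $G$, and $\M'$ is an $X$CA model of $G$.

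For the converse, that $G$ being $X$CA forces $G\setminus\{v\}$ to be $X$CA (and the backward implication of the ``furthermore'' part), I would use heredity under arc deletion: each of the five properties I, U, P, H, N is preserved when an arc is removed from a model, since a subfamily of a Helly family is Helly, deleting arcs can create neither a covering pair nor a containment, and it keeps all arc lengths and any uncovered point intact. Hence deleting the arc corresponding to $v$ from an $X$CA model of $G$ yields an $X$CA model of the induced subgraph $G\setminus\{v\}$; and, in the model-level statement, deleting the duplicate $A'$ from $\M'$ recovers $\M$ as an $X$CA model of $G\setminus\{v\}$ in which $A$ still corresponds to $w$.

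I expect the only genuinely delicate point to be the twin claim of the second paragraph; everything else is bookkeeping about how the intersection graph changes when a single arc is added or removed. The care needed there is to phrase the ``small enough $\epsilon$'' argument entirely in terms of the cyclic order of extremes, so that it applies to the intersection graph alone and hence uniformly to all five model properties at once, rather than re-verifying each class $X$ separately.
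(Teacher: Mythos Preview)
Your proposal is correct and follows exactly the approach the paper intends: the corollary is stated in the paper without proof, as an immediate consequence of Lemma~\ref{lem:no_mellizos modelo}, and you have simply spelled out the two implicit ingredients---that duplication produces a twin arc (your ``twin claim'') and that each of the five defining properties is hereditary under arc deletion. Both arguments are sound, and your care in phrasing the twin claim via the cyclic order of extremes is the right way to make it uniform across all $X$.
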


\citet{TuckerDM1974} defined the class of $CI(n,k)$ graphs so as to characterize those PCA graphs that are not UCA.  For relative prime values $n$ and $k$ such that $n > 2k$, define $CI(n, k)$ as the circular-arc model that is built as follows (see also Figure~\ref{fig:CInk}).  Let $C$ be a circle of length $4n$.  Draw $n$ arcs $A_0, \ldots, A_{n-1}$ of length $4k + 1$ such that each $A_i$ begins at $4ki$ and ends at $4k(i+1) +1$.  Afterwards, draw $n$ arcs $B_0, \ldots, B_{n-1}$ of length $4k-1$ such that each $B_i$ begins at $4ki + 2k + 1$ and ends at $4k(i+1) + 2k$.  The intersection graph of the model $CI(n, k)$ is called the $CI(n, k)$ graph.  Note that every $CI(n,k)$ model is an NPCA model by definition.

\begin{theorem}[\citealp{TuckerDM1974}]\label{thm:CInk}
Let $G$ be a PCA graph.  Then $G$ is a UCA graph if and only if it contains no $CI(n,k)$ as an induced subgraph, with $n$ and $k$ relative primes and $n > 2k$.
\end{theorem}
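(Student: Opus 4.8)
The plan is to prove the two implications separately, since they are of very different difficulty. The forward (necessity) direction is essentially a closure property plus a direct calculation, whereas the converse (sufficiency) is where the real work lies.

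For necessity I would first record that the class of UCA graphs is hereditary: if $\M$ is a UCA model with all arcs of length $L$ on a circle of circumference $C$, then deleting any arc leaves the remaining arcs still of length $L$ on the same circle, so every induced subgraph of a UCA graph is UCA. It therefore suffices to check that no $CI(n,k)$ graph is itself UCA. Assume for contradiction that some $CI(n,k)$ graph admits a unit model with arc length $L$ on a circle of circumference $C$. In $CI(n,k)$ the adjacencies rigidly prescribe, up to rotation and relabeling, the cyclic arrangement of the arcs $A_0,\dots,A_{n-1}$ and $B_0,\dots,B_{n-1}$; in particular the beginning points of the $A_i$ are forced to appear evenly spaced and to advance by $k$ positions at each step, so reading off the gaps $s(A_{i+1})-s(A_i)$ and summing once around the circle yields a relation of the form $n\,\delta=k\,C$ with $0<\delta<L<C$. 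Because the $A_i$ and $B_i$ are constrained to intersect in an alternating fashion, equal arc lengths would additionally force a second, incompatible relation among $\delta$, $L$ and $C$; the coprimality of $n$ and $k$ together with $n>2k$ is exactly what makes these two relations contradictory, ruling out the unit model.

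For sufficiency I would start from a PCA model of $G$ and invoke Theorem~\ref{thm:PCA->normal} to assume it is an NPCA model $\M$; by Corollary~\ref{cor:no_mellizos grafo} I may further collapse twins to keep the configuration clean. I would then encode the existence of an \emph{equivalent} unit model as a linear feasibility problem: introduce a real variable $x_e$ for the position of each extreme $e$ of $\M$, together with a common length $L$ and circumference $C$, and impose (i) the cyclic order of the extremes of $\M$, (ii) the equal-length equations $x_{t(A)}-x_{s(A)}=L$ read around the circle with the appropriate winding, and (iii) the order constraints between consecutive beginning and ending points that, for a normal proper model, faithfully reproduce the intersection pattern. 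By construction $G$ is UCA if and only if this system of difference constraints is feasible.

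The heart of the argument, and the step I expect to be the main obstacle, is to show that infeasibility of this system forces an induced $CI(n,k)$. Feasibility of a difference-constraint system of this kind is governed by a negative-cycle / potential criterion (dually, a Gordan-type alternative), so the system fails exactly when there is a cyclic collection of the order constraints whose winding is inconsistent with the equal-length equations. The delicate work is to read such a minimal certificate back into $\M$ and verify that the arcs it involves induce precisely a $CI(n,k)$: the winding number of the certificate supplies the parameter $k$, the number of participating arcs supplies $n$, minimality of the certificate forces $\gcd(n,k)=1$, and the normality and properness of $\M$ force $n>2k$. Matching the algebraic obstruction to this concrete combinatorial object, rather than to a merely non-induced or more general configuration, is the crux; in particular, reducing a certificate with $\gcd(n,k)=d>1$ to a primitive one and confirming that the extracted arcs carry no extra chords (so that the subgraph is genuinely \emph{induced}) are the points that require care, and this is where the twin-reduction afforded by Corollary~\ref{cor:no_mellizos grafo} keeps the extracted configuration manageable. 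Since no such certificate can exist when $G$ is $CI(n,k)$-free, the system is feasible and $G$ is UCA, completing the proof.
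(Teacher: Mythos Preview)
The paper does not prove this theorem; it is quoted from \citet{TuckerDM1974} and used as a black box, so there is no ``paper's own proof'' to compare against. Your proposal is therefore an attempt to supply a proof the authors deliberately omitted.

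On the substance of your sketch: the sufficiency direction via a linear feasibility / difference-constraint encoding is a legitimate strategy and is, in spirit, the modern route taken in \citet{LinSzwarcfiterSJDM2008} through feasible circulations. You correctly identify the hard step --- reading a minimal infeasibility certificate back as an induced $CI(n,k)$ --- but what you have written is a plan, not an argument: the claims that the winding number gives $k$, that minimality forces $\gcd(n,k)=1$, and especially that the extracted arcs induce \emph{exactly} $CI(n,k)$ with no extra adjacencies, all require real work that is absent here.

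The necessity direction is the weaker part. Your assertion that in a putative unit model ``the beginning points of the $A_i$ are forced to appear evenly spaced'' is not justified and is not obviously true; unit models do not force equal gaps between start points. The standard argument instead compares, along the cycle, how many $s$-extremes each $A_i$ must overtake versus how many each $B_i$ overtakes, and shows that with equal lengths these counts cannot both be satisfied once around the circle. As written, your ``two incompatible relations'' are asserted rather than derived, so this half does not yet stand on its own.
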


\begin{figure}
 \centering
 \hspace*{\stretch{2}}\parbox[c]{3.9cm}{\includegraphics{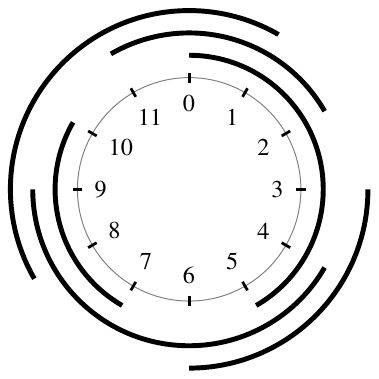}} \hspace*{\stretch{1}} \parbox[c]{2.9cm}{\includegraphics{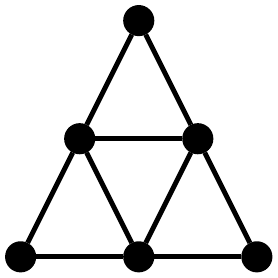}}\hspace*{\stretch{2}}
 \caption{The circular-arc model $CI(3,1)$ and its intersection graph.}\label{fig:CInk}
\end{figure}

As we already mentioned, \citet{TuckerDM1974} \citep[see also][]{Golumbic2004} proved that every PCA graph admits an NPCA model.  We now give a rather elementary proof of this fact which, we believe, is much simpler than the one in \citep{Golumbic2004}.  For this, we need to use the following simple lemma.

\begin{lemma}\label{lem:two covering arcs PCA}
 Let $\M$ be a PCA model of a graph $G$.  If $A_1, A_2$ are two arcs of $\M$ that together cover the circle, then both $A_1$ and $A_2$ are universal.
\end{lemma}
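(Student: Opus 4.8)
The plan is to exploit the proper (inclusion-free) structure of $\M$ together with the covering hypothesis, reducing everything to the observation that the complement of one arc is swallowed by the other. First I would fix the geometry. Since $A_1 \cup A_2 = C(\M)$ and, by the standing convention, no single arc covers the whole circle, every point of $C(\M)$ lying outside $A_1$ must lie in $A_2$; that is, $C(\M) \setminus A_1 \subseteq A_2$, and symmetrically $C(\M) \setminus A_2 \subseteq A_1$. Reading the four extremes $s(A_1), t(A_1), s(A_2), t(A_2)$ in clockwise order, one sees that $A_1$ and $A_2$ actually overlap along two disjoint arcs, so in particular $A_1 \cap A_2 \neq \emptyset$.

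Next, to show that $A_1$ is universal, I would take an arbitrary arc $B$ of $\M$ with $B \notin \{A_1, A_2\}$ and argue that $B \cap A_1 \neq \emptyset$. Suppose not. Then $B$ is disjoint from $A_1$, hence $B \subseteq C(\M) \setminus A_1$. Combining this with the containment $C(\M)\setminus A_1 \subseteq A_2$ established above yields $B \subseteq A_2$; since $B \neq A_2$, this says that one arc of $\M$ is properly contained in another, contradicting the assumption that $\M$ is a PCA model. Therefore $B$ meets $A_1$.

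The same argument with the roles of $A_1$ and $A_2$ interchanged (now using $C(\M)\setminus A_2 \subseteq A_1$) shows that every arc $B \notin \{A_1, A_2\}$ also meets $A_2$. Together with the observation that $A_1$ and $A_2$ meet each other, this proves that $A_1$ intersects every other arc of $\M$ and that $A_2$ intersects every other arc of $\M$; that is, both $A_1$ and $A_2$ are universal.

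The argument is short, so I do not expect a serious obstacle; the only delicate point is the bookkeeping with open versus closed arcs. Because the arcs are open, $C(\M)\setminus A_1$ is a \emph{closed} arc (it contains the endpoints $s(A_1)$ and $t(A_1)$), and I must be sure the covering hypothesis genuinely forces this closed arc inside the open arc $A_2$. This holds precisely because no two extremes of $\M$ coincide, so $A_2$ strictly overruns $C(\M)\setminus A_1$ on both sides; keeping this distinction straight is exactly what makes the containment $B \subseteq A_2$, and hence the contradiction with properness, valid.
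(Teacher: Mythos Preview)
Your proof is correct and follows essentially the same idea as the paper's: both arguments exploit that the complement of $A_1$ lies inside $A_2$, so an arc missing $A_1$ would be properly contained in $A_2$, contradicting properness. The only cosmetic difference is that the paper tracks the positions of the endpoints $s(A)$ and $t(A)$ individually (and therefore has to rule out the extra possibility $A \supset A_1$), whereas you argue directly with whole-arc containment after assuming $B \cap A_1 = \emptyset$; your formulation is slightly cleaner but not substantively different.
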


\begin{proof}
 The order in which the extremes of $A_1$ and $A_2$ appear in a traversal of $C(\M)$ is $s(A_1)$, $t(A_2)$, $s(A_2)$, $t(A_1)$ because $A_1$ and $A_2$ cover the circle.  If some arc $A$ has its ending point $t(A)$ outside $A_1$, then $t(A) \in A_2$.  If in addition $s(A) \in A_2 \setminus A_1$, then either $A \subset A_2$ or $A \supset A_1$, which contradicts the fact that $\M$ is a PCA model.  Consequently, every arc has either its ending or its beginning point inside $A_1$, \ie, $A_1$ is universal.  The proof for $A_2$ is analogous. 
\end{proof}

Tucker's theorem is now a simple corollary.

\begin{theorem}[\citealp{Golumbic2004,TuckerDM1974}]\label{thm:PCA->normal}
 Every PCA model is equivalent to an NPCA model.
\end{theorem}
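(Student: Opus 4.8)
The plan is to exploit Lemma~\ref{lem:two covering arcs PCA} together with the observation that the universal arcs of a PCA model are all mutually twin. First I would establish the base case: if $\M$ has at most one universal arc, then $\M$ is already normal, hence NPCA. Indeed, should two arcs $A_1, A_2$ of $\M$ cover the circle, Lemma~\ref{lem:two covering arcs PCA} forces both to be universal; but two universal arcs cannot coexist when there is at most one. Thus no pair of arcs covers $C(\M)$, that is, $\M$ is normal, and since it is proper by hypothesis it is NPCA (and trivially equivalent to itself).

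Next I would reduce the general case to this one. Let $U_1, \dots, U_k$ be the universal arcs of $\M$. Any two universal vertices have closed neighborhood equal to $V(G)$, so $U_1, \dots, U_k$ are pairwise twins. Removing $U_2, \dots, U_k$ yields a PCA model $\M^*$, and here is the point requiring care: I must check that this removal leaves $U_1$ as the \emph{only} universal arc. That $U_1$ stays universal is clear, since deleting arcs cannot destroy intersections it already had. That no new universal arc appears is the key bookkeeping step: if $A$ is non-universal in $\M$, it misses some arc $B$, and $B$ must itself be non-universal (a universal arc meets everything); since we delete only universal arcs, $B$ survives in $\M^*$ and continues to witness that $A$ is non-universal.

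With $\M^*$ having exactly one universal arc, the base case shows $\M^*$ is NPCA. It then remains to reintroduce $U_2, \dots, U_k$. I would do this by duplicating $U_1$ a total of $k-1$ times: by Lemma~\ref{lem:no_mellizos modelo} each duplication preserves the proper and normal properties, and by Corollary~\ref{cor:no_mellizos grafo} the resulting model represents the same graph as $\M$. Hence $\M$ is equivalent to an NPCA model.

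The main obstacle is conceptual rather than computational: recognizing that non-normality in a proper model is entirely caused by the presence of several universal arcs, and that these are interchangeable twins. Once this is seen, the only genuine verification is the claim that deleting universal arcs creates no new ones; the remaining steps follow immediately from Lemmas~\ref{lem:two covering arcs PCA} and~\ref{lem:no_mellizos modelo} and Corollary~\ref{cor:no_mellizos grafo}.
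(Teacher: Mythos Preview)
Your proposal is correct and follows essentially the same approach as the paper: pass to $U_1(\M)$, observe via Lemma~\ref{lem:two covering arcs PCA} that a PCA model with at most one universal arc is normal, and then reinsert the missing universal arcs by duplication using Lemma~\ref{lem:no_mellizos modelo}. You are in fact slightly more careful than the paper in explicitly verifying that removing universal arcs creates no new ones, a point the paper leaves implicit.
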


\begin{proof}
 Let $\M$ be a PCA model with $k$ universal arcs.  By Lemma~\ref{lem:two covering arcs PCA}, $U_1(\M)$ is an NPCA model because it has only one universal arc.  If $k > 1$, then duplicate $k - 1$ times the universal arc of $U_1(\M)$ so as to include all the universal arcs that were possibly removed from $\M$ in the process of building $U_1(\M)$.  The resulting model is NPCA by Lemma~\ref{lem:no_mellizos modelo}.
\end{proof}

The strength of Lemma~\ref{lem:two covering arcs PCA} lies not only in its simplicity and the fact that it implies Theorem~\ref{thm:PCA->normal}.  It also helps us to transform a PCA model of a graph $G$ into an NPCA model of $G$, in $O(n)$ time (see Section~\ref{sec:algorithms}).  And, it can be used to find all the universal arcs of $\M$ as follows.

\begin{lemma}\label{lem:universal-arcs}
 An arc $A$ of a PCA model $\M$ is universal if and only if $A$ contains at least $n-1$ extremes of $\M$.
\end{lemma}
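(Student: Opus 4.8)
The plan is to prove both implications by a single counting argument. Since each arc has exactly two extremes and $\M$ is open, the two extremes $s(A),t(A)$ of $A$ are \emph{not} contained in $A$, so the extremes that $A$ can possibly contain are exactly the $2n-2$ extremes belonging to the other $n-1$ arcs. Writing $p$ for the number of these extremes that lie inside the open arc $A$, the statement I want is precisely the equivalence ``$A$ is universal $\iff p \ge n-1$''. The engine of the whole proof is the following elementary observation about proper models: if an arc $B$ has \emph{no} extreme inside the open arc $A$, then $A\cap B$ is either empty or all of $A$ (the latter meaning $A\subseteq B$), and since $\M$ is proper the containment $A\subseteq B$ is impossible. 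Hence in a PCA model, an arc disjoint from $A$ contributes $0$ extremes to $A$, and every arc intersecting $A$ contributes \emph{at least one}.

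For the forward direction, I would argue directly. If $A$ is universal, then all $n-1$ remaining arcs intersect $A$, and by the observation above each of them places at least one of its extremes inside $A$. Summing over these arcs gives $p\ge n-1$, as required. No finer analysis of the intersection pattern is needed here, because contributing two extremes (as happens in the covering case) only helps the inequality.

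For the converse I would use the contrapositive: assume $A$ is \emph{not} universal and show $p\le n-2$. Since $A$ is not universal there is at least one arc disjoint from $A$, which contributes $0$ extremes. The delicate point—and the main obstacle—is that an intersecting arc $B$ can in principle contribute \emph{two} extremes inside $A$, namely exactly when $A$ and $B$ together cover the circle (the order $s(A),t(B),s(B),t(A)$ from the proof of Lemma~\ref{lem:two covering arcs PCA} puts both extremes of $B$ inside $A$); if such arcs were allowed, $p$ could exceed $n-1$ even with a disjoint arc present. This is exactly where Lemma~\ref{lem:two covering arcs PCA} is indispensable: it states that if $A$ and some arc $B$ cover the circle then $A$ is universal. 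Contrapositively, because $A$ is not universal, \emph{no} arc covers the circle together with $A$, so every arc intersecting $A$ overlaps it in a single arc and therefore contributes exactly one extreme inside $A$. With at least one disjoint arc contributing $0$ and the remaining at most $n-2$ intersecting arcs contributing exactly $1$ each, we obtain $p\le n-2<n-1$, completing the contrapositive. The only nonroutine step is ruling out the covering configuration, which Lemma~\ref{lem:two covering arcs PCA} handles cleanly; everything else is bookkeeping on the $2n-2$ extremes.
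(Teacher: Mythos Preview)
Your proof is correct and uses essentially the same ingredients as the paper's: the PCA property to conclude that an arc with no extreme inside $A$ must be disjoint from $A$, and Lemma~\ref{lem:two covering arcs PCA} to handle the case where an arc has both extremes inside $A$. The paper organizes the two directions as ``$k<n-1\Rightarrow A$ not universal'' (pigeonhole gives an arc with both extremes outside $A$, hence disjoint) and ``$k\ge n-1\Rightarrow A$ universal'' (either every arc has an extreme in $A$, or some arc has both and Lemma~\ref{lem:two covering arcs PCA} applies), whereas you phrase the same dichotomy as a per-arc contribution count; the content is identical.
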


\begin{proof}
 Let $k$ be the number of extremes that appear inside $A$.  If $k < n-1$ then there is at least one arc $A'$ that has both of its extremes outside $A$.  Since $A$ is not contained in any other arc, then it follows that $A \cap A' = \emptyset$, \ie, $A$ is not universal.  If $k \geq n-1$ then either all the arcs have at least one extreme inside $A$ or there is one arc whose both extremes lie inside $A$.  In the first case $A$ is universal by definition, while in the second case $A$ is universal by Lemma~\ref{lem:two covering arcs PCA}.
\end{proof}

\section{Why study the subclasses of NHCA graphs?}
\label{sec:motivation}

One of the most important algorithmic properties about interval models is that every family of pairwise intersecting intervals share a common point.  The first linear-time recognition algorithm for interval graphs \citep{BoothLuekerJCSS1976} is based on this fact, as well as are some of the newer recognition algorithms \citep[\eg,][]{HabibMcConnellPaulViennotTCS2000}.  When interval models are generalized to circular-arc models, the Helly property is completely lost.  The class of Helly circular-arc graphs lies between the CA and IG classes and, for this reason, HCA graphs preserve a lot of nice properties of interval graphs that are lost even for UCA graphs.  Usually, these properties involve the cliques of the graphs.  Just to give one of the many examples, consider the two theorems below.

\begin{theorem}[\citealp{GilmoreHoffmanCJM1964,FulkersonGrossPJM1965}]\label{thm:IG cliques consecutivas}
Let $G$ be a graph.  Then $G$ is an interval graph if and only if there is an ordering of the cliques of $G$ such that, for every vertex $v$, the cliques containing $v$ appear form a linear range of the ordering.
\end{theorem}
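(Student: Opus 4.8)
The plan is to move between an interval representation of $G$ and the claimed ordering of its cliques, exploiting the property recalled earlier that any family of pairwise intersecting intervals shares a common point. Recall that in this paper a \emph{clique} is a \emph{maximal} set of pairwise adjacent vertices, so the two directions amount to converting the left-to-right structure of clique points on the line into the consecutiveness of the ordering, and back.

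For the forward implication, fix an interval model and write $I_v$ for the interval of $v$. Given a clique $Q$, the intervals of its vertices pairwise intersect, so they share a common point $p_Q$; order the cliques $Q_1,\dots,Q_k$ by the position of $p_{Q_i}$ on the line, breaking ties arbitrarily. The crux is the equivalence $v \in Q \iff p_Q \in I_v$: the forward direction holds because $p_Q$ lies in the interval of every member of $Q$, and the backward direction holds because $p_Q \in I_v$ makes $I_v$ meet every interval of $Q$, so $v$ is adjacent to all of $Q$ and maximality of $Q$ forces $v \in Q$. Hence the cliques containing a fixed $v$ are exactly those whose representative point lies in the single interval $I_v$; since these points are linearly ordered and $I_v$ is a contiguous segment, those cliques occupy a consecutive range. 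Ties are harmless, since cliques with equal representative points are simultaneously inside or outside $I_v$.

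For the converse, suppose an ordering $Q_1,\dots,Q_k$ of the cliques is given so that, for each $v$, the set $\{\, i : v \in Q_i \,\}$ is a contiguous block $\{\ell_v,\dots,r_v\}$; this block is nonempty because every vertex lies in at least one clique. Map $v$ to the open interval $I_v = (\ell_v - \tfrac12,\, r_v + \tfrac12)$. Two such intervals intersect exactly when the integer blocks $[\ell_u,r_u]$ and $[\ell_v,r_v]$ share a common index $i$. If they share $i$ then $u,v \in Q_i$, so $u$ and $v$ are adjacent; conversely, if $uv$ is an edge then the complete set $\{u,v\}$ extends to some clique $Q_i$, whose index lies in both blocks. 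Thus adjacency coincides with interval intersection, and $G$ is an interval graph.

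The two directions are brief, and the single point that requires care is the equivalence $v \in Q \iff p_Q \in I_v$ in the forward direction, where both the Helly property of intervals and the maximality of $Q$ are indispensable; once it is established, the contiguity of $I_v$ turns ``the cliques through $v$'' into ``the representative points inside one interval,'' and consecutiveness is immediate. In the converse the only fact worth isolating is that every edge lies in a clique, which is precisely what matches overlapping blocks to adjacencies.
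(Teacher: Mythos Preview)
The paper does not supply its own proof of this statement; it is quoted as a classical result of Gilmore--Hoffman and Fulkerson--Gross and used only as background in the motivation section. So there is nothing to compare against, and the relevant question is simply whether your argument stands on its own.

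It does. Both directions are the standard ones and are carried out correctly. One small observation on the forward direction: the case of ties that you handle cannot in fact arise. If two distinct maximal cliques $Q$ and $Q'$ had representative points $p_Q = p_{Q'} = p$, then every interval of $Q \cup Q'$ would contain $p$, making $Q \cup Q'$ a complete set and contradicting the maximality of $Q$ and $Q'$. So distinct cliques always receive distinct representative points, and the tie-breaking clause is harmless but superfluous. Everything else---the equivalence $v \in Q \iff p_Q \in I_v$ via Helly and maximality, and the converse construction via the index blocks $[\ell_v, r_v]$---is clean and complete.
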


\begin{theorem}[\citealp{GavrilN1974}]\label{thm:caracterizacion HCA por orden de vertices}
Let $G$ be a graph.  Then $G$ is an HCA graph if and only if there is an ordering of the cliques of $G$ such that, for every vertex $v$, the cliques containing $v$ form a range of the ordering.
\end{theorem}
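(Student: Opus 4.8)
The plan is to prove both implications by translating between a model and its clique points, in close analogy with the interval-graph argument behind Theorem~\ref{thm:IG cliques consecutivas}, the only difference being that the linear order of clique points is now read as a circular one.

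For the forward direction, suppose $G$ admits an HCA model $\M$. Recall that, since $\M$ is Helly, every clique $\Q$ of $G$ is represented by a clique point $p_\Q$, meaning that the arcs through $p_\Q$ are exactly the arcs of $\Q$. First I would record two consequences. For a vertex $v$ with arc $A_v$ we have $v \in \Q \iff p_\Q \in A_v$, because the arcs through $p_\Q$ are precisely those of $\Q$. And distinct cliques get distinct clique points, since a point representing both $\Q$ and $\Q'$ would force $\Q = \Q'$. Ordering the cliques $\Q_0, \dots, \Q_{r-1}$ by the clockwise order of their clique points on $C(\M)$ then produces the required ordering: the cliques containing $v$ are exactly those whose clique point lies in the open arc $A_v$, and since $A_v$ is a connected arc these clique points are consecutive, i.e.\ they form a range.

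For the converse, suppose a circular ordering $\Q_0, \dots, \Q_{r-1}$ of the cliques exists with the stated property. I would construct a model explicitly: place points $p_0, \dots, p_{r-1}$ on a circle in this cyclic order, and for each vertex $v$ let $A_v$ be an arc containing precisely the points $p_i$ with $v \in \Q_i$, with both endpoints falling strictly inside gaps between consecutive clique points. Because the cliques containing $v$ form a range, these points are consecutive and such an arc exists; a universal vertex simply receives an arc containing every $p_i$ while still leaving one gap partly uncovered, so that no single arc covers the circle. Two things must be checked. First, that this represents $G$: vertices $u,v$ are adjacent iff some clique contains both, iff their ranges share some $p_i$, so if $\epsilon$ is taken below half the minimum gap width the arcs $A_u, A_v$ meet exactly when their ranges share a clique point, and adjacency is captured exactly. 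Second, that the model is Helly: any family of pairwise intersecting arcs corresponds to a complete set of vertices, which lies in some clique $\Q_k$; each vertex of the family then belongs to $\Q_k$, so every arc of the family contains $p_k$, furnishing the common point.

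The only genuinely delicate point is the $\epsilon$-bookkeeping in the converse: I must ensure that two arcs extending into the same gap from opposite sides do not meet accidentally, and that the full-range (universal) case is realized without covering the circle. Both are settled by fixing $\epsilon$ below half the minimum gap width and placing every arc endpoint strictly inside a gap; with this choice the adjacency equivalence and the Helly verification are routine, and it is precisely the \emph{circular}-range hypothesis that guarantees each $A_v$ is a single arc rather than a union of arcs.
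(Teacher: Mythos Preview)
The paper does not supply its own proof of this theorem: it is quoted as a background result of \citet{GavrilN1974} in Section~\ref{sec:motivation}, with no argument given. So there is nothing in the paper to compare against.

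Your proof is correct and is essentially the standard argument. The forward direction is clean: choosing one clique point per clique and ordering them cyclically works because any clique point for a clique not containing $v$ must lie outside $A_v$, so the chosen points inside $A_v$ are exactly those of the cliques containing $v$, and connectedness of $A_v$ gives the range. In the converse, your $\epsilon$-bookkeeping is the right concern and your resolution is sound; the Helly verification via ``pairwise intersecting arcs $\Rightarrow$ complete set $\Rightarrow$ contained in some $\Q_k$ $\Rightarrow$ all arcs pass through $p_k$'' is exactly the intended mechanism. One small point worth making explicit is that your adjacency argument in the converse uses that two arcs sharing \emph{no} clique point can only meet inside a gap, and your bound $\epsilon <$ half the minimum gap width rules that out; you state this but it deserves one sentence of justification rather than being folded into ``routine''.
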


It is not hard to find a counterexample to Theorem~\ref{thm:caracterizacion HCA por orden de vertices} when HCA graphs are replaced with circular-arc graphs.  Of course, there are also some properties that hold for interval graphs but do not hold for HCA graphs, because in an HCA model there could be two arcs covering the circle.  In a similar way, we can expect the jump from UIG graphs to UCA graphs to be as big as it is the jump from interval graphs to circular-arc graphs, as well as the jump from UIG graphs to PCA graphs.

In this section we motivate the study of the NHCA subclasses by showing some of the properties that are lost because of these jumps.  That is, we show several properties that hold in the NHCA subclasses and are not easy to generalize to the more general classes of circular-arc graphs.  These properties are generalizations or slight modifications of properties that hold for the interval graph subclasses, in the same sense as Theorems \ref{thm:IG cliques consecutivas}~and~\ref{thm:caracterizacion HCA por orden de vertices}.

The first property that we describe is what we call the ``local interval'' property.  Clearly, in an interval model $\I$, the submodel of $\I$ induced by the closed neighborhood of an interval $I$ is an interval model, for every $I \in \I$.  This property is not true for general HCA models and for general NCA models.  Even more, there are some graphs, such as the $4$-wheel graph (see Figure~\ref{fig:forbidden-NHCA-graphs}), for which no circular-arc model satisfies the local interval property.  On the other hand, if we restrict our attention to an NHCA model $\M$, we can see that the submodel of $\M$ induced by the closed neighborhood of an arc $A$ is an interval model, for every $A \in \A(\M)$.  Hence, NHCA graphs will retain many properties of interval graphs that deal with the ``intervality'' of the neighborhood of a vertex.  The local intervality property of NHCA models follows from the next theorem by Lin and Szwarcfiter.

\begin{theorem}[\citealp{JoerisLinMcConnellSpinradSzwarcfiterA2011,LinSzwarcfiter2006}] \label{thm:modelo-hca}
 A circular-arc model $\M$ is HCA if and only if
 \begin{description}
  \item{(i)} if three arcs of $\M$ cover $C(\M)$, then two of them also cover it, and
  \item{(ii)} the intersection graph of $\overline{\M}$ contains no induced hole.
 \end{description}
\end{theorem}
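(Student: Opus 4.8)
I would prove both implications through the \emph{complement correspondence}, together with one ``cutting'' observation. For arcs $A,B$ we have $\overline A\cap\overline B=C(\M)\setminus(A\cup B)$, so $\overline A$ and $\overline B$ are disjoint exactly when $A$ and $B$ cover the circle, and more generally $\bigcap_i A_i=\emptyset$ if and only if $\bigcup_i\overline{A_i}=C(\M)$. The cutting observation is that a family of pairwise intersecting arcs that does \emph{not} cover $C(\M)$ has a common point: cutting the circle at an uncovered point turns the arcs into intervals, and pairwise intersecting intervals share a point. Two consequences I would use repeatedly are that every family witnessing non-Hellyness must cover the circle, and that interval graphs contain no induced hole.

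\textbf{Necessity of (i) and (ii).} Assume $\M$ is Helly. For (i), suppose three arcs cover $C(\M)$ but no two of them do. I would first show they pairwise intersect: if two were disjoint, the third arc would have to bridge both gaps between them, which forces it to cover the circle together with one of the other two. Being pairwise intersecting, by the Helly property they share a point $p$; but then the two arcs reaching farthest clockwise and counterclockwise from $p$ already cover $C(\M)$, a contradiction. For (ii), take an induced hole $\overline{A_{1}},\dots,\overline{A_{k}}$ ($k\ge 4$) in the intersection graph of $\overline{\M}$. Since interval graphs have no induced hole, these $k$ arcs must cover $C(\M)$; moreover no two of them cover $C(\M)$ (two non-adjacent ones are disjoint, while if two adjacent ones covered, a neighbouring arc would be forced inside one of them, contradicting chordlessness). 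Translating back through the complement correspondence, $A_{1},\dots,A_{k}$ pairwise intersect and have empty common intersection, contradicting that $\M$ is Helly.

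\textbf{Sufficiency.} For the converse I argue the contrapositive. If $\M$ is not Helly, choose an inclusion-minimal family $\mathcal{F}=\{B_0,\dots,B_{m-1}\}$ of pairwise intersecting arcs with $\bigcap_j B_j=\emptyset$; then $m\ge 3$. Minimality yields, for each $j$, a point $p_j$ common to all arcs \emph{except} $B_j$, so $p_j\in\overline{B_j}$ while $p_j\notin\overline{B_i}$ for $i\ne j$. Relabelling so that $p_0,\dots,p_{m-1}$ occur clockwise, each $\overline{B_j}$ contains $p_j$ and misses every other $p_i$; since it misses $p_{j-1}$ and $p_{j+1}$ it cannot escape the arc they bound, so $\overline{B_j}\subseteq(p_{j-1},p_{j+1})$. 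In particular any two \emph{non-consecutive} complements lie in disjoint arcs and hence do not meet, i.e.\ the intersection graph of $\{\overline{B_0},\dots,\overline{B_{m-1}}\}$ is a subgraph of the cycle $C_m$.

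\textbf{The crux.} The decisive step is to prove that \emph{no two consecutive arcs $B_j,B_{j+1}$ cover the circle.} Suppose they did. Then $\overline{B_j}$ and $\overline{B_{j+1}}$ are disjoint subarcs of $(p_{j-1},p_{j+1})$ and $(p_j,p_{j+2})$, so the ``inner'' component of $B_j\cap B_{j+1}$ is a nonempty subarc of $(p_j,p_{j+1})$. But every remaining arc $B_k$ contains both $p_j$ and $p_{j+1}$ while its complement sits inside $(p_{k-1},p_{k+1})$, which is disjoint from $(p_j,p_{j+1})$; hence $B_k\supseteq(p_j,p_{j+1})$. Thus this inner component lies in every arc of $\mathcal{F}$, contradicting $\bigcap_j B_j=\emptyset$. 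The proof then finishes by a case split: if $m=3$ every pair is consecutive, so the three arcs cover the circle yet no two do, violating (i); if $m\ge 4$ every consecutive pair of complements meets while non-consecutive ones do not, so $\{\overline{B_j}\}$ induces exactly $C_m$, an induced hole, violating (ii). I expect this consecutive-covering lemma to be the delicate point, as it is precisely what excludes the ``mixed'' configurations in which the complement graph is a proper subgraph of $C_m$ and therefore has no hole.
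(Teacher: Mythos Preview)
Your argument is correct. Note, however, that the present paper does not prove Theorem~\ref{thm:modelo-hca}: it is quoted from \citet{JoerisLinMcConnellSpinradSzwarcfiterA2011,LinSzwarcfiter2006} and used as a black box to derive Corollary~\ref{cor:pca_no_hca}. So there is no ``paper's own proof'' to compare against here.

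That said, your self-contained proof is sound. The complement correspondence $\overline{A}\cap\overline{B}=C(\M)\setminus(A\cup B)$ is exactly the right lens, and the two directions are handled cleanly. For necessity, the step ``if two adjacent complement arcs covered, a neighbouring arc would be forced inside one of them, contradicting chordlessness'' is slightly compressed but correct: once $\overline{A_{i+2}}\subseteq\overline{A_{i+1}}$, the arc $\overline{A_{i+3}}$ (which meets $\overline{A_{i+2}}$) must meet $\overline{A_{i+1}}$, giving a chord since $k\ge 4$. For sufficiency, the minimal-witness setup and the placement $\overline{B_j}\subseteq(p_{j-1},p_{j+1})$ are standard and well justified. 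Your ``crux'' lemma---that no two consecutive $B_j,B_{j+1}$ cover the circle---is indeed the heart of the matter, and your argument for it is correct: the inner component of $B_j\cap B_{j+1}$ sits in $(p_j,p_{j+1})$, while every other $B_k$ contains that entire segment because $\overline{B_k}\subseteq(p_{k-1},p_{k+1})$ is disjoint from $(p_j,p_{j+1})$ for $k\notin\{j,j+1\}$. The final dichotomy ($m=3$ violates (i), $m\ge 4$ violates (ii)) then falls out immediately. One small point worth making explicit in the $m=3$ case: the three arcs do cover $C(\M)$, since otherwise cutting at an uncovered point would make them pairwise intersecting intervals with a common point; you use this implicitly.
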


\begin{corollary} \label{cor:pca_no_hca}
 If an NCA model $\M$ is not HCA, then three arcs of $\M$ cover $C(\M)$.
\end{corollary}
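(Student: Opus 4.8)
The plan is to invoke Theorem~\ref{thm:modelo-hca} in its contrapositive form. Since $\M$ is not HCA, at least one of conditions~(i) and~(ii) of that theorem must fail. My strategy is to show that normality forces condition~(ii) to hold automatically, so that the failure must occur in condition~(i); and the failure of~(i) is exactly the assertion that three arcs of $\M$ cover $C(\M)$.

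The key step is to translate the normality hypothesis into a statement about the complement model $\overline{\M}$. First I would observe that two arcs $A_1, A_2$ cover $C(\M)$ precisely when their complements are disjoint, \ie\ $\overline{A_1} \cap \overline{A_2} = \emptyset$: indeed, a point lies outside both $A_1$ and $A_2$ exactly when it lies in $\overline{A_1} = (t(A_1), s(A_1))$ and in $\overline{A_2} = (t(A_2), s(A_2))$. Consequently, saying that $\M$ is normal --- that no two of its arcs cover the circle --- is equivalent to saying that every pair of arcs of $\overline{\M}$ has nonempty intersection. In other words, the intersection graph of $\overline{\M}$ is a complete graph.

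From here the argument closes quickly. A complete graph contains no induced hole, since every induced subgraph on four or more vertices is itself complete and hence has all of its chords; thus condition~(ii) of Theorem~\ref{thm:modelo-hca} holds for $\M$. Because $\M$ is not HCA, the theorem then forces condition~(i) to fail, which means there exist three arcs of $\M$ that cover $C(\M)$ while no two of them do. In particular three arcs cover $C(\M)$, as required. The only point demanding care is the boundary bookkeeping in the equivalence $A_1 \cup A_2 = C(\M) \iff \overline{A_1} \cap \overline{A_2} = \emptyset$, where one must appeal to the standing conventions that no two extremes of $\M$ coincide and that no single arc covers the circle; with those conventions in force the finitely many extreme points are irrelevant and the equivalence is immediate. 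I do not expect any genuine obstacle here, as the entire content of the corollary is packaged inside Theorem~\ref{thm:modelo-hca}.
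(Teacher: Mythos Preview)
Your proof is correct and follows essentially the same route as the paper's: both rely on Theorem~\ref{thm:modelo-hca} together with the observation that two arcs cover $C(\M)$ if and only if their complements are disjoint. The only cosmetic difference is that you argue directly (normality makes the intersection graph of $\overline{\M}$ complete, hence hole-free, so condition~(ii) holds and condition~(i) must fail), whereas the paper phrases the same step as a contradiction (assume no three arcs cover, so condition~(i) holds, hence condition~(ii) fails, and a hole in the complement yields two non-intersecting complement arcs, contradicting normality).
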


\begin{proof}
Suppose that no three arcs of $\M$ cover $C(\M)$ and yet $\M$ is not HCA.  Then, by Theorem \ref{thm:modelo-hca}, the intersection graph of $\overline{\M}$ has an induced hole $v_1, \ldots, v_k$, for some $k \geq 4$.  Thus, the arcs $\overline{A_1}$, $\overline{A_3}$ $\in \A(\overline{\M})$ corresponding to vertices $v_1$ and $v_3$ do not intersect and, therefore, $A_1$ and $A_3$ are arcs of $\A(\M)$ that cover $C(\M)$, \ie, $\M$ is not NCA.
\end{proof}

As a corollary we obtain Theorem~\ref{thm:NHCA->no2y3cubrenCirculo} below.  For the sake of simplicity, we will take this property as an alternative definition of NHCA, PHCA, and UHCA graphs.  Hence, sometimes we omit the reference to this theorem.

\begin{theorem}\label{thm:NHCA->no2y3cubrenCirculo}
 A graph is an $X$HCA graph if and only if it admits an $X$CA model in which there are no two nor three arcs covering the circle, for $X \in \{N, P, U\}$.
\end{theorem}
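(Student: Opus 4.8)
The plan is to prove both implications by translating the combinatorial condition ``no two nor three arcs cover the circle'' into the pair of properties \emph{normal} and \emph{Helly}, and then to notice that in a proper (hence also unit) model the only possible source of a covering pair of arcs is a pair of universal arcs. Two facts drive the whole argument. First, a model has no two arcs covering the circle exactly when it is normal, and by the contrapositive of Corollary~\ref{cor:pca_no_hca} a normal model with no three arcs covering the circle is automatically Helly. Second, by part~(i) of Theorem~\ref{thm:modelo-hca} every Helly model in which three arcs cover the circle already has two arcs covering it, so a model that is simultaneously normal and Helly can contain neither two nor three arcs covering the circle. These two observations reduce the theorem to producing, from an arbitrary $\{X,H\}$CA model, an $X$CA model that is also normal.

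\textbf{Right-to-left implication.} Suppose $G$ admits an $X$CA model $\M$ with no two nor three arcs covering the circle. Having no two covering arcs means that $\M$ is normal, \ie\ an NCA model; since it also has no three covering arcs, the contrapositive of Corollary~\ref{cor:pca_no_hca} shows that $\M$ is HCA. Hence $\M$ is at once $X$CA and HCA, so $G$ is an $X$HCA graph. This direction requires no construction.

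\textbf{Left-to-right implication.} Let $\M$ be an $\{X,H\}$CA model of $G$. If $X=N$ then $\M$ is already normal, so no two arcs cover the circle, and by Theorem~\ref{thm:modelo-hca}(i) together with normality no three arcs cover it either; thus $\M$ is the desired model. If $X\in\{P,U\}$, then $\M$ is proper, and I would first pass to $\M_1=U_1(\M)$, which retains at most one universal arc (none if $G$ has no universal vertex). As a subfamily of $\M$, the model $\M_1$ is still $X$CA and, since any pairwise-intersecting subfamily of $\M_1$ is one of $\M$, still Helly. Because $\M_1$ has at most one universal arc, Lemma~\ref{lem:two covering arcs PCA} forbids any two of its arcs from covering the circle, so $\M_1$ is normal; as before, normality and Theorem~\ref{thm:modelo-hca}(i) then exclude three covering arcs. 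Thus $\M_1$ is an $X$CA model with no two nor three arcs covering the circle, representing $G$ with all but one of its universal vertices removed. To recover $G$ I would reinstate the missing universal vertices---which are mutually twins---by repeatedly duplicating the surviving universal arc; by Lemma~\ref{lem:no_mellizos modelo} each duplication preserves the normal, Helly and proper (or unit) properties, so the final model is normal, Helly and $X$CA, and therefore has no two nor three arcs covering the circle.

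\textbf{Main obstacle.} The delicate point is the last step when $X=U$: I must ensure that both the reduction to a single universal arc and the subsequent duplications keep the model inside the unit class, not merely the proper class. This is precisely what Lemma~\ref{lem:no_mellizos modelo} supplies, since duplication copies an arc with a small $\epsilon$-shift and hence preserves arc lengths. The remaining bookkeeping---verifying that each duplicated arc is again universal and that their count matches the number of universal vertices of $G$---is routine.
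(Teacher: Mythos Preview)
Your proof is correct and follows the same route the paper has in mind: the paper states the theorem as an immediate corollary of Corollary~\ref{cor:pca_no_hca} (together with Theorem~\ref{thm:modelo-hca}(i)) without writing out the details, and your argument---direct for $X=N$, and via the $U_1(\M)$ normalization plus Lemma~\ref{lem:no_mellizos modelo} duplication for $X\in\{P,U\}$---is exactly how one unpacks that corollary. The normalization step you spell out is essentially the proof of Theorem~\ref{thm:PCA->normal}, so you are reusing the paper's own machinery in the intended way.
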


\begin{corollary}[Local interval property for models]
 A circular-arc model $\M$ is an NHCA model if and only if, for every $A \in \A(\M)$, the submodel of $\M$ induced by $N[A]$ is an interval model.
\end{corollary}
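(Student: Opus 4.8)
The plan is to reduce everything to two facts already available: first, that an NHCA model is exactly a circular-arc model in which no two and no three arcs cover the circle (normality rules out two covering arcs by definition, while Corollary~\ref{cor:pca_no_hca} rules out three, since a normal model with three covering arcs fails to be Helly); and second, that a submodel is an interval model precisely when some point of $C(\M)$ is left uncovered by its arcs. With these in hand, both implications become statements about how the arcs of a closed neighbourhood cover the circle, and I will use throughout that the arcs of $\M[N[A]]$ are $A$ together with the arcs meeting $A$, so that each of them intersects $A$.

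For the easy implication I would argue the contrapositive: if $\M$ is not NHCA, then two or three arcs cover $C(\M)$, and such covering arcs may be taken to pairwise intersect. Indeed, if two arcs cover the circle then, since neither covers it alone, each contains the complement of the other, so they overlap. If three arcs $A_1,A_2,A_3$ cover the circle while no two of them do, a short argument shows they pairwise intersect: were, say, $A_1\cap A_2=\emptyset$, the single arc $\overline{A_3}$ would lie inside one of the two disjoint arcs $A_1,A_2$, forcing that arc together with $A_3$ to cover the circle and contradicting minimality. Once the covering arcs pairwise intersect, they all lie in $N[A_1]$, so every point of $C(\M)$ is covered by an arc of $\M[N[A_1]]$; hence $\M[N[A_1]]$ is not an interval model and the local interval property fails.

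For the main implication, assume $\M$ is NHCA, fix an arc $A$, and suppose for contradiction that the arcs of $N[A]$ cover $C(\M)$. Since each neighbour meets $A$, I would classify a neighbour $B$ by how it intersects the complement arc $\overline{A}=(t(A),s(A))$: $B$ may cross $s(A)$, cross $t(A)$, be contained in $A$, or contain $A$. Parametrising $\overline{A}$ from $t(A)$ to $s(A)$, a crosser or a containing arc covers a sub-arc anchored at the $t(A)$-end, one anchored at the $s(A)$-end, or both, and normality of the pair $\{A,B\}$ forces each such sub-arc to stop strictly before the opposite extreme of $\overline{A}$ (so in particular no single neighbour covers all of $\overline{A}$). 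From the assumption that $\overline{A}$ is fully covered, I would take a $t(A)$-anchored arc $B_r$ of maximal reach and an $s(A)$-anchored arc $B_l$ of maximal reach; because the whole coverage of $\overline{A}$ equals the union of these two extremal tails, the tails must overlap, giving $B_r\cup B_l\supseteq\overline{A}$. Since no single arc covers $\overline{A}$, we get $B_r\neq B_l$, and then $A,B_r,B_l$ are three arcs covering $C(\M)$, contradicting that $\M$ is NHCA.

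The main obstacle is precisely this extraction step: one must verify that the two extremal tails genuinely overlap (so that $B_r$ and $B_l$ together cover all of $\overline{A}$) and that $B_r\neq B_l$, treating the arcs that contain $A$ with care, since such an arc contributes a tail at each end of $\overline{A}$ but always leaves a gap in the middle and hence can never serve as both $B_r$ and $B_l$. The delicate bookkeeping lies in tracking the cyclic positions of $s(B)$ and $t(B)$ relative to $s(A)$ and $t(A)$ and in applying normality repeatedly to keep each tail away from the far extreme; once the correct pair of neighbours is named, the conclusion is immediate.
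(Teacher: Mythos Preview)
Your proposal is correct and is precisely the natural derivation the paper has in mind: the corollary is stated immediately after Theorem~\ref{thm:NHCA->no2y3cubrenCirculo} without proof, and your argument---reducing both directions to the characterisation ``NHCA $\Leftrightarrow$ no two or three arcs cover the circle'' by extracting a covering pair or triple from any neighbourhood that covers $C(\M)$---is exactly how one fills in the omitted details. The extremal-tail extraction you describe (picking $B_r$ and $B_l$ of maximal reach into $\overline{A}$ and checking $B_r\neq B_l$) is the same mechanism the paper later uses explicitly in the proof of Theorem~\ref{thm:locally-interval}.
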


The local interval property of NHCA graphs implies that a lot of simple algorithms that work for interval graphs also work for NHCA graphs.  Consider for example the problem of finding every clique of an interval graph.  A \emph{clique segment} in an interval model $\M$ is a segment $(s, t)$ where $s$ is a beginning point and $t$ is an ending point.  In $\M$, every clique point belongs to a clique segment, and all the points in a clique segment are clique points.  Thus, the set of intervals that contain a given clique segment induce a clique and each clique is represented by exactly one clique segment.  It is trivial to find every clique of an interval graph by computing the clique segments in one of its interval models.  For HCA graphs the situation is quite different.  First, not every segment of the form $(s, t)$ is formed by clique points (with $s$ and $t$ being two consecutive beginning and ending point, respectively).  Second, not every clique of the model is represented by exactly one clique segment, \ie, there are several clique segments whose containing arcs induce the same clique.  Thus, the $O(n)$ time algorithm to find every clique point is not so trivial \citep[see][]{LinMcConnellSoulignacSzwarcfiter2008,Soulignac2010}.  However, NHCA graphs share the same property as interval graphs: every segment $(s, t)$ represents a clique, and every clique is represented by exactly one of such clique segments.  Thus, the same algorithm for interval graphs works as well for NHCA graphs.  Another example in where the local interval property has some advantages is in the dynamic recognition problem.  It is somehow easy to insert a new edge in $O(1)$ time into a PHCA graph, but it is not so easy for PCA graphs \citep[see][]{Soulignac2010}.  In Section~\ref{sec:orientations} we further develop these intervality concepts for NHCA and PHCA graphs.

Another similarity between interval and NHCA graphs has to do with the forbidden induced subgraph structure.  For interval graphs the structure is described by the following theorem by Roberts.  Here $K_{1,3}$ is the graph with four vertices, where one of these vertices is universal and the other three vertices are pairwise non-adjacent.

\begin{theorem}[\citealp{Roberts1969}]\label{thm:roberts}
Let $G$ be an interval graph.  Then the following are equivalent:
\begin{enumerate}[(i)]
 \item $G$ does not contain $K_{1,3}$ as an induced subgraph.
 \item $G$ is a proper interval graph.
 \item $G$ is a unit interval graph.
\end{enumerate}
\end{theorem}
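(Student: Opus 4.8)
The plan is to prove the cycle of implications (iii) $\Rightarrow$ (ii) $\Rightarrow$ (i) $\Rightarrow$ (iii), so that the three statements become equivalent. The first implication is immediate: in a unit interval model all intervals share a common length, and since no two extremes coincide, no interval can be contained in another; hence the model is proper.

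For (ii) $\Rightarrow$ (i), I would argue by contradiction using the classical ``sandwich'' observation. Suppose a proper interval model of $G$ contained an induced $K_{1,3}$ with center $c$ and pairwise non-adjacent leaves $a,b,d$. The intervals $I_a,I_b,I_d$ are pairwise disjoint, so they occur in some left-to-right order, say $I_a, I_b, I_d$; since $I_c$ meets both $I_a$ and $I_d$ while $I_b$ lies strictly between them, comparing the four relevant endpoints forces $I_b \subsetneq I_c$, contradicting properness. Thus a proper interval graph has no induced $K_{1,3}$.

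The substance of the theorem is the remaining implication (i) $\Rightarrow$ (iii). First I would use Theorem~\ref{thm:IG cliques consecutivas} to fix a linear order $Q_1, \ldots, Q_k$ of the cliques of $G$ in which, for each vertex $v$, the cliques containing $v$ form a contiguous range $[a(v), b(v)]$; this already yields an interval model. The key combinatorial step is to extract from the hypothesis ``no induced $K_{1,3}$'' a single vertex order $<$ that is \emph{umbrella-free}, meaning that $u < v < w$ and $uw \in E(G)$ imply $uv, vw \in E(G)$; equivalently, an order in which the left and the right endpoints of the intervals can be made to increase simultaneously. The idea is to order the vertices by their first clique $a(\cdot)$, refine ties consistently, and show that any violation of the umbrella condition produces three pairwise non-adjacent vertices in the common neighborhood of a fourth, that is, an induced $K_{1,3}$.

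Once such an umbrella-free order is in hand, I would convert it into a \emph{unit} model by an endpoint-equalization argument: realize the order with intervals whose endpoints are read off from the order, and then iteratively spread them so that every interval attains the same length while preserving the adjacency pattern. I expect this last conversion to be the main obstacle, and it is exactly the place where the difference between the line and the circle enters. On the line the order is acyclic, so finiteness guarantees that the spreading terminates with all lengths equal; on the circle this fails, as the family $CI(n,k)$ of Theorem~\ref{thm:CInk} exhibits a proper model admitting no unit model. Hence the proof must use, explicitly or implicitly, the linearity of the real line to close the gap between ``proper'' and ``unit''.
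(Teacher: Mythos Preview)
The paper does not contain a proof of this statement: Theorem~\ref{thm:roberts} is quoted from \citet{Roberts1969} as a classical background result and is used (together with Theorem~\ref{thm:lekkerkerkerboland}, Theorem~\ref{thm:CInk}, etc.) without being reproved. So there is no ``paper's own proof'' to compare your proposal against.

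That said, your plan is a standard and essentially correct route to Roberts' theorem. The implications (iii) $\Rightarrow$ (ii) and (ii) $\Rightarrow$ (i) are exactly as you describe. For (i) $\Rightarrow$ (iii), using Theorem~\ref{thm:IG cliques consecutivas} to obtain a clique order, then sorting vertices by $(a(v),b(v))$ to get an umbrella-free linear order, is the right skeleton; the point you should make precise is that a failure of the umbrella condition at $u<v<w$ forces the clique range of $v$ to sit strictly inside that of $u$ (or of $w$), and then two cliques witnessing the strictness supply two further vertices which, together with $v$, are pairwise non-adjacent neighbours of $u$---this is where $K_{1,3}$ actually appears. The only part of your outline that is still a promissory note is the ``endpoint-equalization'' from an umbrella-free order to a genuine unit model; the clean way to discharge it is not an iterative spreading but a one-shot construction (as in \citealp{GardiDM2007} or \citealp{CorneilKimNatarajanOlariuSpragueIPL1995}): place the vertices at integer positions according to the order and choose a common length just below the smallest gap between a vertex and its first non-neighbour to the right, using monotonicity of the ``rightmost neighbour'' map (which follows from the umbrella property) to check that all adjacencies are realised correctly. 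Your closing remark that linearity of the line is what makes proper${}={}$unit work, and that $CI(n,k)$ obstructs this on the circle, is exactly the right intuition and matches how the paper uses Theorem~\ref{thm:roberts} versus Theorem~\ref{thm:CInk}.
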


For NHCA graphs we obtain a similar result.

\begin{theorem}\label{thm:caracterizacion-NHCA->PHCA}
Let $G$ be an NHCA graph.  Then $G$ is a PHCA graph if and only if $G$ contains no $K_{1,3}$ as an induced subgraph.
\end{theorem}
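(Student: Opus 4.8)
The plan is to prove the two directions separately, since the forward implication follows from a structural fact about the claw, while the converse carries the real content.

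\emph{Forward direction (PHCA $\Rightarrow$ claw-free).} Because a PHCA graph admits a proper model it is in particular a PCA graph, and since the class of graphs with no induced $K_{1,3}$ is hereditary, it suffices to show that $K_{1,3}$ itself is not a PCA graph. Suppose some PCA model realized $K_{1,3}$, say with central arc $D$ and three pairwise disjoint arcs $A_1,A_2,A_3$, each meeting $D$. For $D$ not to contain $A_i$, the arc $A_i$ must straddle an extreme of $D$ (it meets both $D$ and its complement). But $D$ has only two extremes, so at most two of the $A_i$ can straddle one; the third lies entirely inside $D$, contradicting properness. Hence no PCA graph, and in particular no PHCA graph, contains an induced $K_{1,3}$.

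\emph{Converse, setup.} Assume $G$ is NHCA and claw-free, and fix an NHCA model $\M$ in which no two nor three arcs cover $C(\M)$ (Theorem~\ref{thm:NHCA->no2y3cubrenCirculo}), chosen to minimize the number of containments among all equivalent such models. If $\M$ is proper we are done: a proper NHCA model with no two nor three covering arcs is a PCA model with no two nor three covering arcs, hence a PHCA model by Theorem~\ref{thm:NHCA->no2y3cubrenCirculo}. If some point of $C(\M)$ is uncovered, then cutting the circle there turns $\M$ into an interval model, so $G$ is an interval graph; being claw-free, Roberts' theorem (Theorem~\ref{thm:roberts}) supplies a proper interval model, which is a PHCA model. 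Thus we may assume $\M$ covers the circle and contains a containment $A \subset B$.

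\emph{Converse, crux.} The aim is to show this containment is removable unless it exhibits a claw, contradicting minimality. By the local interval property, $G[N[B]]$ is an interval graph; being claw-free it is a proper interval graph by Roberts' theorem, and a proper interval ordering of $N[B]$ guides how to slide the extremes of $A$ outward, past an extreme of $B$, so as to obtain an equivalent NHCA model with one fewer containment. The slide is blocked only if, on each side of $B$ beyond $A$, there sits an arc meeting $B$ but disjoint from $A$: an arc $X$ meeting $B$ inside $(s(B),s(A))$ and an arc $Y$ meeting $B$ inside $(t(A),t(B))$. If such $X,Y$ exist and are mutually disjoint, then $\{B; A, X, Y\}$ induces a $K_{1,3}$, a contradiction; and claw-freeness together with normality is precisely what rules out the remaining blocking configurations.

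\emph{Main obstacle.} The delicate point is the global, circular bookkeeping rather than the local interval-style claw argument. One must show that the sliding can be carried out while simultaneously preserving the Helly property and, crucially, the absence of two or three arcs covering $C(\M)$; and that when the blocking arcs $X,Y$ fail to be disjoint, normality lets us rearrange them without creating a covering pair or triple. I expect this circular consistency to be the heart of the proof, and it is exactly where the hypothesis that no two arcs cover the circle does the essential work, separating NHCA graphs from the troublesome examples (such as the $4$-wheel) that are claw-free but not NHCA.
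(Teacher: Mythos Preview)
Your forward direction is fine. For the converse you have located the right obstruction --- blocking arcs on either side of a containment should produce the claw --- but the minimality-plus-sliding scheme is not finished: you never establish that a slide preserving the NHCA property is actually available, nor do you handle the case where the blocking arcs $X,Y$ intersect, and you explicitly flag this ``circular bookkeeping'' as unresolved. That gap is real, and the proposal does not close it.

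The paper bypasses the difficulty with a single normalization step that makes your detours through Roberts' theorem, the local interval property, and the interval/covered case split unnecessary. Take \emph{any} NHCA model $\M$ and, within each maximal run of consecutive beginning points (respectively ending points), permute the extremes so that no two arcs with extremes in that run are nested. Since only same-type extremes lying in one segment are swapped, no intersection changes and the NHCA property is preserved automatically --- there is nothing global to check. After this sort, any surviving containment $A_2 \subset A_1$ forces an ending point $t(L)$ strictly between $s(A_1)$ and $s(A_2)$ and a beginning point $s(R)$ strictly between $t(A_2)$ and $t(A_1)$. The NHCA hypothesis is then invoked once: if $L=R$, or $L\cap R\neq\emptyset$, or $L$ or $R$ meets $A_2$, then two or three of $A_1,L,R$ cover the circle. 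Hence $A_2,L,R$ are pairwise disjoint and all meet $A_1$, so $\{A_1;A_2,L,R\}$ is the claw.
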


\begin{proof}
 Clearly, the $K_{1,3}$ graph is not PCA, so it is neither PHCA.  For the converse, suppose that $G$ is an NHCA graph with an NHCA model $\M$.  Sort every extreme sequence of $\M$ in such a way that no arc with an extreme in the sequence is properly contained in some other arc with an extreme in the sequence.  This sorting does not change the intersections between the arcs of $\M$, thus the sorted model $\M'$ is also an NHCA model of $G$.  Now, if some arc $A_1 \in \A(\M')$ is contained in some other arc $A_2 \in \A(\M')$, it is because there is some ending point between $s(A_1)$ and $s(A_2)$, and there is some beginning point between $t(A_2)$ and $t(A_1)$.  In other words, there are two arcs $L$ and $R$ of $\M'$ such that $s(A_1), t(L), s(A_2), t(A_2), s(R)$ and $t(A_1)$ appear in this order in $\M'$.  Since $\M'$ is NHCA then $L \neq R$ and $L \cap R = \emptyset$, \ie, the intersection graph of $A_1, L, A_2, R$ is isomorphic to $K_{1,3}$.
\end{proof}

Implication $(ii) \Longrightarrow (iii)$ of Theorem~\ref{thm:roberts} is lost, because the graph $CI(n,k)$ is PHCA but not UHCA for every $n > 3k$.  Indeed, by definition, every $CI(n,k)$ graph admits a PCA model in which no family with at most $\lceil n/k \rceil > 3$ arcs cover the circle.  To retain the ``proper=unit'' property with a similar definition as the one in Theorem~\ref{thm:NHCA->no2y3cubrenCirculo}, we should ask that no set of arcs cover the circle.  This is because the unit length property of UIG graphs is global, and it does not depend only on the neighborhood of each vertex.

Next, we consider clique graphs of HCA graphs and NHCA graphs.  The \emph{clique graph} $K(G)$ of a graph $G$ is the intersection graph of the cliques of $G$, and for a class $\C$ of graphs we denote $K(\C) = \{K(G) \mid G \in \C\}$.  Clique graphs of interval graphs were studied by \citet{HedmanJCTSB1984}, who proved that the PIG and $K$(IG) classes are equal.  On the other hand, clique graphs of HCA graphs were first studied by \citet{DuranLinAC2001}, who proved that clique graphs of HCA graphs are both PCA and HCA.  The question that motivated us to study PHCA graphs first, and NHCA graphs later, was if the class of PHCA graphs is equal to the $K$(HCA) class.  The answer is no, but almost, as it is shown in the next theorem.

\begin{theorem}[\citealp{LinSoulignacSzwarcfiterDAM2010}]{\label{thm:KHCA}}
Let $H$ be a graph and $U$ the set of universal vertices of $H$.  Then $H$ is the clique graph of some HCA graph $G$ if and only if:
\begin{description}
\item{$\;$(i)} $H$ is a PHCA graph or
\item{(ii)} $H \setminus U$ is a co-bipartite PHCA graph and $|U| \geq 2$.
\end{description}
\end{theorem}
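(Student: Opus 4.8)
The plan is to prove the two implications separately, taking as a black box the result of \citet{DuranLinAC2001} that every clique graph of an HCA graph is simultaneously a PCA and an HCA graph, together with the observation (used in the proof of Theorem~\ref{thm:caracterizacion-NHCA->PHCA}) that $K_{1,3}$ is not a PCA graph. Consequently every $H \in K(\mathrm{HCA})$ is a \emph{claw-free} HCA graph. The reduction that organizes the whole argument is the following: among claw-free graphs, PHCA coincides with NHCA. Indeed, PHCA graphs admit normal Helly models, so $\mathrm{PHCA} \subseteq \mathrm{NHCA}$, while $\mathrm{PHCA} \subseteq \mathrm{PCA}$ is claw-free; conversely, by Theorem~\ref{thm:caracterizacion-NHCA->PHCA}, a claw-free NHCA graph is automatically PHCA. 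Hence the only way a claw-free HCA graph can fail to be PHCA is by failing to be normal.

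For the forward direction, set $H = K(G)$ with $G$ an HCA graph, so that $H$ is claw-free, PCA, and HCA. If $H$ is NHCA, then it is PHCA by the reduction above and alternative (i) holds. Otherwise $H$ is not PHCA; since $H$ is PCA, Theorem~\ref{thm:PCA->normal} provides a normal proper model $\M$ of $H$, and this $\M$ cannot be Helly (a normal proper Helly model would witness $H \in \mathrm{PHCA}$). By Corollary~\ref{cor:pca_no_hca}, three arcs $A_1, A_2, A_3$ of $\M$ cover $C(\M)$, while by normality no two of them do; thus $A_1, A_2, A_3$ pairwise intersect but share no common point. The goal is then to read off from this configuration a set $U$ of universal vertices of $H$ with $|U| \geq 2$ such that $H \setminus U$ is a co-bipartite PHCA graph, which gives alternative (ii). To control how the remaining arcs sit relative to $A_1, A_2, A_3$ I would crucially use that $H$ is a \emph{clique graph}, via the circular ordering of the cliques of $G$ supplied by Theorem~\ref{thm:caracterizacion HCA por orden de vertices}, rather than only the fact that $H \in \mathrm{PCA} \cap \mathrm{HCA}$.

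For the converse I would exhibit, in each case, an HCA graph $G$ with $K(G) = H$. In case (i), starting from a proper Helly model of $H$ (which is normal by Theorem~\ref{thm:PCA->normal}), I would construct the clique-inverse $G$ by expanding each arc of the model into a clique and adding a vertex for each intersecting pair of arcs, \ie\ by wrapping around the circle the interval construction of \citet{HedmanJCTSB1984} that establishes $K(\mathrm{IG}) = \mathrm{PIG}$; normality of the model, \ie\ the absence of two arcs covering the circle, is exactly what keeps the resulting $G$ a Helly model. In case (ii) I would first realize $H \setminus U$ as $K(G_0)$ using the construction of case (i), and then augment $G_0$ with gadget vertices producing $|U|$ new cliques, each meeting every other clique. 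Here the co-bipartite ``two cliques that wrap the circle'' structure of $H \setminus U$ is precisely what allows these new cliques to be realized as universal vertices of $K(G)$ while keeping $G$ an HCA graph.

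The main obstacle is the forward non-normal case, and within it the passage from the normal proper non-Helly model with three covering arcs to the conclusion that \emph{all} the non-normality is confined to a set of at least two universal vertices whose removal leaves a graph that is not merely claw-free HCA but genuinely \emph{co-bipartite}. Proving co-bipartiteness (rather than only PHCA-ness) of $H \setminus U$ is delicate, since one must rule out more general claw-free HCA remainders, and I expect this to be exactly where the clique-graph hypothesis and the circular clique ordering of Theorem~\ref{thm:caracterizacion HCA por orden de vertices} have to be exploited. A secondary difficulty is the case~(ii) construction, where verifying that the augmented graph $G$ is Helly and has precisely the prescribed cliques requires care.
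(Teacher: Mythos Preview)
This theorem is not proved in the present paper: it is quoted from \citet{LinSoulignacSzwarcfiterDAM2010} as a motivational result in Section~\ref{sec:motivation}, and no argument for it appears here. There is therefore no ``paper's own proof'' to compare your proposal against.

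As for the proposal itself, it is a plan rather than a proof, and you are candid about this. The reduction ``claw-free $+$ NHCA $=$ PHCA'' via Theorem~\ref{thm:caracterizacion-NHCA->PHCA} is correct and is a sensible organizing principle. The forward direction when $H$ is NHCA and the converse case~(i) via a Hedman-style inverse construction are both plausible and standard. The real content, as you correctly identify, is the forward non-normal case: from an NPCA non-Helly model of $H$ with three covering arcs you must extract at least two universal vertices and show that removing \emph{all} universal vertices leaves a co-bipartite PHCA graph. Your sketch does not carry this out; in particular, you do not explain how the circular clique ordering of $G$ (Theorem~\ref{thm:caracterizacion HCA por orden de vertices}) forces two of the three covering arcs to be universal in $H$, nor why what remains after deleting $U$ splits into two cliques. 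Likewise, in the converse case~(ii) you describe augmenting $G_0$ with ``gadget vertices'' but give no construction and no verification that the result is HCA with the correct clique set. Until those two steps are actually executed, the proposal is an outline with the hard parts left open.
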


However, the analogous of Hedman's result can be obtained for NHCA graphs \citep{LinSoulignacSzwarcfiterDAM2010}.  That is, the $K$(NHCA) and the PHCA classes of graphs are equal.  \citet{HedmanJCTSB1984} proved also that for every PIG graph $G$ there is a PIG graph $H$ such that $K(H)$ is isomorphic to $G$.  That is, the PIG and the $K$(PIG) classes are also equal.  The same result holds for PHCA graphs, \ie, for every PHCA graph $G$ there is a PHCA graph $H$ such that $K(H)$ and $G$ are isomorphic \citep{LinSoulignacSzwarcfiterDAM2010}.

Now we move into the forth property that can be preserved by restricting the attention to PHCA graphs.  A classic characterization of interval graphs is that interval graphs are those graphs whose clique matrix has the consecutive-ones property for columns \citep{GilmoreHoffmanCJM1964}.  Similarly, a graph is HCA if and only if its clique matrix has the circular-ones property for columns \citep{GavrilN1974}.  The definitions of clique matrix, consecutive and circular-ones properties are given in Section~\ref{sec:properties}.  We can think that the consecutive and circular-ones properties for columns are due to the Helly property of the interval and HCA graphs, respectively.  On the other hand, it is well known that a graph is a proper interval graph if and only if its clique matrix has the consecutive-ones property for both its rows and its columns \citep[see \eg][]{DeogunGopalakrishnanIPL1999,Fishburn1985,GardiDM2007}.  The analogous theorem for the circular-ones property can be proved for PHCA graphs, as we shall see in Section~\ref{sec:properties}.  That is, the clique matrix of a graph has the circular-ones properties for both rows and columns if and only if the graph is a PHCA graph.

Finally, the boxicity of NHCA graphs was studied by \citet{BhowmickChandranGaC2010}.  Given two graphs $G_1$ and $G_2$ with the same vertex set $V$, define the \emph{intersection} of $G_1$ and $G_2$ as the graph $G_1 \cap G_2$ with vertex set $V$ such that $vw$ is an edge of $G_1 \cap G_2$ if and only if $vw$ is an edge of both $G_1$ and $G_2$. The \emph{boxicity} of a graph $G$ is the minimum number of interval graphs whose intersection is isomorphic to $G$.  Clearly, interval graphs are precisely the graphs with boxicity 1.  On the other hand, the boxicity of a general circular-arc graph can be as large as $n/2$ \citep{Roberts1969a}.  \citet{BhowmickChandranGaC2010} proved that NHCA graphs have boxicity at most $3$.  If, furthermore, no four arcs of a circular-arc model cover the circle, then the boxicity is at most $2$.

\section{The structure of the NHCA subclasses}
\label{sec:characterizations}

In this section we present characterizations by forbidden induced subgraphs for the classes of NHCA, PHCA, and UHCA graphs.  These characterizations follow the same spirit as Theorem~\ref{thm:caracterizacion-NHCA->PHCA}, in the sense that they show when does a graph from some class belong to a subclass of it.  The characterizations shown in this section immediately yield $O(n+m)$ time recognition algorithms for all the classes.  In the next section we further discuss the algorithmic implications of these characterizations.

The following proposition is used several times throughout this section.  We include it here for this reason.

\begin{proposition}\label{prop:holes in ca models}
Let $\M$ be a circular-arc model and $B_1, \ldots, B_k$ be a hole in $\M$.  If $A \in \A(\M)$ is an arc which is not contained in any other arc, then either:

\begin{enumerate}[(i)]
 \item $A$ and $B_i$ cover the circle for some $1 \leq i \leq k$,
 \item $A, B_i$, and $B_{i+1}$ cover the circle for some $1 \leq i \leq k$, 
 \item $A \subset (B_i \cup B_{i+1}) \setminus (B_{i+2} \cup \ldots \cup B_{i-1})$ for some $1 \leq i \leq k$, or
 \item $A, B_i, \ldots, B_j$ is an induced hole of $\M$, for some $1 \leq i,j \leq k$.\label{prop:holes in ca models:condition iv}
\end{enumerate}
\end{proposition}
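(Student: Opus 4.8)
The plan is to reduce everything to the combinatorial position of the arc $A$ relative to the cyclic arrangement of the hole. First I would record the two basic structural facts about a hole $B_1,\dots,B_k$ in a circular-arc model. The union $B_1\cup\cdots\cup B_k$ must be all of $C(\M)$: otherwise, cutting the circle at an uncovered point turns the $B_i$ into intervals on a line whose intersection graph is still the hole, contradicting that interval graphs are chordal. Moreover, since $B_{i-1}$ and $B_{i+1}$ are non-adjacent while each consecutive pair overlaps, the extremes occur in the cyclic order $\dots,s(B_i),t(B_{i-1}),s(B_{i+1}),t(B_i),\dots$, so the circle decomposes into $k$ overlap regions $O_i=B_i\cap B_{i+1}=(s(B_{i+1}),t(B_i))$ and $k$ private regions $P_i=(t(B_{i-1}),s(B_{i+1}))$ covered by $B_i$ alone, arranged as $P_1,O_1,P_2,O_2,\dots$; in particular $B_i=O_{i-1}\cup P_i\cup O_i$.

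Next I would let $S=\{i : A\cap B_i\neq\emptyset\}$ be the set of hole arcs that $A$ meets. Since the $B_i$ cover the circle and $A$ is not contained in any arc, $A$ cannot lie inside a single $B_i$, so $|S|\ge 2$. The crucial observation is that $S$ is a cyclic interval of indices. Indeed, a hole arc is disjoint from $A$ exactly when it is contained in the sub-arc $\overline{A}=(t(A),s(A))$, and because both $s(B_l)$ and $t(B_l)$ advance monotonically clockwise as $l$ increases, the indices $l$ with $B_l\subseteq\overline{A}$ form a contiguous block; hence its complement $S$ is a cyclic interval as well. The whole argument then splits on $|S|$.

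If $|S|=2$, contiguity forces the two arcs to be consecutive, say $S=\{i,i+1\}$. Then $A$ is disjoint from every other $B_j$, so $A\subseteq C(\M)\setminus\bigcup_{j\neq i,i+1}B_j=(B_i\cup B_{i+1})\setminus(B_{i+2}\cup\cdots\cup B_{i-1})$, which is case (iii). If $3\le |S|\le k-1$, the complement of $S$ is a nonempty contiguous run $\{i+1,\dots,j-1\}$ whose bordering arcs $B_i,B_j$ lie in $S$; I would then argue that $\{A,B_i,B_{i+1},\dots,B_j\}$ induces a hole, which is case (iv). Here $A$ is adjacent to exactly $B_i$ and $B_j$ among these arcs (by definition of $S$), the $B$'s form an induced path, and a short index computation using $j-i=k-|S|+1$ together with $3\le|S|\le k-1$ shows that $B_i$ and $B_j$ are non-adjacent and that the cycle has length at least four. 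Finally, if $|S|=k$ then $A$ meets every hole arc, so no $B_l$ is contained in $\overline{A}$; thus every arc meeting $\overline{A}$ must contain $t(A)$ or $s(A)$, and I would show that the two ``outermost'' such arcs (the one through $t(A)$ reaching furthest into $\overline{A}$ and the one through $s(A)$ reaching furthest back) must overlap --- otherwise an uncovered gap inside $\overline{A}$ would contradict $B_1\cup\cdots\cup B_k=C(\M)$. Since only consecutive hole arcs overlap, these two arcs are equal or consecutive, giving $\overline{A}\subseteq B_i$ (case (i)) or $\overline{A}\subseteq B_i\cup B_{i+1}$ (case (ii)).

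The main obstacle I expect is the bookkeeping in the two extreme sub-cases: verifying in case (iv) that no chord is introduced (i.e.\ that $B_i,B_j$ are non-adjacent and the middle arcs are genuinely non-adjacent to $A$), and in the $|S|=k$ analysis handling the degenerate possibilities where an endpoint of $A$ falls in an overlap region $O_\ell$, so that two arcs rather than one contain it. Keeping track of which arc ``reaches furthest'' in these degenerate configurations, and confirming that the covering is always by at most two consecutive arcs, is the delicate part; everything else reduces to the monotonicity of the extremes recorded in the first step.
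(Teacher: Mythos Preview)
Your approach is correct and genuinely different from the paper's. The paper argues directly by a traversal: starting at $t(A)$, it locates the first beginning point $s(B_i)$ encountered and then branches on whether $s(B_i)\in A$, whether $t(B_i)\in A$, and (when neither holds) on where the symmetric traversal from $s(A)$ lands; this yields the four outcomes in a few lines without ever naming the neighbour set $S$ or the decomposition into private and overlap regions. Your route instead sets up the global structure of the hole (the regions $P_i,O_i$), proves that $S=\{i:A\cap B_i\neq\emptyset\}$ is a cyclic interval, and then cases on $|S|$. What you gain is a clean conceptual picture---once the contiguity of $S$ is established, each case is essentially a one-line identification of which alternative (i)--(iv) occurs---and the argument makes explicit why exactly these four outcomes are exhaustive. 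What the paper gains is brevity: it bypasses both the covering lemma $\bigcup B_i=C(\M)$ and the contiguity of $S$, so no preliminary structural claims are needed.

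One small point to tighten in your write-up: the justification that the complement of $S$ is contiguous (``both $s(B_l)$ and $t(B_l)$ advance monotonically'') is not quite a proof as stated, since $B_l\subseteq\overline{A}$ is not literally equivalent to $s(B_l),t(B_l)\in\overline{A}$. The cleanest fix is to note that if $B_p,B_q\subseteq\overline{A}$ with $p<q$ cyclically, then one of the two index ranges between them is covered by arcs lying entirely between $B_p$ and $B_q$ inside $\overline{A}$; alternatively, observe that any $B_l$ meeting $A$ must contain $s(A)$ or $t(A)$ (since $B_l$ is connected and not contained in $\overline{A}$), and the indices of arcs through a single point form a cyclic interval. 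Either version closes the gap without difficulty.
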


\begin{proof}
If $A = B_i$ for some $1 \leq i \leq n$, then ($\ref{prop:holes in ca models:condition iv}$) follows.  Suppose then that $A$ is not an arc of the hole.  Traverse $C(\M)$ from $t(A)$ and let $B_i$ be the arc whose beginning point appears first.  If $s(B_i) \in A$ then $A$ and $B_{i-1}$ must cover the circle.  Otherwise, if $t(B_i) \in A$ then $A, B_{i-1},$ and $B_{i}$ cover the circle.  Finally, suppose that neither $s(B_i)$ nor $t(B_i)$ are points of $A$, and let $B_j$ be the arc whose ending point appears first in a counterclockwise traversal of $C(\M)$ from $s(A)$.  If $i-1 = j+1$, then it follows that $A \subset B_{i-1}$, which is a contradiction to the fact that $A$ is not properly contained in any other arc.  Otherwise, $A, B_{i-1}, \ldots, B_{j+1}$ induce a hole whenever $i-1 \neq j+2$ or $A \subset (B_{i-2} \cup B_{i-1})  \setminus (B_{i} \cup \ldots \cup B_{i-3})$ whenever $i-1 = j+2$.
\end{proof}

\subsection{Normal Helly circular-arc graphs}
\label{sec:characterizations:nhca}

We begin with the problem of determining when does an HCA graph admit an NHCA model.  Wheels, $3$-suns, rising suns, and umbrellas are the forbidden subgraphs involved in the characterization.  The $3$-sun and the umbrella are the graphs depicted in Figure~\ref{fig:forbidden-NHCA-graphs} (a)~and~(b), respectively.  The \emph{$n$-wheel}, for $n \geq 4$, is the graph obtained by inserting one universal vertex into a hole of length $n$ (see Figure~\ref{fig:forbidden-NHCA-graphs} (c)).  Finally, the \emph{$n$-rising sun}, for $n \geq 4$, is the graph that is obtained from a path $v_2, \ldots, v_{n-1}$ by first adding two universal vertices $v_1$ and $v_n$, and then inserting three vertices $w_1, w_{n-1}, w_n$ such that $w_i$ is adjacent only to $v_i$ and to $v_{i+1}$, for $i \in \{1, n-1, n\}$ (see Figure~\ref{fig:forbidden-NHCA-graphs} (d)).  The $3$-sun graph is denoted by $S_3$, the umbrella is denoted by $U$, the $n$-wheel is denoted by $W_n$, and the $n$-rising sun is denoted by $R_n$.  

\begin{figure}[ht!]
 \centering
 \begin{tabular*}{\textwidth}{@{\extracolsep{\fill}}*4c}
  \includegraphics{grafo-3sun} & \includegraphics{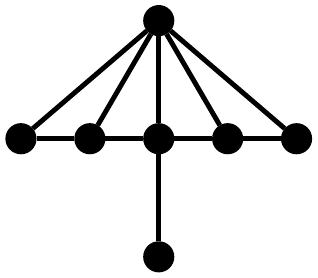} & \includegraphics{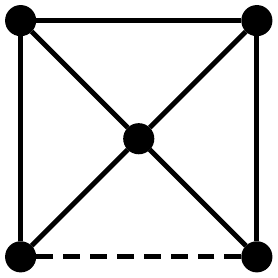} & \includegraphics{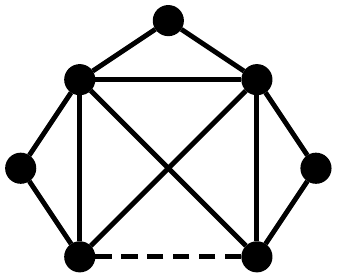} \\
  (a) $3$-sun & (b) Umbrella & (c) Wheels & (d) Rising suns 
 \end{tabular*}
 \caption[Minimal HCA graphs that are not NHCA graphs]{HCA graphs that are not NHCA.  Wheels have at least 5 vertices and rising suns have at least 7 vertices.}\label{fig:forbidden-NHCA-graphs}
\end{figure}

\begin{figure}[ht!]
 \centering
 \begin{tabular*}{\textwidth}{@{\extracolsep{\fill}}*4c}
  \includegraphics{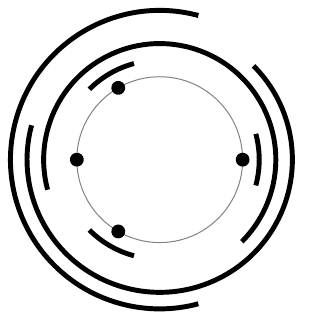} & \includegraphics{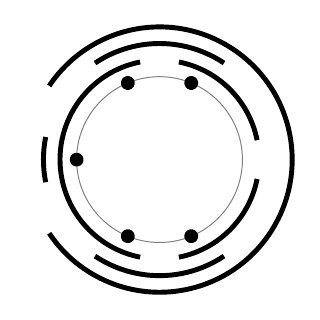} & \includegraphics{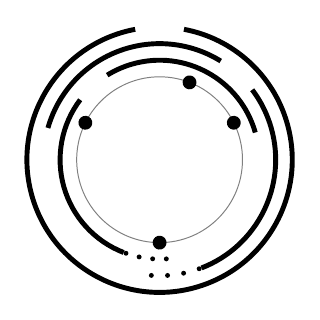} & \includegraphics{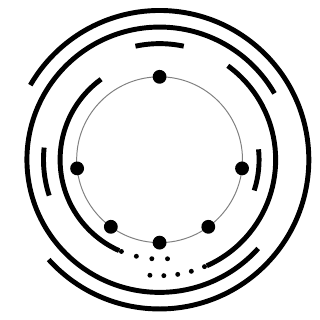} \\
  (a) $3$-sun & (b) Umbrella & (c) Wheels & (d) Rising suns 
 \end{tabular*}
 \caption[HCA models of the graphs in Figure~\ref{fig:forbidden-NHCA-graphs}]{HCA models of the graphs in Figure~\ref{fig:forbidden-NHCA-graphs}.  Points in the circles are used to mark the clique points.}\label{fig:forbidden-NHCA-helly-models}
\end{figure}

Interval graphs can be partially characterized as HCA graphs using the following result by Lekkerkerker and Boland.

\begin{theorem}[\citealp{LekkerkerkerBolandFM1962/1963}]\label{thm:lekkerkerkerboland}
 An HCA graph is an interval graph if and only if it does not contain holes, $3$-suns, rising suns, nor umbrellas as induced subgraphs.
\end{theorem}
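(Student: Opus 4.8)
The plan is to prove the two implications separately, treating the forward direction as routine and concentrating all the work on the reverse one.

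For the forward direction I would simply observe that none of the listed graphs is itself an interval graph, and then invoke the fact that interval graphs are a hereditary class. A hole is not even chordal, so it is not interval; the $3$-sun, the umbrella, and every rising sun are chordal but each contains an asteroidal triple (for $S_3$ and for $R_n$ the three degree-two ``outer'' vertices $w_i$ form one), hence none of them is interval. Alternatively, one checks directly that each fails the clique consecutive-ones property of Theorem~\ref{thm:IG cliques consecutivas}. Either way, an interval graph can contain none of these as an induced subgraph.

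For the reverse direction, first note that ``$G$ contains no induced hole'' is exactly ``$G$ is chordal''. Assume $G$ is HCA and chordal but, for contradiction, not an interval graph, and fix an HCA model $\M$ of $G$. Since cutting $C(\M)$ at any point covered by no arc turns $\M$ into an interval model of $G$, non-intervality forces the arcs of $\M$ to cover $C(\M)$ entirely. Translating this to the circular clique ordering $Q_0, \ldots, Q_{k-1}$ supplied by Theorem~\ref{thm:caracterizacion HCA por orden de vertices}, I claim it is equivalent to requiring that every pair of cyclically consecutive cliques intersect. Indeed, if the segment between two consecutive clique points were covered, a ``handoff'' between two distinct arcs there would create an intermediate clique point, contradicting consecutiveness; hence a single arc covers that segment and lies in both cliques, so $Q_i\cap Q_{i+1}\neq\emptyset$. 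Conversely, if some $Q_i\cap Q_{i+1}=\emptyset$ the corresponding segment is uncovered, and cutting there yields (via Theorem~\ref{thm:IG cliques consecutivas}) an interval graph. Thus in our situation the cliques sit around the circle with $Q_i\cap Q_{i+1}\neq\emptyset$ for all $i$ and $k\geq 3$.

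The core of the argument is then to combine this circular, all-consecutive-intersecting arrangement with chordality. Being chordal, $G$ admits a clique tree; were this tree a path, Theorem~\ref{thm:IG cliques consecutivas} would make $G$ interval, so the tree must branch, and a branch point seen against the surrounding circular structure pins down a small obstruction. Concretely, I would run a case analysis on $k$ and on the vertices realizing the consecutive intersections: for small $k$ with a clique meeting all the others one recovers $S_3$ or the umbrella, while for larger $k$ the ``spread-out'' degree-two vertices sitting over three separated segments of the circle yield a rising sun $R_n$ (see the models in Figure~\ref{fig:forbidden-NHCA-helly-models}). The main obstacle is precisely this classification: organizing the possible intersection patterns of the $Q_i$ so that every chordal, non-interval, fully covered configuration is shown to contain one of $S_3$, the umbrella, or some $R_n$, and verifying that the extracted subgraph is \emph{induced} rather than merely present. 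Once one passes back from cliques to arcs, Proposition~\ref{prop:holes in ca models} is the natural tool for controlling how the remaining arcs of $\M$ sit relative to a chosen cyclic sequence, and thereby for keeping the case analysis finite and the extracted subgraph induced.
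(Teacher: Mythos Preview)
The paper does not prove this statement at all: Theorem~\ref{thm:lekkerkerkerboland} is stated with a citation to \citet{LekkerkerkerBolandFM1962/1963} and no proof. It is being invoked as a known classical result. The intended reading is that the full Lekkerkerker--Boland characterization of interval graphs by minimal forbidden induced subgraphs, once restricted to the class of HCA graphs, leaves only the holes, the $3$-sun, the umbrella, and the rising suns; all other Lekkerkerker--Boland obstructions fail to be HCA. So the ``paper's proof'' amounts to quoting the classical theorem and (implicitly) checking which of the minimal non-interval graphs are HCA.

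Your proposal takes a genuinely different route: rather than appealing to the full forbidden-subgraph list, you try to argue directly from an HCA model together with chordality, via the circular clique ordering of Theorem~\ref{thm:caracterizacion HCA por orden de vertices}. The forward direction is fine, and your reduction of the reverse direction to ``the arcs cover the circle, hence consecutive cliques in the circular order intersect'' is correct in spirit (though the ``handoff'' sentence needs a little more care: several arcs may span the segment between consecutive clique points; what you need is that at least one does, which follows because otherwise an intermediate clique point would appear). Where your proposal remains a plan rather than a proof is exactly where you say it does: the case analysis that extracts $S_3$, the umbrella, or some $R_n$ from a chordal, fully covered HCA configuration is sketched but not carried out, and verifying that the extracted subgraph is induced is nontrivial. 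If you complete that analysis you would have a self-contained argument that avoids the full Lekkerkerker--Boland machinery; the paper simply sidesteps all of this by citing the classical result.
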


The only difference between the characterization by Lekkerkerker and Boland and the characterization of NHCA graphs is that holes are replaced by wheels.  We analyze first how do HCA models of NHCA graphs look like.  

\begin{theorem}\label{thm:caracterizacion-HCA->NHCA}
 Let $\M$ be an HCA model of a graph $G$.  Then, $\M$ is equivalent to an NHCA model if and only if $\M$ contains no wheels, $3$-suns, rising suns, nor umbrellas as induced submodels.
\end{theorem}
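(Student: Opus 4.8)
The plan is to prove the two implications separately: the forward one is a short hereditary argument, while the backward one — the actual construction — carries all the work.

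For the forward direction, note that ``$\M$ is equivalent to an NHCA model'' is exactly the statement that $G$ is an NHCA graph, and that the intersection graph of an induced submodel of $\M$ is an induced subgraph of $G$. Deleting arcs from a model preserves both normality and the Helly property, so the class of NHCA graphs is closed under induced subgraphs. It therefore suffices to check that none of $S_3$, $U$, $W_n$, $R_n$ is NHCA. Each admits the HCA model of Figure~\ref{fig:forbidden-NHCA-helly-models}, but in each case the arcs cannot be rearranged to avoid a covering pair: for $W_n$ the hole arcs wrap around $C(\M)$ and the hub arc meets them all, so the hub together with some hole arc covers the circle, and the analogous local obstruction appears for $S_3$, $U$ and $R_n$. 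By Theorem~\ref{thm:NHCA->no2y3cubrenCirculo} none of these graphs is NHCA, hence an NHCA graph contains none of them as an induced subgraph and $\M$ displays none of them as an induced submodel.

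For the backward direction I would argue by contradiction, reducing the problem to a single covering pair. Because $\M$ is HCA, Theorem~\ref{thm:modelo-hca}(i) guarantees that any three arcs covering $C(\M)$ already contain two that cover it; hence, by Theorem~\ref{thm:NHCA->no2y3cubrenCirculo}, the sole obstruction to $\M$ being NHCA is a pair of arcs covering the circle. I would first pass to an equivalent \emph{canonical} model by sorting every extreme sequence so that no arc sharing an extreme with another in the sequence is properly contained in it, exactly as in the proof of Theorem~\ref{thm:caracterizacion-NHCA->PHCA}; this leaves the intersection graph untouched. If $G$ has no induced hole, then by Theorem~\ref{thm:lekkerkerkerboland}, together with the hypothesis that $S_3$, $U$ and $R_n$ are absent, $G$ is an interval graph and hence NHCA. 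We may therefore assume that $\M$ contains a hole and that the canonical model still has a covering pair $A_1,A_2$, with extremes in the order $s(A_1),t(A_2),s(A_2),t(A_1)$; canonicity makes this covering \emph{essential}, in the sense that neither overlap region $(s(A_1),t(A_2))$ nor $(s(A_2),t(A_1))$ can be vacated without destroying an intersection that some third arc needs.

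The heart of the proof, and the step I expect to be the main obstacle, is the case analysis that turns an essential covering pair into one of the four forbidden submodels. Fixing an induced hole $B_1,\dots,B_k$ of $\M$, I would apply Proposition~\ref{prop:holes in ca models} to the inclusion-maximal arcs that block the retraction of $A_1$ and $A_2$ in order to locate each of them relative to the hole; alternatives~(i) and~(ii) of the proposition are themselves coverings of $C(\M)$, while~(iii) and~(iv) place the arc either inside the union of two consecutive hole arcs or as an extension of the hole. Combining this with the Helly property, which forces the pairwise-intersecting families to share clique points and so restricts the possible incidences to finitely many patterns, I would show that each surviving pattern exhibits either a hub arc meeting all arcs of a hole (a wheel $W_n$), three mutually overlapping ``ear'' arcs on a triangle (a $3$-sun $S_3$), the asymmetric single-ear configuration of an umbrella $U$, or the long ear-and-path configuration of a rising sun $R_n$. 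Verifying that \emph{every} essential covering falls into exactly one of these patterns, with no configuration escaping the list, is the delicate bookkeeping that dominates the argument; once it is complete, the contrapositive yields the backward implication and, with the forward direction, the full characterization.
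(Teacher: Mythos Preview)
Your skeleton is right---forward by heredity, backward by splitting on whether $G$ contains a hole and invoking Theorem~\ref{thm:lekkerkerkerboland} in the hole-free case---but the core case analysis is both simpler and structurally different from what you anticipate.

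The key point you are missing is that \emph{when $\M$ has a covering pair $A_1,A_2$ and an induced hole $B_1,\ldots,B_k$, the only forbidden configuration that can appear is a wheel}. The $3$-sun, umbrella, and rising suns never arise in this branch; they are used solely in the hole-free branch, where they are exactly the remaining Lekkerkerker--Boland obstructions once holes are gone. So your promised bookkeeping ``verifying that every essential covering falls into exactly one of these patterns'' is aimed at the wrong target. The paper's argument runs Proposition~\ref{prop:holes in ca models} on $A_1$ (taken inclusion-maximal) against the hole: in alternative~(i) $A_1$ itself is universal to the hole; in alternative~(iii) $A_2$ is universal to the original hole (since $A_1\cup A_2=C(\M)$ and $A_1$ lies inside two consecutive hole arcs); in alternative~(iv) $A_2$ is universal to the new hole $A_1,B_i,\ldots,B_j$. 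Alternative~(ii) collapses to~(i) because $\M$ is HCA, so three covering arcs already contain a covering pair (Theorem~\ref{thm:modelo-hca}(i)). Every branch produces a wheel, contradiction; hence $G$ is hole-free, and Theorem~\ref{thm:lekkerkerkerboland} finishes.

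Two further remarks. First, the passage to a ``canonical'' model by sorting extreme sequences and the notion of an ``essential'' covering are unnecessary here: it suffices to take $A_1,A_2$ inclusion-maximal, which is immediate. Second, in your forward direction the sentence ``the hub together with some hole arc covers the circle'' is the right intuition for $W_n$, but for $S_3$, $U$, $R_n$ one should actually argue via clique points (each of these graphs has enough cliques that in any HCA model some pair of arcs must span all clique points and hence the circle); your ``analogous local obstruction'' is too vague to stand as a proof.
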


\begin{proof}
 Let $\M'$ be any HCA model of a wheel $W_k$, for $k \geq 4$.  Such a model exists as it is depicted in Figure~\ref{fig:forbidden-NHCA-helly-models} (c).  Model $\M'$ has at least one clique point for each clique of $W_k$; the universal arc covers all of these clique points, while each of the other arcs of the submodel covers exactly two of them. Consequently, there are two arcs covering $C(\M')$, \ie, $\M'$ is not normal. Thus, whenever $\M$ is equivalent to an NHCA model, $\M$ does not contain an induced submodel of $W_k$. The proofs for the $3$-sun, the umbrella and the rising suns follow analogously.

 For the converse, suppose that $\M$ contains none of the forbidden submodels and yet $\M$ has two arcs $A_1$ and $A_2$ that cover the circle which, w.l.o.g., are not contained in other arcs of $\M$.  Suppose also, to obtain a contradiction, that $\M$ has $k$ arcs $B_1, \ldots, B_k$ that induce a hole in that order.  If $A_1 = B_i$ for some $1 \leq i \leq k$, then $A_2$ is adjacent to all the arcs of the hole, \ie, $A_2, B_1, \ldots, B_k$ induce a wheel.  Otherwise, we ought to consider three cases by Proposition~\ref{prop:holes in ca models}:
 
 \begin{description}
  \item[Case 1:] $A_1$ and $B_i$ cover the circle for some $1 \leq i \leq k$.  In this case, $A_1$ intersects all the arcs of the hole, \ie, $A_1, B_1, \ldots, B_k$ induce a wheel.  Therefore, this case cannot happen.
  \item[Case 2:] $A_1$ is contained in $B_i \cup B_{i+1}$ for some $1 \leq i \leq k$.  In this case, $A_2$ intersects all the arcs of $B_1, \ldots, B_k$, thus $A_2, B_1, \ldots, B_k$ induce a wheel.  This case is also impossible.
  \item[Case 3:] $A_1, B_i, \ldots, B_j$ is a hole of $\M$ for some $1 \leq i,j \leq k$.  As in Case 2, $A_2$ intersects all the arcs of this new hole.  Thus, $A_2, A_1, B_i, \ldots, B_j$ induce a wheel, again a contradiction.
 \end{description}
 
 Since none of the cases can occur, it follows that $\M$ has no induced holes.  Thus, $G$ is hole-free and it contains no rising suns, $3$-suns, nor umbrellas, which implies that $G$ is an interval graph by Theorem~\ref{thm:lekkerkerkerboland}.  Consequently, $\M$ is equivalent to some interval model of $G$.
\end{proof}

The characterization by minimal forbidden induced subgraphs then follows easily.

\begin{corollary}\label{cor:caracterizacion-HCA->NHCA}
 An HCA graph is NHCA if and only if it contains no wheels, $3$-suns, rising suns, nor umbrellas as induced subgraphs.
\end{corollary}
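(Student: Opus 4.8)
The plan is to obtain this corollary as an essentially immediate consequence of the model-level statement in Theorem~\ref{thm:caracterizacion-HCA->NHCA}. The bridge between the two is the observation that, for any \emph{fixed} model $\M$ of $G$, the induced submodels of $\M$ are in one-to-one correspondence with the induced subgraphs of $G$: a subset $\A' \subseteq \A(\M)$ induces a submodel whose intersection graph is exactly the subgraph of $G$ induced by the corresponding vertices. Hence $\M$ contains one of the wheels, $3$-suns, rising suns, or umbrellas as an induced submodel if and only if $G$ contains that same graph as an induced subgraph. This is the only new ingredient, and it is routine; the rest is a direct translation of Theorem~\ref{thm:caracterizacion-HCA->NHCA} from models to graphs.

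For the ``if'' direction I would argue as follows. Assume $G$ is an HCA graph containing none of the forbidden graphs as an induced subgraph, and fix any HCA model $\M$ of $G$ (one exists since $G$ is HCA). By the correspondence above, $\M$ contains no wheel, $3$-sun, rising sun, nor umbrella as an induced submodel. Theorem~\ref{thm:caracterizacion-HCA->NHCA} then states that $\M$ is equivalent to an NHCA model; since equivalent models have isomorphic intersection graphs, $G$ admits an NHCA model, i.e., $G$ is NHCA.

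For the ``only if'' direction, assume $G$ is NHCA and fix an NHCA model $\M$ of $G$. Being NHCA, $\M$ is in particular an HCA model, and it is trivially equivalent to an NHCA model (namely itself), so the ``only if'' half of Theorem~\ref{thm:caracterizacion-HCA->NHCA} applies and tells us that $\M$ contains none of the forbidden graphs as an induced submodel. By the correspondence, $G$ contains none of them as an induced subgraph. I do not expect any genuine obstacle here: the entire content has already been established at the level of models in Theorem~\ref{thm:caracterizacion-HCA->NHCA}, and the only thing to be careful about is stating the submodel--subgraph correspondence cleanly so that both implications read off directly from that theorem.
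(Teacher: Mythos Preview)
Your proposal is correct and matches the paper's approach exactly: the paper states this corollary immediately after Theorem~\ref{thm:caracterizacion-HCA->NHCA} with only the remark that ``the characterization by minimal forbidden induced subgraphs then follows easily,'' and your write-up is precisely the spelling-out of that easy passage from the model-level statement to the graph-level one via the submodel--subgraph correspondence.
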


There is also a strong consequence for circular-arc models that can be used for negative certification and which is also useful for the characterization of NHCA graphs in terms of NCA graphs.

\begin{corollary}\label{cor:CA-models-of-NHCA}
 Every circular-arc model of a non-interval NHCA graph is NHCA.
\end{corollary}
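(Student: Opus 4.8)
The plan is to fix a circular-arc model $\M$ of $G$ and show that $\M$ has neither two nor three arcs covering $C(\M)$; by Corollary~\ref{cor:pca_no_hca} this is exactly what is needed, since a model with no two covering arcs is normal and a normal model with no three covering arcs is HCA, so that $\M$ would then be both normal and Helly, i.e.\ NHCA. First I would pin down the global obstruction inside $G$: because $G$ is NHCA it is in particular HCA, so Corollary~\ref{cor:caracterizacion-HCA->NHCA} guarantees that $G$ contains no wheel, no $3$-sun, no rising sun, and no umbrella; since $G$ is not an interval graph, Theorem~\ref{thm:lekkerkerkerboland} then forces $G$ to contain a hole. Let $B_1, \ldots, B_k$ be the arcs of $\M$ that induce such a hole. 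A short preliminary observation is that $B_1, \ldots, B_k$ cover $C(\M)$: otherwise some point $p$ would be uncovered, and cutting $C(\M)$ at $p$ would exhibit an interval model of the hole $B_1,\ldots,B_k$, which is impossible because a chordless cycle of length at least four is not an interval graph.

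For the case of two covering arcs I would reuse, essentially verbatim, the argument in the proof of Theorem~\ref{thm:caracterizacion-HCA->NHCA}. That argument is purely geometric and never uses that $\M$ is HCA: if $A_1, A_2$ cover $C(\M)$ (which we may take not to be contained in any other arc) and $\M$ also contains the hole $B_1, \ldots, B_k$, then the case analysis through Proposition~\ref{prop:holes in ca models} applied to $A_1$ shows that either $A_1$ or $A_2$ is adjacent to every arc of some (possibly new) hole, producing an induced wheel $W_j$. Since $G$ contains no wheel this is a contradiction, so no two arcs of $\M$ cover $C(\M)$ and $\M$ is normal.

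It remains to rule out three covering arcs, and this is the step I expect to be the main obstacle. Assume $A_1, A_2, A_3$ cover $C(\M)$; by normality no two of them cover, so they pairwise intersect with empty common intersection (a non-Helly triangle), and we may assume each $A_i$ is inclusion-maximal. I would feed each $A_i$ to Proposition~\ref{prop:holes in ca models} against the hole. Alternative~(i) of the proposition would give two covering arcs, contradicting normality; alternative~(ii), namely $A_i, B_p, B_{p+1}$ covering $C(\M)$, forces $A_i$ to contain $C(\M)\setminus(B_p\cup B_{p+1})$ and hence to meet every hole arc, creating a wheel, which is again impossible. Thus each $A_i$ must fall under alternative~(iii) or~(iv), and in both cases $A_i$ is disjoint from a nonempty run of consecutive hole arcs while meeting the remaining ones. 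The crux is then to show that three arcs, each missing such a run, cannot jointly cover $C(\M)$ without forcing one of the forbidden configurations: every hole arc must have its private portion covered by some $A_i$, and tracking the three disjointness runs around the circle against the six regions cut out by the triangle $A_1,A_2,A_3$ should either make some $A_i$ adjacent to a whole hole (a wheel) or expose short ``ear'' arcs sitting on single edges of the hole (an induced $3$-sun, umbrella, or rising sun).

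Having excluded both two and three covering arcs, $\M$ is normal by the second paragraph and Helly by Corollary~\ref{cor:pca_no_hca}, hence an NHCA model, which is the claim. Everything outside the last paragraph reduces cleanly to the already-established geometric lemmas; the real work lies in the delicate bookkeeping of matching the disjointness runs of $A_1, A_2, A_3$ to the cyclic positions of the hole extremes and certifying exactly which forbidden subgraph arises in each configuration.
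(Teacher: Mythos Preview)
Your approach differs genuinely from the paper's. The paper does not attempt a direct three-arc analysis on $\M$; instead it feeds $\M$ to the Lin--Szwarcfiter algorithm to obtain an equivalent HCA model $\M'$, argues (exactly as you do for the two-arc case, via the geometry of Theorem~\ref{thm:caracterizacion-HCA->NHCA}) that $\M'$ must already be NHCA because $G$ is not interval, and then exploits a structural property of that algorithm---every arc of $\M$ is contained in the corresponding arc of $\M'$---to conclude in one line that $\M$ itself has no two or three covering arcs. The three-arc bookkeeping you anticipate is thus bypassed entirely.

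Your two-arc case is correct, and your reduction of each $A_i$ in the three-arc case to alternatives (iii) or (iv) of Proposition~\ref{prop:holes in ca models} is sound. The genuine gap is the final paragraph, and your expectation about which forbidden subgraph surfaces is off: the right target is a \emph{wheel} alone, not a $3$-sun, umbrella, or rising sun. The correct completion (carried out later in the paper as the converse of Theorem~\ref{thm:caracterizacion-NCA->NHCA}) shows that one can iteratively absorb $A_1$ and then $A_2$ into the hole---each time replacing a run of $B_j$'s via alternative (iv)---to manufacture a new induced hole containing two of the three covering arcs; the remaining $A_i$ is then adjacent to every arc of that hole and yields a wheel. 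This is not a mechanical case split: a separate claim is needed to rule out the degenerate configuration in which both $A_1$ and $A_2$ fall under alternative (iii), and that claim is the most delicate step of Theorem~\ref{thm:caracterizacion-NCA->NHCA}. Since that theorem comes after the present corollary (and its forward direction even cites this corollary), you cannot simply invoke it; finishing your route means reproducing its core argument in place. The paper's proof trades all of that work for a single dependency on the algorithmic black box.
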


\begin{proof}
 Let $\M$ be any circular-arc model of a non-interval NHCA graph $G$.  Since $G$ is NHCA then we can apply the algorithm by \citet{LinSzwarcfiter2006} with input $\M$, to obtain an HCA model $\M'$ of $G$.  By definition, $\M'$ is equivalent to some NHCA model, because $G$ is NHCA.  If $\M'$ has two arcs that cover the circle, then we can use the same arguments as in Theorem~\ref{thm:caracterizacion-HCA->NHCA} to prove that $G$ is an interval graph, a contradiction.  Otherwise, $\M'$ is NHCA, \ie, there are not two nor three arcs that cover $C(\M')$.  

 The algorithm by \citet{LinSzwarcfiter2006} works in such a way that every arc of $\M$ is included in some arc of $\M'$.  Thus, in $\M$ there could not be two nor three arcs covering the circle, \ie, $\M$ is NHCA.
\end{proof}

Now we proceed with the characterization of NHCA graphs in terms of NCA graphs.  In this case, the forbidden induced subgraphs are the wheels, the $3$-sun, the rising suns, the umbrella, and the \emph{tent graph} $\overline{S_3}$ (see Figure~\ref{fig:forbidden-NHCA-normal-models}).  The proof follows the same scheme as before, we analyze the NCA models of NHCA graphs and the characterization is obtained as a corollary.

\begin{figure}
 \centering
 \hspace*{\stretch{2}} \includegraphics{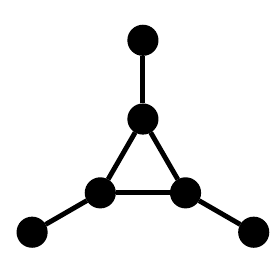}\hspace*{\stretch{1}} \includegraphics{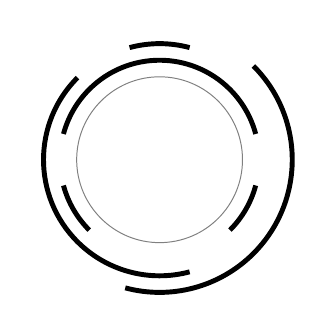} \hspace*{\stretch{2}}
 \caption{The tent graph and one of its NCA models}\label{fig:forbidden-NHCA-normal-models}
\end{figure}

\begin{theorem}[\citealp{LekkerkerkerBolandFM1962/1963}]\label{thm:lekkerkerkerboland2}
 An NCA graph is an interval graph if and only if it does not contain holes, $3$-suns, rising suns, umbrellas, nor tents as induced subgraphs.
\end{theorem}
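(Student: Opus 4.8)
The plan is to split into the two implications, treating the forward one as essentially free and concentrating on the converse. For the forward direction it suffices to check that none of the five graphs is itself an interval graph, since interval graphs are closed under induced subgraphs. Holes are not even chordal; the $3$-sun, the rising suns and the umbrella are HCA but not interval, hence non-interval by Theorem~\ref{thm:lekkerkerkerboland}; and the tent $\overline{S_3}$ is the net (a triangle with a pendant attached to each of its vertices), whose three degree-one vertices form an asteroidal triple, so it is not interval either. Thus an interval graph can contain none of them.

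For the converse, let $G$ be NCA with none of the five forbidden subgraphs, and suppose for contradiction that $G$ is not interval. Since $G$ has no hole, $3$-sun, rising sun nor umbrella, the contrapositive of Theorem~\ref{thm:lekkerkerkerboland} shows that $G$ cannot be HCA. Fix any NCA model $\M$ of $G$: a Helly $\M$ would certify that $G$ is HCA, so $\M$ is not HCA, and Corollary~\ref{cor:pca_no_hca} yields three arcs $A_1,A_2,A_3$ covering $C(\M)$. Normality forbids any two of them from covering the circle, which (together with the fact that the three do cover it) forces $A_1,A_2,A_3$ to intersect pairwise; and were they to share a common point $p$, their complements would be three pairwise-intersecting intervals of the line $C(\M)\setminus\{p\}$ with empty total intersection, which is impossible by Helly's theorem on the line. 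Hence the three arcs pairwise intersect and have no common point, so the circle decomposes into three pairwise-disjoint overlap regions $A_i\cap A_{i+1}$ and three private regions $X_i=A_i\setminus(A_{i+1}\cup A_{i+2})$, each $X_i$ lying in the interior of $A_i$.

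The key local observation is that an arc is adjacent to $A_i$ but to neither $A_{i+1}$ nor $A_{i+2}$ precisely when it is contained in $X_i$, and that an arc is adjacent to $A_i,A_{i+1}$ but not $A_{i+2}$ precisely when it lies in the overlap $A_i\cap A_{i+1}$. So the plan is to read off a forbidden subgraph from the arcs surrounding the covering triple. If every private region $X_i$ contains a pendant arc $P_i$, then $P_1,P_2,P_3$ are pairwise non-adjacent (they inhabit pairwise-disjoint regions) and $\{A_1,A_2,A_3,P_1,P_2,P_3\}$ induces the tent $\overline{S_3}$, a contradiction; symmetrically, an arc private to each of the three overlaps produces an induced $3$-sun. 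The remaining configurations (some private region empty, arcs spanning several consecutive regions) should be driven, via an analysis of spanning arcs supported by Proposition~\ref{prop:holes in ca models}, to a rising sun, an umbrella, or an induced hole. In every case we contradict the hypothesis, so $G$ is HCA after all; having no hole, $3$-sun, rising sun nor umbrella, it is interval by Theorem~\ref{thm:lekkerkerkerboland}.

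The parts I expect to be routine are the forward direction and the extraction of the tent from three occupied private regions; indeed the tent is exactly the new obstruction that separates this NCA list from the HCA list of Theorem~\ref{thm:lekkerkerkerboland} (note the $3$-sun is NCA, placing a private arc in each overlap of a covering triple, yet it is HCA via a different, Helly model). The hard part will be showing that when some private region is \emph{unoccupied} the model still cannot evade all five patterns — equivalently, that an NCA graph without a tent always admits a Helly (hence NHCA) model. The genuine difficulty is that non-Helliness is a global property of $G$ while the covering triple is local, so I would first pass to an extremal model (for instance one minimizing the number of covering triples) or to a minimal non-HCA induced subgraph, and then control the arcs spanning two or more of the six regions so that the casework stays finite and each surviving configuration is pinned to one of holes, $3$-suns, rising suns, umbrellas, or tents.
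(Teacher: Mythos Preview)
The paper does not prove this statement at all: it is cited as a known consequence of the full Lekkerkerker--Boland characterization of interval graphs. That characterization lists \emph{all} minimal forbidden induced subgraphs for interval graphs; Theorem~\ref{thm:lekkerkerkerboland2} is obtained simply by noting that, among those forbidden subgraphs, the ones that happen to be NCA are exactly the holes, $3$-suns, rising suns, umbrellas, and tents (the remaining Lekkerkerker--Boland obstructions, such as the bipartite claw, are not circular-arc graphs at all). An NCA graph avoiding these five families therefore avoids every Lekkerkerker--Boland obstruction and is interval. Theorem~\ref{thm:lekkerkerkerboland} is the same observation restricted to HCA graphs, where the tent drops out because it is not HCA.

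Your route is far more ambitious: you attempt to reprove this piece of Lekkerkerker--Boland from scratch via a model-based analysis of a covering triple. The one case you actually carry out --- three arcs lying in the three private regions yield an induced tent --- is correct and is indeed the reason the tent appears on the NCA list but not on the HCA list. However, everything after that is only a plan: you write that the remaining configurations ``should be driven'' to a forbidden subgraph and openly flag the case of an unoccupied private region as ``the hard part'', proposing to pass to an extremal model or a minimal non-HCA subgraph without executing either. That is a genuine gap; there is no reason to expect the residual casework to be short, and you have not shown it terminates in one of the five families.

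There is also a mild inconsistency in what you allow yourself. You invoke Theorem~\ref{thm:lekkerkerkerboland} (the HCA version) as a black box, but in the paper that theorem is equally just a citation to Lekkerkerker--Boland. If you are willing to use that much of their result, you may as well use the full characterization, after which Theorem~\ref{thm:lekkerkerkerboland2} is the one-line observation above and your covering-triple analysis becomes unnecessary.
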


\begin{theorem}\label{thm:caracterizacion-NCA->NHCA}
 Let $\M$ be an NCA model of a graph $G$.  Then, $\M$ is equivalent to an NHCA model if and only if $\M$ contains no wheels, $3$-suns, rising suns, umbrellas, nor tents, as induced submodels.
\end{theorem}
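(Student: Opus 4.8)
The plan is to follow the proof of Theorem~\ref{thm:caracterizacion-HCA->NHCA} almost verbatim, replacing the two covering arcs that were available there by the three covering arcs that a non-Helly NCA model must contain, and to identify the tent $\overline{S_3}$ as precisely the obstruction that the Helly hypothesis had silently removed in the previous proof.

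For the forward implication, suppose $\M$ is equivalent to an NHCA model, so $G$ is an NHCA graph and in particular an HCA graph. By Corollary~\ref{cor:caracterizacion-HCA->NHCA}, $G$ contains no wheel, $3$-sun, rising sun, nor umbrella as an induced subgraph, hence $\M$ contains none of these as induced submodels. To rule out the tent I would show that $\overline{S_3}$ is not even an HCA graph. Its cliques are the outer triangle $Q_1=\{d,e,f\}$ and the three edges $Q_2,Q_3,Q_4$ that join each outer vertex to its single inner neighbour, and $Q_1$ shares exactly one vertex with each of $Q_2,Q_3,Q_4$. By Theorem~\ref{thm:caracterizacion HCA por orden de vertices} an HCA graph admits a circular ordering of its cliques for which the cliques through any fixed vertex form a range; this would force $Q_1$ to be circularly consecutive with each of $Q_2,Q_3,Q_4$, which is impossible for four cliques arranged on a circle. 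Thus $\overline{S_3}$ is not HCA, so it is not an induced subgraph of $G$, and $\M$ has no tent submodel.

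For the converse, assume $\M$ is an NCA model that contains none of the five submodels. If $\M$ is Helly it is at once normal and Helly, hence an NHCA model, and there is nothing to prove. Otherwise $\M$ is not Helly; being normal, Corollary~\ref{cor:pca_no_hca} supplies three arcs $A_1,A_2,A_3$ that together cover $C(\M)$ although no two of them do, so $A_1,A_2,A_3$ pairwise intersect yet share no common point. I would then prove that $G$ has no induced hole. Granting this, $G$ is an NCA graph free of holes, $3$-suns, rising suns, umbrellas, and tents, so $G$ is an interval graph by Theorem~\ref{thm:lekkerkerkerboland2}, and consequently $\M$ is equivalent to an interval model, which is NHCA. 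This is exactly where the tent hypothesis is consumed: for HCA models it was vacuous, since a tent is not HCA, but an NCA model may contain one, so the normal form of Lekkerkerker and Boland's theorem is needed in place of Theorem~\ref{thm:lekkerkerkerboland}.

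The crux, and the step I expect to be the main obstacle, is ruling out a hole $B_1,\dots,B_k$ in the presence of the covering triple. Choosing a triple arc $A$ that is contained in no other arc and applying Proposition~\ref{prop:holes in ca models} to $A$ and the hole, normality discards alternative~(i) outright, so only alternatives~(ii)--(iv) remain. The genuine difficulty is that in Theorem~\ref{thm:caracterizacion-HCA->NHCA} the wheel hub was a single arc $A_2$ with $A_2\supseteq C(\M)\setminus A_1$, whereas now $C(\M)\setminus A_1$ is covered only by the two arcs $A_2\cup A_3$, neither of which complements $A_1$ by itself. The plan is to track the three private regions $A_i\setminus(A_{i+1}\cup A_{i+2})$ and to show that, after Proposition~\ref{prop:holes in ca models} shortens the hole if necessary, one of the triple arcs still meets every arc of the resulting hole and so acts as the hub of an induced wheel, contradicting the hypothesis. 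The mechanism is already visible on a four-hole: the triple together with four hole arcs realises $\overline{P_7}$, where the third covering arc is adjacent to the other two and to two hole arcs that induce a $C_4$, yielding an induced $W_4$. Establishing this hub for every position of the triple relative to an arbitrary hole is the work that remains.
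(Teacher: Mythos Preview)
Your overall architecture matches the paper's: forward implication via the known forbidden subgraphs, converse by assuming a non-Helly NCA model, extracting a covering triple $A_1,A_2,A_3$ from Corollary~\ref{cor:pca_no_hca}, showing $G$ has no hole, and then invoking Theorem~\ref{thm:lekkerkerkerboland2}. Your forward direction is fine; the paper does it more uniformly by citing Corollary~\ref{cor:CA-models-of-NHCA} for all five graphs at once (each is non-interval and admits a non-NHCA model), but your separate argument that the tent is not HCA is correct.

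The gap is in the hole-elimination step, which you explicitly leave as ``work that remains.'' Your stated plan---apply Proposition~\ref{prop:holes in ca models} to one covering arc and hope that afterwards some $A_j$ is adjacent to every arc of the shortened hole---is not what the paper does, and it is not clear it can be made to work directly: after one application of Proposition~\ref{prop:holes in ca models} you may land in case~(iii) or~(iv), and in neither case is there an obvious reason why one of the remaining two covering arcs should meet \emph{every} arc of the resulting hole. The paper's key device is different. First it records (Claim~1) that no $A_j$ together with two consecutive hole arcs covers the circle, which kills alternative~(ii) of Proposition~\ref{prop:holes in ca models}. Then (Claim~2) it proves that for \emph{any} pair among $A_1,A_2,A_3$ one can manufacture a hole inside $\MH\cup\{A_1,A_2\}$ that actually contains one of them; the proof of this claim is a short but non-obvious case analysis producing a $W_4$ if it fails. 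Iterating Claim~2 twice yields a hole containing (essentially) both $A_1$ and $A_2$, and only then does the third arc $A_3$ become universal to that hole, giving the wheel. The missing idea in your outline is precisely this iterated incorporation of covering arcs into the hole; without it, the argument does not close.
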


\begin{proof}
 Wheels, $3$-suns, rising suns, umbrellas, and tents admit circular-arc models that are not NHCA (see Figures \ref{fig:forbidden-NHCA-helly-models}~and~\ref{fig:forbidden-NHCA-normal-models}).  Hence they are not NHCA, by Corollary~\ref{cor:CA-models-of-NHCA}.

 The converse is somehow similar to the converse of Theorem~\ref{thm:caracterizacion-HCA->NHCA}, but it needs a few tweaks. Suppose that $\M$ contains none of the forbidden submodels and, yet, $\M$ has three arcs $A_1, A_2$ and $A_3$ that cover the circle.  We may assume that none of these three arcs is contained in any other arc because $\M$ has no two arcs that cover the circle.  To obtain a contradiction, suppose that $\M$ has $k$ arcs $B_1, \ldots, B_k$ that induce a hole $\MH$ in this order.
 
 \begin{description}
  \item[Claim 1:] $A_j, B_i, B_{i+1}$ do not cover the circle, for $1 \leq i \leq k$ and $1 \leq j \leq 3$.  Otherwise, $A_j$ intersects all the arcs of $\MH$, \ie, $\MH \cup \{A_j\}$ induces a wheel.
  
  \item[Claim 2:] There is a hole $\MH' \subset \MH \cup \{A_1, A_2\}$ that contains at least one of $A_1$ and $A_2$.  Suppose, to obtain a contradiction, that this is not the case.  Then, by Proposition~\ref{prop:holes in ca models} and Claim 1, it follows that $A_1 \subset (B_{i} \cup B_{i+1}) \setminus (B_{i+2}, \ldots, B_{i-1})$ and that $A_2 \subset (B_{j} \cup B_{j+1}) \setminus (B_{j+2}, \ldots, B_{j-1})$, for $1 \leq i, j \leq n$.  By Claim~1, $A_3$ does not cover the circle with $B_{i}$ and $B_{i+1}$, thus $i \neq j$.  So, either $t(A_1) \in B_{i+1} \cap A_2$ and $j = i+1$ or $t(A_2) \in B_{i} \cap A_1$ and $i = j+1$.  Assume the former w.l.o.g., thus $A_2 \subset (B_{i+1} \cup B_{i+2}) \setminus (B_{i+2}, \ldots, B_{i})$ (see Figure~\ref{fig:thm:caracterizacion-NCA->NHCA}).  Now, consider the position of arc $A_3$.  Since $A_1, A_2, A_3$ cover the circle, then $A_3$ crosses both $t(B_{i+2})$ and $s(B_{i})$.  By Claim 1, $A_3$ crosses neither $s(B_{i+1})$ nor $t(B_{i+1})$, thus $A_1, B_{i+1}, A_2, A_3, B_i$ induce a wheel where $A_1$ is the universal arc.
 \end{description} 
 
 \begin{figure}
  \centering
  \includegraphics{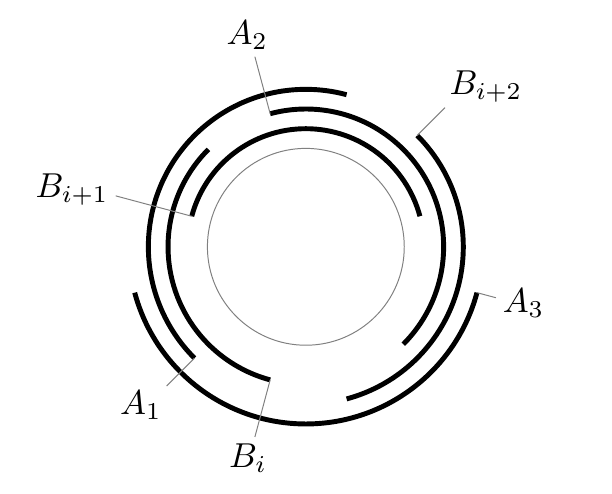}
  \caption[Claim 2 of Theorem~\ref{thm:caracterizacion-NCA->NHCA}]{Claim 2 of Theorem~\ref{thm:caracterizacion-NCA->NHCA}, $A_1, B_{i+1}, A_2, A_3, B_i$ induce a wheel.}\label{fig:thm:caracterizacion-NCA->NHCA}
 \end{figure}
 
 By Claim 2, there is a hole $\MH' \subset \MH \cup \{A_1, A_2\}$ that contains at least one of $A_1, A_2$, say $A_1$.  If we exchange $\MH$ with $\MH'$ and $A_1$ with $A_3$ in Claim 2, we obtain that there is a hole $\MH'' \subset \MH' \cup \{A_2, A_3\}$ that, w.l.o.g., contains $A_2$ as well.  If $A_1$ is not an arc of $\MH''$, it is because $A_1$ is covered by two arcs of the hole, one of which is $A_2$.  In this case, call $B$ to the arc of $\MH''$ that together with $A_2$ covers $A_1$.  Otherwise, call $B = A_1$.  Summing up, $\MH'' = B, A_2, B_{i}, \ldots, B_j$, for some pair $1 \leq i,j \leq k$.  Since $A_1, A_2, A_3$ cover the circle, then $B, A_2, A_3$ cover the circle, thus $A_3$ must intersect all the arcs of $\MH''$, a contradiction.  This contradiction appears because we assume that there is a hole in $\M$.  Therefore, $G$ contains no holes, $3$-suns, umbrellas, rising suns, nor tents as induced subgraphs.  This implies that $G$ admits an interval model equivalent to $\M$ by Theorem~\ref{thm:lekkerkerkerboland2}.
\end{proof}

\begin{corollary}\label{cor:caracterizacion-NCA->NHCA}
 An NCA graph is NHCA if and only if it contains no wheels, $3$-suns, rising suns, umbrellas, nor tents as induced subgraphs.
\end{corollary}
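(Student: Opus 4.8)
The plan is to derive this corollary directly from Theorem~\ref{thm:caracterizacion-NCA->NHCA}, in exactly the way Corollary~\ref{cor:caracterizacion-HCA->NHCA} is derived from Theorem~\ref{thm:caracterizacion-HCA->NHCA}. I would prove the two implications separately. The point is that the genuine work---analysing how three covering arcs interact with a hole---has already been carried out for a fixed model in the theorem, so the corollary only has to pass from a fixed model to the graph.

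For the forward implication I would argue that an NHCA graph omits all of the listed subgraphs. First I would recall, as in the opening paragraph of the proof of Theorem~\ref{thm:caracterizacion-NCA->NHCA}, that each of the wheels, the $3$-sun, the rising suns, the umbrella, and the tent is a non-interval graph admitting a circular-arc model that is not NHCA, so by Corollary~\ref{cor:CA-models-of-NHCA} none of them is an NHCA graph. Then I would use that the class of NHCA graphs is hereditary: deleting arcs from a model cannot create two or three arcs that together cover the circle, so every induced submodel of an NHCA model is again NHCA. Consequently no NHCA graph can contain any of these forbidden graphs as an induced subgraph.

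For the converse I would start from an NCA graph $G$ that contains none of the forbidden graphs as an induced subgraph, and fix any NCA model $\M$ of $G$, which exists precisely because $G$ is an NCA graph. Since an induced submodel of $\M$ realizes an induced subgraph of $G$, the hypothesis on $G$ guarantees that $\M$ contains no wheel, $3$-sun, rising sun, umbrella, nor tent as an induced submodel. Theorem~\ref{thm:caracterizacion-NCA->NHCA} then tells me that $\M$ is equivalent to an NHCA model, so $G$ is an NHCA graph, completing the proof.

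There is essentially no obstacle here, which is precisely why the statement is phrased as a corollary. The only two points requiring care are bookkeeping rather than mathematics: making sure to invoke that $G$ being an NCA graph supplies the NCA model $\M$ to feed into the theorem, and being explicit about the translation between ``$G$ has no forbidden induced subgraph'' and ``$\M$ has no forbidden induced submodel,'' together with the (routine) heredity of the NHCA class used in the forward direction.
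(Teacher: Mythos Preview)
Your proposal is correct and matches the paper's approach exactly: the paper states this corollary without proof, indicating it is an immediate consequence of Theorem~\ref{thm:caracterizacion-NCA->NHCA}, and your two implications are precisely the routine translation between the model-level statement and the graph-level one. The only ingredients you invoke---heredity of the NHCA class, Corollary~\ref{cor:CA-models-of-NHCA} for the forward direction, and the passage from induced subgraphs to induced submodels---are all already established in the paper before this point.
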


The proofs of Theorems \ref{thm:caracterizacion-HCA->NHCA}~and~\ref{thm:caracterizacion-NCA->NHCA} can be combined so as to obtain a characterization of when does a circular-arc graph admit an NHCA model.  We are not going to prove the theorem to avoid repetitions, instead we just give a sketch of the proof.

\begin{theorem}\label{thm:caracterizacion-CA->NHCA}
 A circular-arc graph is NHCA if and only if it contains no wheels, $3$-suns, rising suns, umbrellas, nor tents as induced subgraphs.
\end{theorem}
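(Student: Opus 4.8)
The plan is to combine the two previous characterizations by splitting on how the arcs of a fixed model cover the circle, rather than by repeating the hole-by-hole analysis twice. Throughout, the forbidden graphs are wheels, $3$-suns, rising suns, umbrellas, and tents.

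For the forward implication I would argue by heredity. Since removing arcs from an NHCA model can never create a pair or a triple of arcs covering the circle, the submodel induced by any subset of arcs is again NHCA; hence the class of NHCA graphs is closed under induced subgraphs. It therefore suffices to check that none of the five forbidden graphs is itself NHCA. The wheels, the $3$-sun, the rising suns and the umbrella are HCA (Figure~\ref{fig:forbidden-NHCA-helly-models}) and each contains itself, so Corollary~\ref{cor:caracterizacion-HCA->NHCA} shows they are not NHCA; the tent is NCA (Figure~\ref{fig:forbidden-NHCA-normal-models}) and contains itself, so Corollary~\ref{cor:caracterizacion-NCA->NHCA} shows it is not NHCA. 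Consequently no NHCA graph can contain any of them as an induced subgraph.

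For the converse, let $G$ be a circular-arc graph with no forbidden induced subgraph and fix a circular-arc model $\M$ of $G$; since any forbidden induced subgraph of $G$ would appear as a forbidden submodel of $\M$, the model $\M$ contains none of them. I would then distinguish two cases according to whether $\M$ has two arcs covering $C(\M)$. If no two arcs of $\M$ cover $C(\M)$, then $\M$ is by definition an NCA model, and Theorem~\ref{thm:caracterizacion-NCA->NHCA} applies directly to give that $\M$ is equivalent to an NHCA model, so $G$ is NHCA; note that the subcase where three arcs cover the circle is precisely what that theorem already handles. If instead two arcs $A_1, A_2$ cover $C(\M)$, which we may take to be maximal (contained in no other arc), then I would run the argument of the converse of Theorem~\ref{thm:caracterizacion-HCA->NHCA} verbatim: assuming a hole $B_1,\dots,B_k$ of $\M$, the case analysis driven by Proposition~\ref{prop:holes in ca models} produces a wheel on $\{A_1\}\cup\{B_1,\dots,B_k\}$, on $\{A_2\}\cup\{B_1,\dots,B_k\}$, or on $A_2$ together with a shorter hole containing $A_1$. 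The crucial point is that this argument never uses the Helly property: it only uses that $A_1$ is maximal (so Proposition~\ref{prop:holes in ca models} applies) and that $A_1, A_2$ cover the circle (so that $A_2$ meets every arc of the relevant hole). Hence it transfers unchanged to an arbitrary circular-arc model, and we conclude that $\M$ has no hole, that is, $G$ is chordal.

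It then remains, in this second case, to deduce that $G$ is an interval graph, whence $G$ is NHCA since every interval graph is NHCA. Here $G$ is chordal and contains no $3$-sun, rising sun, umbrella, nor tent, and I would close with the Lekkerkerker--Boland theorem. This last step is the delicate one, and I expect it to be the main obstacle: the restricted forms available to us, Theorems~\ref{thm:lekkerkerkerboland} and~\ref{thm:lekkerkerkerboland2}, presuppose that $G$ is HCA, respectively NCA, whereas a model with two covering arcs is in general neither. The cleanest way around this is the observation that the minimal non-interval graphs which are circular-arc are exactly the holes together with the $3$-sun, the rising suns, the umbrella and the tent; in other words, merging the HCA list of Theorem~\ref{thm:lekkerkerkerboland} with the NCA list of Theorem~\ref{thm:lekkerkerkerboland2} already exhausts the circular-arc Lekkerkerker--Boland obstructions, because every remaining obstruction fails to be a circular-arc graph. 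Granting this, a chordal circular-arc graph avoiding those four chordal obstructions contains no Lekkerkerker--Boland obstruction at all, hence is interval, completing the proof. Confirming precisely this merging step, namely that no circular-arc minimal non-interval graph escapes the combined list, is the point that requires the most care, and is presumably why the authors prefer to reuse the two earlier proofs instead of invoking a general Lekkerkerker--Boland characterization.
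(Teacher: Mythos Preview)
Your approach is essentially the paper's: split on whether two arcs cover $C(\M)$, reduce the normal case to Theorem~\ref{thm:caracterizacion-NCA->NHCA}, and in the two-covering case run the wheel argument from Theorem~\ref{thm:caracterizacion-HCA->NHCA} to force chordality, then finish with Lekkerkerker--Boland.

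There is one small gap in your transfer of the wheel argument. You assert that the proof of Theorem~\ref{thm:caracterizacion-HCA->NHCA} ``never uses the Helly property'' and ``transfers unchanged''. It does use it, implicitly: the three cases treated there correspond to alternatives (i), (iii), (iv) of Proposition~\ref{prop:holes in ca models}, and alternative~(ii) --- $A_1, B_i, B_{i+1}$ together cover $C(\M)$ --- is silently discarded because in an HCA model three covering arcs always contain two covering arcs (Theorem~\ref{thm:modelo-hca}(i)), so (ii) collapses into (i). For an arbitrary circular-arc model this reduction is unavailable and case~(ii) must be handled on its own. The paper does exactly this: if $A_1, B_i, B_{i+1}$ cover the circle then $A_1$ meets every $B_j$, so $\{A_1, B_1, \dots, B_k\}$ already induces a wheel. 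Once you add this one-line observation, your argument and the paper's coincide.

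On the Lekkerkerker--Boland step you are right that it is the subtle point, but note that the paper does not avoid it either: its ``as before, $\M$ is an interval model'' is precisely the same appeal you make, relying on the fact that the chordal Lekkerkerker--Boland obstructions that are circular-arc graphs are exhausted by the $3$-sun, the rising suns, the umbrella, and the tent. So your closing remark that the authors ``prefer to reuse the two earlier proofs instead of invoking a general Lekkerkerker--Boland characterization'' slightly mischaracterizes their proof; both routes lean on the same merging step, and the paper is simply terser about it.
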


\begin{proof}
 Wheels, $3$-suns, rising suns, umbrellas and tents are not NHCA by Theorem~\ref{thm:caracterizacion-NCA->NHCA}.
 
 For the converse, suppose that there are two arcs $A_1, A_2$ that cover a circular-arc model $\M$, and that $\M$ has an induced hole $B_1, \ldots, B_k$.  Then, by Proposition~\ref{prop:holes in ca models}, we can either find a wheel in $\M$ as in Theorem~\ref{thm:caracterizacion-HCA->NHCA} or $A_1, B_i, B_{i+1}$ cover the circle for some $1 \leq i \leq k$.  In this last case, $A_1$ is universal to all the arcs of the hole which also implies that $\M$ contains an induced wheel.  Then, as before, $\M$ is an interval model or $\M$ is NCA.  If $\M$ is NCA, then the result follows from Theorem~\ref{thm:caracterizacion-NCA->NHCA}. 
\end{proof}

\subsection{Proper Helly circular-arc graphs}
\label{sec:characterizations:phca}

Up to this point we have characterized which circular-arc (resp.\ HCA, NCA) graphs admit an NHCA model and which NHCA graphs admit a PHCA model.  In this section we characterize which PCA graphs are also PHCA.  For the proof we may use the same arguments of Theorem~\ref{thm:caracterizacion-HCA->NHCA} to show that every PCA model of a non-interval PHCA graph is in fact a PHCA model.  This would yield an elegant short proof.  But instead, we prefer to do a constructive proof that shows how can a PCA model of an interval graph be transformed into a proper interval model.  As we will see in Section~\ref{sec:algorithms:PHCA}, this proof yields an $O(n)$ time algorithm to transform a PCA model into a PHCA model.   We remark that this proof is the same that appeared in~\citep{LinSoulignacSzwarcfiter2008b}, with some minor corrections.

\begin{theorem} \label{thm:caracterizacion-PCA->PHCA}
Let $\M$ be a PCA model of a graph $G$. Then the following are equivalent:
\begin{description}
\item{(i)} $\M$ is equivalent to a PHCA model.
\item{(ii)} $\M$ contains no induced submodel of\/ $W_4$ and $S_3$.
\item{(iii)} $U_1(\M)$ is HCA or $U_0(\M)$ is a PIG model.
\end{description}
\end{theorem}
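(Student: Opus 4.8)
The plan is to prove the cycle $(i)\Rightarrow(ii)\Rightarrow(iii)\Rightarrow(i)$, bearing in mind two reductions: condition $(i)$ says nothing more than ``$G$ is a PHCA graph'' (equivalence of models only means isomorphic intersection graphs), and since every induced submodel of $\M$ represents an induced subgraph of $G$, condition $(ii)$ is equivalent to ``$G$ has no induced $W_4$ and no induced $S_3$''. With this in hand, $(i)\Rightarrow(ii)$ is immediate: by Theorem~\ref{thm:NHCA->no2y3cubrenCirculo} every PHCA graph is NHCA, and both $W_4$ and $S_3$ appear among the forbidden subgraphs of Corollary~\ref{cor:caracterizacion-NCA->NHCA}, so a PHCA graph $G$ can contain neither.

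For $(iii)\Rightarrow(i)$ I would treat the two disjuncts separately, as both give clean constructions. If $U_1(\M)$ is HCA, then it is proper with at most one universal arc, hence has no two arcs covering the circle by Lemma~\ref{lem:two covering arcs PCA}, so it is normal; a normal proper Helly model is an NPHCA (thus PHCA) model, and re-inserting the deleted universal arcs (which are mutual twins) by repeatedly duplicating the surviving universal arc preserves the PHCA property by Lemma~\ref{lem:no_mellizos modelo} and Corollary~\ref{cor:no_mellizos grafo}. If instead $U_0(\M)$ is a PIG model, then $G\setminus U$ is a proper interval graph, so $G$ is an interval graph (re-add each universal vertex as a spanning interval); since $\M$ witnesses that $G$ is PCA, $G$ has no induced $K_{1,3}$, and Roberts' Theorem~\ref{thm:roberts} yields that $G$ is a proper interval graph, hence PHCA. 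In both cases $(i)$ follows.

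The hard direction is $(ii)\Rightarrow(iii)$, where one must reason about the particular model $\M$ and not merely about $G$. Assuming $(ii)$, suppose $U_0(\M)$ is not a PIG model; being proper, this means its arcs cover the circle. I would first isolate the key structural claim that \emph{a proper, normal, universal-arc-free model whose arcs cover the circle contains an induced hole or an induced $S_3$}. If such a model is HCA it is NHCA, so by Theorem~\ref{thm:NHCA->no2y3cubrenCirculo} no three of its arcs cover the circle; a minimal cover then has at least four arcs and, since no arc contains another, its non-consecutive members are disjoint, exhibiting an induced hole. If it is not HCA, Corollary~\ref{cor:pca_no_hca} supplies three arcs $A_1,A_2,A_3$ covering it; each $A_i$ misses some arc $D_i$ which, being confined to $A_{i+1}\cup A_{i+2}$ but contained in neither, is adjacent to $A_{i+1}$ and $A_{i+2}$ and to no other $A_j$, so a pairwise non-adjacent choice of $D_1,D_2,D_3$ completes an induced $S_3$, while an overlap among the $D_i$ exposes an induced hole. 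Granting the claim, $(ii)$ excludes $S_3$, so $G\setminus U$ has a hole $\mathcal{H}$. It then remains to show $U_1(\M)$ is HCA: were it not, it is normal (Lemma~\ref{lem:two covering arcs PCA}) so Corollary~\ref{cor:pca_no_hca} gives a covering triple, and because $\mathcal{H}$ survives in $U_1(\M)$ the covering-triple-versus-hole analysis of Theorem~\ref{thm:caracterizacion-NCA->NHCA}, via Proposition~\ref{prop:holes in ca models}, manufactures an induced $W_4$, contradicting $(ii)$.

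The main obstacle is the structural claim inside $(ii)\Rightarrow(iii)$, specifically its non-Helly subcase: one must show that either the missed arcs $D_1,D_2,D_3$ can be chosen to induce, together with $A_1,A_2,A_3$, a genuine $3$-sun, or else their mutual intersections can be routed into an induced hole. The bookkeeping of which $A_j$ each $D_i$ meets, verifying that the six arcs are distinct, and ruling out stray adjacencies so that the induced subgraph is exactly $S_3$, is the delicate part; everything else reduces to reusing the covering-triple machinery of Proposition~\ref{prop:holes in ca models} and the proof of Theorem~\ref{thm:caracterizacion-NCA->NHCA}.
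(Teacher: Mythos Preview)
Your implications $(i)\Rightarrow(ii)$ and $(iii)\Rightarrow(i)$ are correct, and your handling of the second disjunct of $(iii)$ via Roberts' theorem is actually cleaner than the paper's explicit construction (the paper builds a concrete PIG model by replacing the surviving universal arc $A$ with $\overline{A}$, which it needs later for the $O(n)$ algorithm in Section~\ref{sec:algorithms:PHCA}, but for the bare implication your argument suffices).

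Your route through $(ii)\Rightarrow(iii)$ is genuinely different from the paper's and contains a real gap. The paper never passes through a hole: it works directly inside $U_1(\M)$, takes a covering triple $A_1,A_2,A_3$, and argues locally that either the three non-universal witnesses $B_i$ already assemble into $S_3$ or $W_4$, or one $A_i$ is the universal arc and then a short case analysis of the extremes lying inside that arc (the three cases with $L,R$) shows $U_0(\M)$ is PIG, else again $W_4$. Every forbidden configuration the paper exhibits involves at most six arcs, so it only ever outputs $W_4$ or $S_3$.

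Your argument instead first locates an induced hole $\mathcal H$ in $U_0(\M)$ and then plays a covering triple in $U_1(\M)$ against $\mathcal H$ via the machinery of Theorem~\ref{thm:caracterizacion-NCA->NHCA}. But that machinery yields a wheel $W_{|\mathcal H'|}$ for some hole $\mathcal H'$ built from $\mathcal H$ and the $A_i$, and nothing in your setup forces $|\mathcal H'|=4$. Concretely, in the HCA branch of your structural claim the minimal cover of $U_0(\M)$ can perfectly well be a $C_5$ (take any NPHCA model of $C_5$); if you then restore a universal arc and $U_1(\M)$ were not HCA, your argument would output $W_5$, which $(ii)$ does not forbid. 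The theorem survives because $W_5$ is not a PCA graph at all, but that is precisely the missing ingredient: for $k\ge 6$ the wheel $W_k$ contains $K_{1,3}$, yet $W_5$ is $K_{1,3}$-free and excluding it from PCA requires a separate argument you neither cite nor sketch. So the step ``manufactures an induced $W_4$'' is where your proof actually breaks, not the $S_3$ bookkeeping you flagged as the delicate part.
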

\begin{proof}\mbox{}

$(i) \Longrightarrow (ii)$: neither $W_4$ nor $S_3$ are NHCA graphs by Theorem~\ref{thm:caracterizacion-HCA->NHCA}, thus $\M$ cannot have induced submodels of them.

$(ii) \Longrightarrow (iii)$: let $\M$ be a PCA model, containing no induced submodels of $W_4$ and $S_3$. By Lemma~\ref{lem:two covering arcs PCA}, $\M_1 = U_1(\M)$ is an NPCA model.  If $\M_1$ is not an HCA model, Corollary~\ref{cor:pca_no_hca} implies that $\M_1$ contains three arcs $A_1, A_2, A_3$ covering $C(\M_1)$.  No two arcs cover $C(\M_1)$, thus we may assume that $s(A_1), t(A_3), s(A_2), t(A_1), s(A_3), t(A_2)$ appear in this order in a traversal of $C(\M_1)$.  First, we prove that one of the above three arcs must be universal.  Suppose the contrary.  Then, there exist arcs $B_1, B_2$ and $B_3$ such that $B_i$ does not intersect $A_i$, for $i \in \{1,2,3\}$.  However, since $\M_1$ is a proper model, it follows that $B_i$ intersects $A_j, A_k$ for $\{j, k\} = \{1,2,3\} \setminus \{i\}$. The latter leads to a contradiction because the intersection graph of $\{A_i, B_i\}_{i \in \{1,2,3\}}$ is isomorphic to $S_3$, when $B_1, B_2, B_3$ are pairwise disjoint, or it contains an induced $W_4$. Consequently, one of $A_1, A_2, A_3$, say $A_1$, is the unique universal arc of $\M_1$.

Next, we examine the arc $A_1$ in $\M_1$.  We will prove that there is no pair of arcs $L, R$ such that $t(L) \in A_1$, $s(R) \in A_1$, and $s(R)$ precedes $t(L)$ in a traversal of $C(\M)$ from $s(A_1)$. To obtain a contradiction for this fact, assume the contrary and discuss the following alternatives.

\begin{description}
 \item [Case 1:] $L = A_3$.  In this situation, $R, A_2, A_3$ are three arcs covering $C$. Because $A_1$ is the unique universal arc of $\M_1$, we know that $R, A_2, A_3$ are not universal. Consequently, as above, $\M_1$ contains a submodel of $W_4$ or $S_3$, a contradiction (Figure~\ref{fig:thm_2}(a)).

 \item [Case 2:] $R = A_2$.  Similar to Case~1.

 \item [Case 3:] $L \neq A_3$ and $R \neq A_2$.  By Cases 1~and~2, above, it suffices to examine the situation where $s(R), t(L) \in (t(A_3), s(A_2))$. Suppose $L \cap A_2 = R \cap A_3 = \emptyset$. In this case, the arcs $A_1, A_2, A_3, L, R$ form a forbidden $W_4$, impossible (Figure~\ref{fig:thm_2}(b)). Alternatively, let $L \cap A_2 \neq \emptyset$. Then the arcs $A_2, L, R$ cover the circle and none of them is the universal arc $A_1$, an impossibility (Figure~\ref{fig:thm_2}(c)). The situation $R \cap A_3 \neq \emptyset$ is similar.
\end{description}

\begin{figure}
 \centering
 \begin{tabular*}{\textwidth}{@{\extracolsep{\fill}}*3c}
  \includegraphics{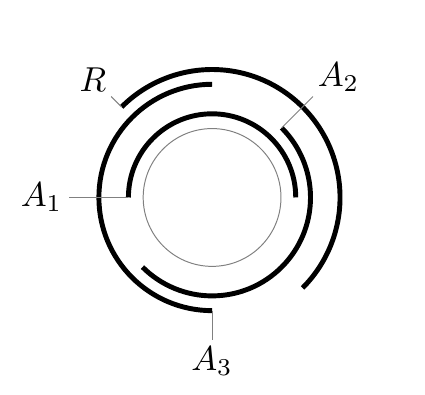} & \includegraphics{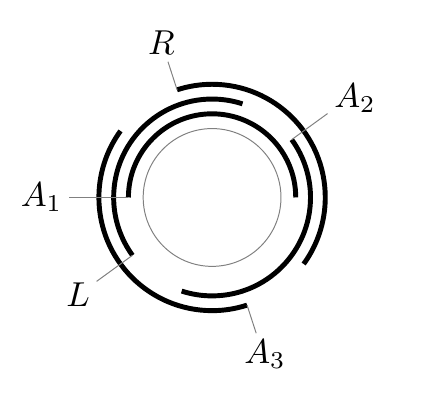} & \includegraphics{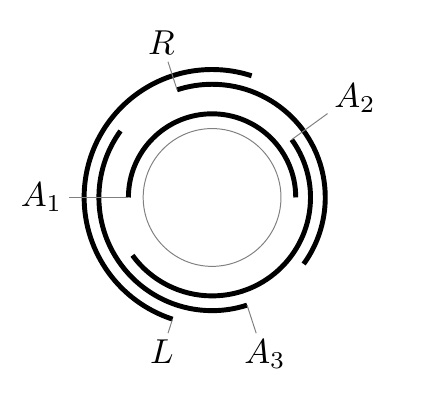} \\
  (a) & (b) & (c)
 \end{tabular*}
 \caption[Proof of Theorem~\ref{thm:caracterizacion-PCA->PHCA}]{Proof of Theorem~\ref{thm:caracterizacion-PCA->PHCA}.  If $A_1$ is completely covered, then $G$ is not a PHCA graph.}
 \label{fig:thm_2}
\end{figure}

By the above cases, we conclude that all the ending points must precede the beginning points in $(s(A_1), t(A_1))$. Let $t$ and $s$ be the last ending point and the first beginning point inside $(s(A_1), t(A_1))$, respectively. Taking into account that $A_1$ is the only universal arc, we conclude that no point of the segment $(t, s) \subset A_1$ of $C(\M)$ can be contained in any arc of $\M$ except $A_1$. Hence $\M_0 = \M_1 \setminus \{A_1\}$ is a PIG model.

$(iii) \Longrightarrow (i)$: suppose first that $\M_1 = U_1(\M)$ is an HCA model. By Lemma~\ref{lem:no_mellizos modelo}, we can include in the model all the universal arcs that have been possibly removed from it, obtaining a model equivalent to $\M$ that is both PCA and HCA.

Next, suppose that $\M_0 = U_0(\M)$ is a PIG model and $\M_1$ is not an HCA model, thus $\M_0 \neq \M_1$.  Since $\M_0 \neq \M_1$, it follows that $\M_1$ has some universal arc $A$.  We prove that $\M' := \M_0 \cup \{\overline{A}\}$ is a PIG model equivalent to $\M_1$.  By Lemma~\ref{lem:two covering arcs PCA}, $\M_1$ is an NPCA model, thus there is exactly one extreme of each non-universal arc inside $A$.  Hence, $\overline{A}$ contains exactly one extreme of each non-universal arc.  This means that $\overline{A}$ is a universal arc of $\M'$, and that $\M'$ is an NPCA model equivalent to $\M_1$.  On the other hand, since $\M_0$ is an interval model, it follows that there is some point $p \in A$ which is crossed only by $A$ in $\M_1$.  Therefore, $p$ is not crossed by any arc of $\M'$, which implies that $\M'$ is an interval model as well.  Summing up, $\M'$ is a PIG model equivalent to $\M_1$.  Duplicating the universal arc of $\M'$ we can include all the universal arcs that were removed from $\M_1$, to obtain a PIG model of $G$.  
\end{proof}

The characterization in terms of the forbidden subgraphs is depicted in the corollary below.

\begin{corollary}\label{cor:caracterizacion-PCA->PHCA}
 A PCA graph is PHCA if and only if it contains no induced\/ $W_4$ and no induced $S_3$.
\end{corollary}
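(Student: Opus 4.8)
The plan is to obtain this corollary as a direct translation of Theorem~\ref{thm:caracterizacion-PCA->PHCA} from the model level to the graph level. The only ingredient needed beyond the theorem is the elementary bridge between induced subgraphs and induced submodels: if $\M$ is any fixed circular-arc model of $G$, then the induced submodels of $\M$ are exactly the submodels obtained by restricting to a subset of arcs, and the intersection graph of such a submodel is precisely the corresponding induced subgraph of $G$. Consequently, for a fixed model $\M$ of $G$, the model $\M$ contains an induced submodel of $W_4$ (resp.\ $S_3$) if and only if $G$ contains an induced $W_4$ (resp.\ $S_3$). In other words, condition~$(ii)$ of Theorem~\ref{thm:caracterizacion-PCA->PHCA} is equivalent, for \emph{any} model $\M$ of $G$, to the purely graph-theoretic statement that $G$ contains no induced $W_4$ and no induced $S_3$.

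For the necessity direction I would start from the hypothesis that $G$ is a PHCA graph and fix a PHCA model $\M$ of $G$. Then $\M$ is a PCA model that is trivially equivalent to a PHCA model, so condition~$(i)$ of Theorem~\ref{thm:caracterizacion-PCA->PHCA} holds for $\M$; the theorem then gives condition~$(ii)$, namely that $\M$ contains no induced submodel of $W_4$ or $S_3$. By the bridge of the previous paragraph, $G$ contains no induced $W_4$ and no induced $S_3$.

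For the sufficiency direction I would fix an arbitrary PCA model $\M$ of $G$, which exists because $G$ is a PCA graph. Since by hypothesis $G$ contains no induced $W_4$ and no induced $S_3$, the bridge shows that $\M$ satisfies condition~$(ii)$ of Theorem~\ref{thm:caracterizacion-PCA->PHCA}; the theorem then yields condition~$(i)$, so that $\M$ is equivalent to a PHCA model, and hence $G$ is a PHCA graph.

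I do not expect any genuine obstacle here: the corollary is a routine ``model-to-graph'' restatement of Theorem~\ref{thm:caracterizacion-PCA->PHCA}. The only point deserving care is the bookkeeping in the bridge—that a forbidden induced subgraph of $G$ forces a forbidden induced submodel in \emph{every} model $\M$ of $G$, and conversely—and the observation that the necessity argument may be run on a single chosen PHCA model while the sufficiency argument may be run on a single chosen PCA model. As an alternative route for necessity, one could instead invoke that $W_4$ and $S_3$ are not even NHCA graphs (Corollary~\ref{cor:caracterizacion-HCA->NHCA}) together with the fact that the class of PHCA graphs is closed under induced subgraphs, but the argument above via Theorem~\ref{thm:caracterizacion-PCA->PHCA} is the most economical.
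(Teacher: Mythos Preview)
Your proposal is correct and is precisely the routine model-to-graph translation that the paper leaves implicit: the paper states the corollary immediately after Theorem~\ref{thm:caracterizacion-PCA->PHCA} with no separate proof, relying on exactly the bridge you spell out between induced submodels of a fixed model and induced subgraphs of its intersection graph. Your care in noting that necessity can be run on any one PHCA model and sufficiency on any one PCA model is appropriate, and your alternative route via Corollary~\ref{cor:caracterizacion-HCA->NHCA} is also valid but, as you say, unnecessary.
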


In implication $(ii) \Longrightarrow (iii)$, the non-Helly PCA model $\M_1 = U_1(\M)$ has a universal arc $A$ whose removal yields the interval model $U_0(\M)$.  This arc $A$ can be replaced by $\overline{A}$, and then all the arcs of $\M \setminus \M_1$ can be inserted once again into $\M_0$ by duplicating $\overline{A}$ as in $(iii) \Longrightarrow (i)$.  The model so obtained is a PIG model equivalent to $\M$.  Thus, if some PHCA graph admits a non-Helly PCA model, then the graph is in fact a PIG graph.  The following corollary, which is the analogous of Corollary~\ref{cor:CA-models-of-NHCA}, reflects this fact.

\begin{corollary}\label{cor:PCA-models-of-PHCA}
 Every PCA model of a non-interval PHCA graph is PHCA.
\end{corollary}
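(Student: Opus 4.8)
The plan is to prove the statement by contradiction, showing that if the given PCA model $\M$ of $G$ fails to be Helly then $G$ must be an interval graph, contrary to hypothesis. This mirrors the remark preceding the corollary, which records that a non-Helly PCA model of a PHCA graph forces the graph to be PIG. First I would note that, since $G$ is PHCA, Corollary~\ref{cor:caracterizacion-PCA->PHCA} guarantees that $G$ has no induced $W_4$ and no induced $S_3$; because induced subgraphs of $G$ correspond exactly to induced submodels of $\M$, the model $\M$ contains no induced submodel of $W_4$ nor of $S_3$, i.e.\ $\M$ satisfies condition $(ii)$ of Theorem~\ref{thm:caracterizacion-PCA->PHCA}.

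The cleanest way to run the argument is to first dispose of the universal arcs of $\M$, and for this I would establish that a non-interval PHCA graph has no universal vertex. Indeed, $G$ admits an NHCA model $\mathcal{N}$ (in fact an NPHCA one, obtained by normalising any PHCA model as in Theorem~\ref{thm:PCA->normal} while preserving the Helly property, since $U_1$ keeps a submodel of a Helly model Helly and duplication preserves Helly by Lemma~\ref{lem:no_mellizos modelo}). If $G$ had a universal vertex $u$, then the arc of $\mathcal{N}$ representing $u$ has closed neighbourhood equal to the whole family $\A(\mathcal{N})$, so the local interval property for NHCA models (the corollary following Theorem~\ref{thm:NHCA->no2y3cubrenCirculo}) would make all of $\mathcal{N}$ an interval model, forcing $G$ to be interval. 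As $G$ is non-interval it has no universal vertex, hence $\M$ has no universal arc, and therefore $U_1(\M) = U_0(\M) = \M$.

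With the universal arcs out of the way, the conclusion is immediate. Applying the implication $(ii) \Rightarrow (iii)$ of Theorem~\ref{thm:caracterizacion-PCA->PHCA} to $\M$ and using $U_1(\M) = U_0(\M) = \M$, we get that $\M$ is HCA or $\M$ is a PIG model. The second alternative cannot occur, since a PIG model is an interval model and its intersection graph $G$ would then be interval, contradicting the hypothesis. Hence $\M$ is HCA, and being a PCA model it is a PHCA model, as required.

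The step I expect to be the genuine obstacle is the treatment of universal arcs, and the reason is subtle: being Helly is a property of a concrete model, not merely of its intersection graph, so one cannot freely pass between $\M$ and an equivalent model, nor blindly delete and reinsert universal arcs while tracking the Helly property. The remark preceding the corollary suggests an alternative route through the construction $\M_0 \cup \{\overline{A}\}$ used in $(iii) \Rightarrow (i)$, but that route implicitly presupposes that $U_1(\M)$ is already non-Helly, and bridging the gap from ``$\M$ non-Helly'' to ``$U_1(\M)$ non-Helly'' is exactly the delicate point. Establishing at the outset that $G$ has no universal vertex removes this difficulty entirely, since then $\M = U_1(\M) = U_0(\M)$ and no bookkeeping about universal arcs is needed.
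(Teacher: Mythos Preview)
Your argument is correct, but it takes a noticeably longer route than the paper's. The paper dispatches the corollary in two lines by invoking Corollary~\ref{cor:CA-models-of-NHCA}: since a PHCA graph is in particular an NHCA graph, and $G$ is non-interval, every circular-arc model of $G$ --- hence in particular every PCA model $\M$ --- is already NHCA; being both PCA and HCA, $\M$ is PHCA. You instead bypass Corollary~\ref{cor:CA-models-of-NHCA} entirely and work directly with Theorem~\ref{thm:caracterizacion-PCA->PHCA}, supplementing it with the observation (via the local interval property) that a non-interval NHCA graph has no universal vertex, so that $U_1(\M)=U_0(\M)=\M$ and condition~$(iii)$ collapses to ``$\M$ is HCA or $\M$ is PIG''. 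What this buys you is independence from the machinery behind Corollary~\ref{cor:CA-models-of-NHCA} (whose proof leans on the Lin--Szwarcfiter algorithm), at the cost of a more hands-on argument; what the paper's route buys is brevity, since Corollary~\ref{cor:CA-models-of-NHCA} has already done the heavy lifting. Your concern about bookkeeping for universal arcs is well-founded for the route hinted at in the remark preceding the corollary, and your ``no universal vertex'' lemma is a clean way to sidestep it --- but note that the paper's actual proof avoids that issue altogether by not going through Theorem~\ref{thm:caracterizacion-PCA->PHCA} at all.
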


\begin{proof}
 By Corollary~\ref{cor:CA-models-of-NHCA}, every PCA model of a non-interval PHCA graph is also NHCA.  Thus, the model is both PCA and HCA.
\end{proof}

\subsection{Unit Helly circular-arc graphs}

Theorem~\ref{thm:caracterizacion-PCA->PHCA} describes when can a PCA model be transformed into an equivalent PHCA model.  An almost verbatim copy of its proof can be used to characterize those UCA models which are equivalent to some UHCA model.  However, this characterization can be done easily once Corollary~\ref{cor:PCA-models-of-PHCA} is known.

\begin{theorem}\label{thm:UHCA=PHCA+UCA}
 A graph is UHCA if and only if it is PHCA and UCA.  Moreover, every UCA model of a non-interval UHCA graph is UHCA.
\end{theorem}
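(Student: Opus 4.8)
The plan is to reduce everything to two results already available: Corollary~\ref{cor:PCA-models-of-PHCA}, which says that every PCA model of a \emph{non-interval} PHCA graph is in fact PHCA, and Roberts' theorem (Theorem~\ref{thm:roberts}), which handles the interval case. The whole argument splits according to whether $G$ is an interval graph, and the only genuine work is deciding which of the two tools applies.

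I would first dispatch the forward implication, which is immediate from the definitions. If $G$ is UHCA, it admits a model $\M$ that is simultaneously unit and Helly. Since every unit model is proper, $\M$ is both proper and Helly, so it witnesses that $G$ is PHCA; and $\M$ is unit, so $G$ is UCA. For the converse, assume $G$ is both PHCA and UCA, and distinguish two cases. If $G$ is an interval graph, note that being PHCA makes $G$ a PCA graph, hence $K_{1,3}$-free (recall that $K_{1,3}$ is not PCA); by Theorem~\ref{thm:roberts} $G$ is then a unit interval graph, and any UIG model of $G$ is in particular a unit Helly model, so $G$ is UHCA. If $G$ is not an interval graph, I would take a UCA model $\M$ of $G$; since $\M$ is unit it is proper, hence a PCA model, and Corollary~\ref{cor:PCA-models-of-PHCA} then forces $\M$ to be PHCA, i.e.\ proper and Helly. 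Being also unit, $\M$ is a UHCA model, so $G$ is UHCA.

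The ``moreover'' clause follows from exactly the second case. Let $G$ be a non-interval UHCA graph; by the forward implication $G$ is PHCA, and if $\M$ is \emph{any} UCA model of $G$, then $\M$ is a PCA model, so Corollary~\ref{cor:PCA-models-of-PHCA} again forces $\M$ to be PHCA and hence, being unit, a UHCA model. The argument is essentially mechanical, and the single point requiring care---the only mild obstacle---is that Corollary~\ref{cor:PCA-models-of-PHCA} applies solely to non-interval graphs, so the interval case cannot be treated by reusing the given model and must instead recover unit-ness through $K_{1,3}$-freeness and Roberts' theorem. This is also precisely why the ``moreover'' statement is (correctly) restricted to non-interval graphs: an interval UHCA graph may well admit a UCA model that wraps around the circle and fails to be Helly.
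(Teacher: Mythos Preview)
Your proof is correct and follows essentially the same approach as the paper's: both split into the interval versus non-interval cases, invoke Roberts' theorem (via $K_{1,3}$-freeness) for the former, and apply Corollary~\ref{cor:PCA-models-of-PHCA} to a UCA model for the latter. Your write-up is a bit more explicit about why the ``moreover'' clause requires the non-interval hypothesis, which is a nice addition, but the underlying argument is the same.
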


\begin{proof}
 Clearly, every UHCA model is both PHCA and UCA.  For the converse, let $G$ be a PHCA and UCA graph and observe that $G$ contains no induced $K_{1,3}$.  If $G$ is also an interval graph, then it is a UIG graph by Theorem~\ref{thm:roberts}.  If $G$ is not an interval graph then, by Corollary~\ref{cor:PCA-models-of-PHCA}, every UCA model of $G$ is also HCA.
\end{proof}

The partial forbidden induced subgraph characterizations of UHCA graphs are shown below.

\begin{corollary}\label{cor:caracterizacion:UCA->UHCA}
 A UCA graph is UHCA if and only if it contains no induced $W_4$.
\end{corollary}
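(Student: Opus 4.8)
The plan is to reduce the statement to the preceding results by leveraging two facts: the characterization of UHCA graphs as exactly the graphs that are both PHCA and UCA (Theorem~\ref{thm:UHCA=PHCA+UCA}), and the forbidden subgraph characterization of PHCA among PCA graphs (Corollary~\ref{cor:caracterizacion-PCA->PHCA}). Since every UCA model is a PCA model, a UCA graph is in particular a PCA graph, so Corollary~\ref{cor:caracterizacion-PCA->PHCA} tells us that such a graph is PHCA if and only if it contains no induced $W_4$ and no induced $S_3$. Combined with Theorem~\ref{thm:UHCA=PHCA+UCA}, a UCA graph $G$ is UHCA precisely when it avoids both $W_4$ and $S_3$. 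Thus the whole task amounts to showing that, for a \emph{UCA} graph, forbidding $W_4$ already forbids $S_3$ automatically; equivalently, that no UCA graph contains $S_3$ as an induced subgraph.

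First I would establish the key lemma that $S_3$ is not a UCA graph. Recall that $S_3$ is the $3$-sun, the complement of which is the tent $\overline{S_3}$. The cleanest route is to invoke Tucker's characterization (Theorem~\ref{thm:CInk}) or the structural facts already available: $S_3$ is a PCA graph (its PHCA status fails only through the Helly/normal conditions), so if it were UCA it would admit a unit model. A direct argument is to suppose $S_3$ has a UCA model, where all arcs have a common length $\lambda$, and to derive a contradiction from the metric constraints imposed by the three ``tip'' vertices $w_1,w_2,w_3$, each adjacent to exactly two of the central triangle vertices. The three tips force three arcs that pairwise fail to intersect yet each spans two consecutive central arcs; tracking their beginning points around the circle of circumference, say, $1$ (after rescaling so $\lambda$ is fixed) yields an inequality chain that sums to strictly more than one full turn, contradicting that the circle has finite length. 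This is essentially the classical observation that the $3$-sun is the prototypical non-unit obstruction.

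Once $S_3$ is ruled out as a UCA graph, the corollary follows immediately: if $G$ is a UCA graph containing no induced $W_4$, then it trivially contains no induced $S_3$ either (since $S_3$ itself is not UCA, no induced subgraph of the UCA graph $G$ can be isomorphic to $S_3$), so by Corollary~\ref{cor:caracterizacion-PCA->PHCA} the graph $G$ is PHCA, and being also UCA it is UHCA by Theorem~\ref{thm:UHCA=PHCA+UCA}. Conversely, a UHCA graph is UCA and PHCA, hence $W_4$-free by Corollary~\ref{cor:caracterizacion-PCA->PHCA}. I expect the main obstacle to be the self-contained verification that $S_3$ admits no unit circular-arc model: the rest is a mechanical assembly of earlier theorems. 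The slickest phrasing avoids the metric computation entirely by noting that $S_3$ contains an induced $K_{1,3}$ (the center of any tip together with its two triangle-neighbors and a third nonadjacent vertex), and since UCA graphs that are non-interval are $K_{1,3}$-free would need care; cleaner still is to simply cite that $S_3$ is known not to be UCA and that every induced subgraph of a UCA graph is UCA.
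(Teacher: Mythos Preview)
Your overall approach is correct and coincides with the paper's: reduce to Theorem~\ref{thm:UHCA=PHCA+UCA} and Corollary~\ref{cor:caracterizacion-PCA->PHCA}, then argue that a UCA graph cannot contain $S_3$. The paper dispatches this last step in one stroke by observing that $S_3 = CI(3,1)$, so Tucker's Theorem~\ref{thm:CInk} immediately forbids it as an induced subgraph of any UCA graph. You mention Theorem~\ref{thm:CInk} as a route but never make the identification $S_3 = CI(3,1)$; once you do, the metric argument becomes unnecessary.

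One correction: your ``slickest phrasing'' is wrong. The $3$-sun $S_3$ does \emph{not} contain an induced $K_{1,3}$; indeed $S_3$ is a PCA graph (it is $CI(3,1)$, an NPCA model by definition), and PCA graphs are $K_{1,3}$-free. So that line of attack should be dropped entirely rather than hedged.
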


\begin{proof}
 It follows from Theorems \ref{thm:UHCA=PHCA+UCA}~and~\ref{thm:CInk}, Corollary~\ref{cor:caracterizacion-PCA->PHCA}, and the fact that $S_3 = CI(3,1)$.
\end{proof}

\begin{corollary}
 A PHCA graph is UHCA if and only if it contains no induced $CI(n,k)$ graph with $n > 3k$.
\end{corollary}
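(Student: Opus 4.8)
The plan is to reduce the statement to Tucker's characterization of unit circular-arc graphs (Theorem~\ref{thm:CInk}) by way of the identity $\text{UHCA}=\text{PHCA}\cap\text{UCA}$ from Theorem~\ref{thm:UHCA=PHCA+UCA}. Since we are given that $G$ is PHCA, Theorem~\ref{thm:UHCA=PHCA+UCA} says that $G$ is UHCA if and only if $G$ is UCA; and as $G$ is a PCA graph, Theorem~\ref{thm:CInk} says this happens exactly when $G$ contains no induced $CI(n,k)$ with $n,k$ coprime and $n>2k$. The forward implication is then immediate: a UHCA graph is UCA, hence contains no such $CI(n,k)$, and in particular none with $n>3k$. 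All the content lies in the converse: I must show that for a \emph{PHCA} graph the weaker hypothesis ``no induced $CI(n,k)$ with $n>3k$'' already forbids every induced $CI(n,k)$ with $n>2k$.

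Since $\{CI(n,k):n>3k\}\subseteq\{CI(n,k):n>2k\}$, the gap to close consists precisely of the graphs $CI(n,k)$ with coprime $n,k$ and $2k<n\le 3k$. The key step is the lemma: \emph{every such $CI(n,k)$ fails to be a PHCA graph}. Granting this, because the PHCA class is hereditary (an induced submodel of a PHCA model is again a PHCA model), a PHCA graph $G$ can contain no induced $CI(n,k)$ with $2k<n\le 3k$ whatsoever. Hence ``no induced $CI(n,k)$ with $n>3k$'' forces ``no induced $CI(n,k)$ with $n>2k$'', so $G$ is UCA and therefore UHCA, which finishes the converse.

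To prove the lemma I would look at the defining model $\M$ of $H=CI(n,k)$, which is NPCA by construction. When $2k<n\le 3k$ one has $\lceil n/k\rceil=3$, and a length computation on the circle of circumference $4n$ shows that the three consecutive arcs $A_0,A_1,A_2$ (each of length $4k+1$) together cover the circle, while normality guarantees that no two arcs cover it. By Theorem~\ref{thm:modelo-hca}, a model in which three arcs cover the circle but no two of them do violates condition (i) and hence is not HCA; thus $\M$ is a PCA model of $H$ that is not PHCA. I then show that $H$ is not an interval graph: $H$ is PCA, so it has no induced $K_{1,3}$ (as $K_{1,3}$ is not PCA and PCA is hereditary), whence, were $H$ an interval graph, Theorem~\ref{thm:roberts} would make it a proper/unit interval graph and therefore UCA, contradicting that $H$ contains itself as an induced $CI(n,k)$ and so is not UCA by Theorem~\ref{thm:CInk}. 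Finally, if $H$ were PHCA then, being non-interval, Corollary~\ref{cor:PCA-models-of-PHCA} would force \emph{every} PCA model of $H$, in particular $\M$, to be PHCA, contradicting the previous step. Hence $H$ is not PHCA.

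The main obstacle I anticipate is the covering count in the last paragraph: one has to check carefully, using the exact endpoints $s(A_i)=4ki$ and $t(A_i)=4k(i+1)+1$ read modulo $4n$, that $A_0\cup A_1\cup A_2$ is the whole circle exactly in the range $2k<n\le 3k$ (the inequality $4n\le 12k<12k+1$ is what makes the three arcs close up), and to confirm that $n\le 3k$ is used nowhere else. Everything else is routine: the reductions through Theorems~\ref{thm:UHCA=PHCA+UCA} and~\ref{thm:CInk}, the hereditary nature of the classes, and the short non-interval argument via Theorem~\ref{thm:roberts}.
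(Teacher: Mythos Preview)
Your proposal is correct and follows essentially the same route as the paper: both argue that every $CI(n,k)$ with $2k<n\le 3k$ has a defining NPCA model with three arcs covering the circle, observe that such $CI(n,k)$ is not an interval graph, and then invoke Corollary~\ref{cor:PCA-models-of-PHCA} to conclude it is not PHCA, so that the hypothesis ``no induced $CI(n,k)$ with $n>3k$'' upgrades to ``no induced $CI(n,k)$ with $n>2k$'' and Theorem~\ref{thm:CInk} finishes. Your write-up is in fact more careful than the paper's on two points the paper leaves implicit: the explicit length computation showing $A_0\cup A_1\cup A_2$ covers the circle when $n\le 3k$, and the justification (via Theorem~\ref{thm:roberts} and Theorem~\ref{thm:CInk}) that $CI(n,k)$ is not an interval graph.
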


\begin{proof}
 By Theorem~\ref{thm:CInk}, UHCA graphs contain no induced $CI(n,k)$ for $n > 3k$.  For the converse, let $G$ be a PHCA graph having no induced $CI(n,k)$ with $n > 3k$.  By definition, every $CI(n,k)$ model with $2 < n < 3k$ has three arcs that together the circle.  Then, since $CI(n,k)$ graphs are not interval graphs, it follows by Corollary~\ref{cor:PCA-models-of-PHCA} that $CI(n,k)$ graphs with $2 < n < 3k$ are not PHCA.  So, in $G$ there is no induced $CI(n,k)$ for $n > 2k$.  Thus, Theorem~\ref{thm:CInk} and Corollary~\ref{cor:caracterizacion:UCA->UHCA} imply that $G$ is a UHCA graph.
\end{proof}

The whole picture of the CA class hierarchy is depicted in Figure~\ref{fig:hierarchy}.  Each box of the picture represents a subclass of circular-arc graphs.  An upright edge from the box corresponding to the class $\C_1$ to the box corresponding to the class $\C_2$ means that $\C_1$ is properly contained in $\C_2$.  Graphs that belong to $\C_2$ but not to $\C_1$ appear beside the edge corresponding to the inclusion of $\C_1$ in $\C_2$, except for the edge between the CA and the NCA classes since this family is unknown.  Finally, the label $\mathcal{O}$ beside the edge between the CA and HCA classes represents the family of obstacles that were defined by \citet{LinSzwarcfiter2006}.

\begin{figure}[ht!]
 \centering
 \includegraphics{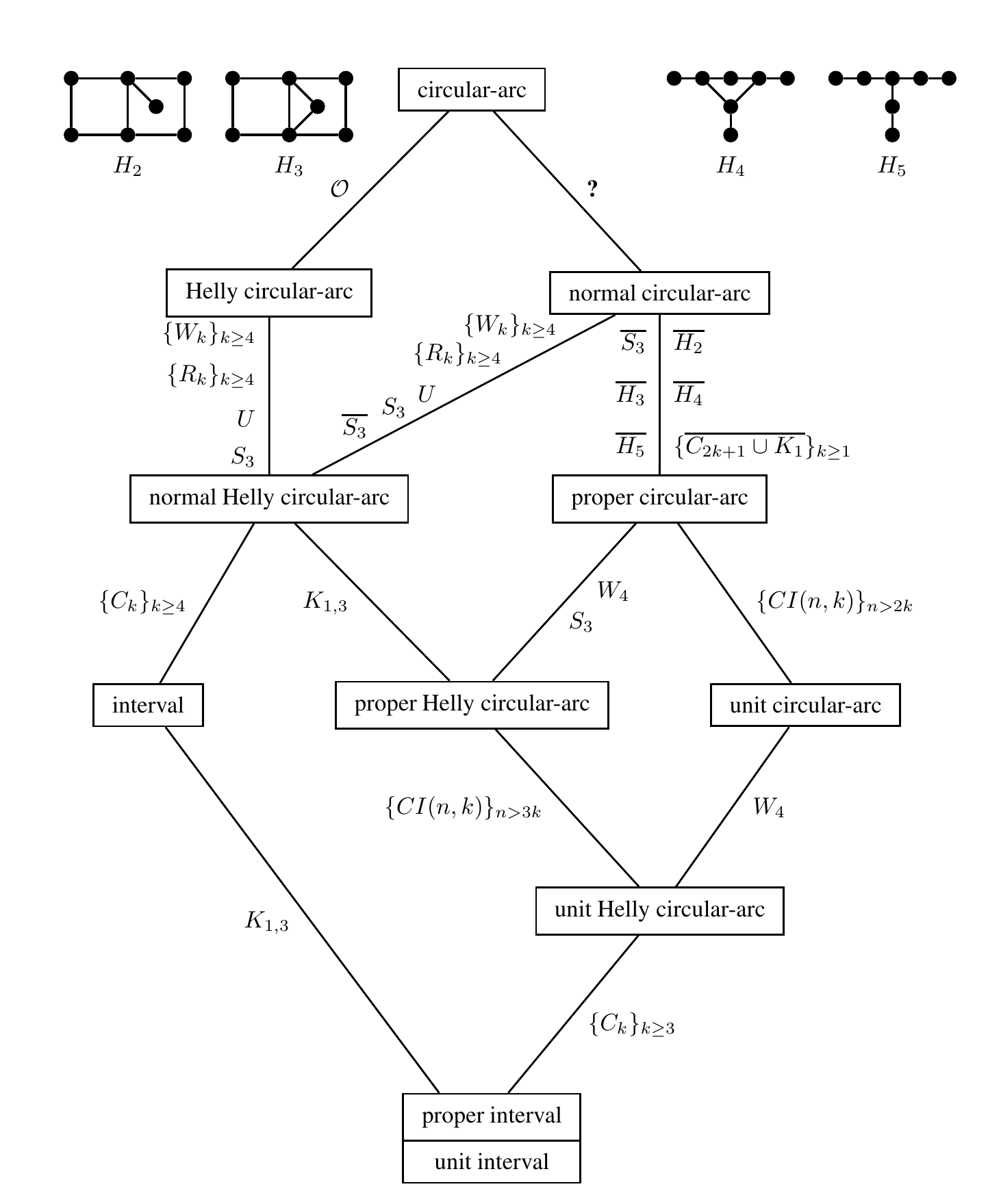}
 \caption{The class hierarchy of circular-arc graphs.}\label{fig:hierarchy}
\end{figure}

\section{The algorithms}
\label{sec:algorithms}

The characterizations described in the last section lead directly to $O(n+m)$ time algorithms for recognizing graphs in each of the NHCA subclasses.  This complexity is linear when the input is the graph represented by the adjacency lists.  So, the recognition problem for NHCA, PHCA, and UHCA graphs is well solved.  As for the model construction, first we refer to the more general class of circular-arc graphs.  Considering that the corresponding models of this class can be represented just by $O(n)$ elements, there is a motivation for trying to find $O(n)$ time algorithms that solve the recognition and model construction problems for each of the subclasses. In this case, the input is a general circular-arc model of a graph $G$ and the question is deciding whether $G$ belongs to a restricted class of circular-arc graphs and, whenever affirmative, constructing the corresponding restricted circular-arc model.  For example, given an arbitrary circular-arc model of a graph $G$, algorithms running in $O(n)$ time have been recently described to construct, whenever possible, a proper circular-arc model of $G$~\citep{Nussbaum2008}, a unit circular-arc model of $G$~\citep{LinSzwarcfiterSJDM2008}, or a Helly circular-arc model of $G$~\citep{JoerisLinMcConnellSpinradSzwarcfiterA2011}. 

In the present section we consider the representation problems for the classes of NHCA, PHCA, and UHCA graphs.  We propose new algorithms along the above lines and we discuss about the related certification and authentication problems.  We also propose a new $O(n)$ time algorithm that transforms any PCA model into an NPCA model (this algorithm is required by the PHCA and UHCA recognition algorithms).  Our algorithm is simpler than the previous known algorithms, both in its implementation and its time complexity analysis.

\subsection{Normalization of PCA models}
\label{sec:transformation:normalization}

\citet{TuckerDM1974} proved that every UCA graph $G$ with NPCA model $\M$ admits an UCA model $\M'$ in which all the extremes of $\M'$ appear in the same order as in $\M$.  This result is used by all the polynomial-time algorithms to recognize UCA graphs \citep[see][]{DuranGravanoMcConnellSpinradTuckerJA2006,KaplanNussbaumDAM2009,LinSzwarcfiterSJDM2008}, since they require the input to be an NPCA model.  

There are two algorithms for normalizing PCA models.  \citet{DuranGravanoMcConnellSpinradTuckerJA2006} showed how to normalize a PCA model in $O(n^2)$ time.  Their idea is to traverse each arc $A_i$ so as to find if there is some arc $A_j$ that together with $A_i$ cover the circle, \ie, $s(A_i), t(A_j), s(A_j), t(A_i)$ appear in this order in a traversal of the circle.  If such $A_j$ is found, then $t(A_j)$ and all the ending points inside $(s(A_i), t(A_j))$ are moved so that $t(A_j)$ and $s(A_i)$ appear in this order.  The main point is that every $A_j$ can be found and shrunk in $O(n)$ time, for each arc $A_i$.  \citet{LinSzwarcfiterSJDM2008} improved this algorithm so that all the pairs $A_i, A_j$ are found and shrunk in $O(n)$ time, obtaining an $O(n)$ time normalization algorithm.  The inconvenient of this new algorithm is that a somehow difficult analysis is used to show that every arc is shrunk at most twice.

Our elementary proof of Tucker's theorem yields a simple $O(n)$ time algorithm to transforms a PCA model into an NPCA model, depicted in Algorithm~\ref{alg:normalization}. The main loop computes the model $U_1(\M)$ and Step~\ref{alg:normalization:stepDup} inserts back all the possibly removed arcs.  The algorithm is correct by Theorem~\ref{thm:PCA->normal} and Lemma~\ref{lem:universal-arcs}.  With respect to the time complexity of the algorithm, Step~\ref{alg:normalization:step2} takes $O(1)$ time if each extreme has an index to its position in the traversal, and a pointer to the other extreme of the arc.  This information can be easily preprocessed in $O(n)$ time with one traversal of the model. Therefore, the total time complexity of the algorithm is $O(n)$.

\begin{algorithm}
\caption{Normalization procedure.}\label{alg:normalization}

\textbf{Input:} A PCA model $\M$ of a graph $G$.

\textbf{Output:} A twin-consecutive NPCA model of $G$.

\begin{AlgorithmSteps}
\Step{Set $u := 0$ and $U := NULL$.\label{alg:normalization:Step1}}
\Step{Traverse each arc $A$ of $\M$ and apply the following operations when $A$ contains at least $n-1$ extremes:\label{alg:normalization:step2}}
\IncreaseIndent
\Step{If $u = 0$, then set $u := 1$ and $U := A$.\label{alg:normalization:Step3}}
\Step{Otherwise, remove $A$ from $\M$ and set $u := u + 1$.\label{alg:normalization:Step4}}
\DecreaseIndent
\Step{If $u > 0$, then duplicate $u-1$ times the arc $U$ as in Lemma~\ref{lem:no_mellizos modelo}\label{alg:normalization:stepDup}.}
\end{AlgorithmSteps}
\end{algorithm}

Algorithm~\ref{alg:normalization} has three advantages over the one by \citet{LinSzwarcfiterSJDM2008}.  First, it is easier to implement and to describe.  Second, every extreme of the model is traversed only once, while for the algorithm in \citep{LinSzwarcfiterSJDM2008} there is only the guaranty that the main cycle is repeated at most $5n$ times.  Third, it turns out that PCA graphs admit only twin-consecutive models, except for the universal arcs, as Theorem~\ref{thm:PCA is twin consecutive} shows.  Thus, the model generated by Algorithm~\ref{alg:normalization} is twin-consecutive.

\begin{theorem}\label{thm:PCA is twin consecutive}
Every PCA model with at most one universal arc is twin-consecutive.
\end{theorem}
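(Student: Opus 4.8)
The plan is to show that in a PCA model $\M$ with at most one universal arc, every maximal family of twin arcs is twin-consecutive, meaning their beginning points form a consecutive run of $s$-extremes and their ending points form a consecutive run of $t$-extremes. First I would recall the relevant definitions: two arcs are twins when their corresponding vertices have equal closed neighborhoods, and a family is twin-consecutive when both its set of beginning points and its set of ending points are consecutive sequences in the circular ordering of extremes. So I fix a maximal family $T = \{A_1, \ldots, A_r\}$ of twin arcs, and I must prove that no ``foreign'' extreme (an extreme belonging to an arc outside $T$) interleaves the beginning points of $T$, and similarly for the ending points.

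\medskip

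The key structural fact I would exploit is that, since $\M$ is a PCA model (no arc properly contains another) and the members of $T$ are pairwise intersecting twins, the arcs of $T$ are nearly identical: none contains another, so their beginning points and ending points each appear in the same cyclic order, i.e.\ if $s(A_1), \ldots, s(A_r)$ is the clockwise order of their beginning points, then $t(A_1), \ldots, t(A_r)$ is the corresponding clockwise order of their ending points. The heart of the argument is then to derive a contradiction from an interleaving foreign extreme. Suppose some extreme $x$ of an arc $B \notin T$ lies strictly between two consecutive beginning points $s(A_i)$ and $s(A_j)$ of $T$. I would argue that $x$ cannot be a beginning point $s(B)$: if it were, then $B$ begins inside the beginning-run of $T$ but, because $B$ is not a twin of the $A_\ell$, the neighborhoods must differ, so $B$ must either fail to intersect some $A_\ell$ or intersect some arc that no $A_\ell$ intersects; chasing the extreme orders, this forces either $B$ to be properly contained in some $A_\ell$ or some $A_\ell$ properly contained in $B$ (contradicting properness), or it forces an arc covering situation.

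\medskip

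The main obstacle will be the case where the interleaving foreign extreme is an \emph{ending} point $t(B)$ sitting among the beginning points of $T$ (or symmetrically a beginning point among the ending points): this is exactly where the covering phenomena enter. Here I would invoke Lemma~\ref{lem:two covering arcs PCA}: if $t(B)$ lies between $s(A_i)$ and $s(A_j)$ while $B$ also wraps around so that some $A_\ell \in T$ has its beginning point inside $B$, then $B$ and $A_\ell$ can be shown to cover the circle, whence by Lemma~\ref{lem:two covering arcs PCA} both $B$ and $A_\ell$ are universal. But $A_\ell$ being universal makes every $A_\ell \in T$ universal (they are twins of each other), contradicting the hypothesis that $\M$ has at most one universal arc as soon as $r \geq 2$; and for $r = 1$ a singleton family is vacuously twin-consecutive. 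If instead no such covering arises, then the extreme orders pin $B$ inside the span of $T$ in a way that forces $N[B] = N[A_\ell]$, contradicting the maximality of the twin family $T$ (which would have had to include $B$). Collecting these cases, no foreign extreme can interleave the beginning points, and by the left--right symmetry of the argument (or by passing to the complement model $\overline{\M}$) none can interleave the ending points either; hence $T$ is twin-consecutive, completing the proof.
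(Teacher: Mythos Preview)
Your overall strategy matches the paper's: fix a pair of twin arcs $A_i,A_j$ and rule out a ``foreign'' extreme of some $B\notin T$ between $s(A_i)$ and $s(A_j)$, splitting into the cases $x=s(B)$ and $x=t(B)$. But your handling of the case $x=t(B)$ has a real gap. You offer the dichotomy: either $B$ and some $A_\ell$ cover the circle (so both are universal by Lemma~\ref{lem:two covering arcs PCA}), or else ``the extreme orders pin $B$ inside the span of $T$ in a way that forces $N[B]=N[A_\ell]$''. Neither alternative is forced. If $t(B)\in(s(A_i),s(A_j))$ then, using properness, the fact that $B$ must also meet the twin $A_j$, and that $B$ and $A_i$ do not cover (Lemma~\ref{lem:two covering arcs PCA}, since $A_i$ is not universal), one obtains the cyclic order $s(A_i),\,t(B),\,s(A_j),\,t(A_i),\,s(B),\,t(A_j)$. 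Here $B$ and $A_i$ genuinely do \emph{not} cover the circle (the point $s(B)$ lies in neither), yet $B$ contains the whole segment $(t(A_j),s(A_i))$, which is disjoint from $A_i\cup A_j$; so $B$ is not ``pinned inside the span of $T$'' and is certainly not a twin of $A_i$.

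The missing idea is a \emph{third} arc. Because $A_i$ is not universal (a universal twin would force all of $T$ to be universal), there exists an arc $C$ disjoint from $A_i$, hence from its twin $A_j$; both extremes of $C$ then lie in $(t(A_j),s(A_i))\subset B$, giving $C\subset B$ and contradicting properness. This is how the paper closes the $t(B)$ case. Your $s(B)$ case is morally right --- such a $B$ must have $t(B)\in(t(A_i),t(A_j))$ by properness, and then $A_i\cap A_j\subset B\subset A_i\cup A_j$ forces $B$ to be a twin --- but the last implication uses that no ending point lies in $(s(A_i),s(A_j))$ and no beginning point lies in $(t(A_i),t(A_j))$; in other words it needs the $t(B)$ case (and its symmetric version) to be settled first, which is also the order the paper proceeds in.
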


\begin{proof}
 Let $A_1$ and $A_2$ be a pair of twin arcs of a PCA model $\M$ and assume, w.l.o.g., that $s(A_1)$, $s(A_2)$, $t(A_1)$, and $t(A_2)$ appear in this order in a traversal of $C(\M)$.  To obtain a contradiction, suppose that $s(A_1)$ and $s(A_2)$ are not in the same $s$-sequence.  So, there is some ending point $t$, of some arc $A = (s,t)$, between $s(A_1)$ and $s(A_2)$.  Since $A_1$ is not universal then, by Lemma~\ref{lem:two covering arcs PCA}, it follows that $A_1$ and $A$ do not cover the circle.  Also, since $\M$ is proper, it follows that $s, s(A_i), t$ must appear in this order in a traversal of $C(\M)$.  Finally, since $A_1$ and $A_2$ are twins, then $A_2 \cap A \neq \emptyset$.  Summing up, $s(A_1), t, s(A_2), t(A_1), s, t(A_2)$ must appear in this order in $C(\M)$.  Since $A_1$ is not universal and $A_1$ and $A_2$ are twins, then there is some arc whose both extremes lie in $(t(A_2), s(A_1))$.  But this is a contradiction to the fact that $\M$ is proper, because this arc is contained in $A$.  Therefore, $s(A_1)$ and $s(A_2)$ lie in the same $s$-sequence.  Analogously, $t(A_1)$ and $t(A_2)$ lie in the same $t$-sequence.

 Now, suppose that there is some beginning point $s$ of an arc $A = (s,t)$ in the segment $(s(A_1), s(A_2))$. Then, $t$ must belong to the segment $(t(A_1), t(A_2))$ since otherwise $A$ is either contained in $A_1$ or containing $A_2$.  In other words, $A$ is a twin of both $A_1$ and $A_2$.  Analogously, if the ending point of some arc lies between $t(A_1)$ and $t(A_2)$, then its beginning point lies between $s(A_1)$ and $s(A_2)$.  Hence, every maximal set of twin arcs is consecutive in $\M$, \ie, $\M$ is twin-consecutive.
\end{proof}

\begin{corollary}\label{thm:NPCA twin-consecutive}
 The model generated by Algorithm~\ref{alg:normalization} is twin-consecutive.
\end{corollary}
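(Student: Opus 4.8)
The plan is to split the run of Algorithm~\ref{alg:normalization} into the model $U_1(\M)$ computed by the main loop and the model obtained after the duplications of Step~\ref{alg:normalization:stepDup}, and to read off the twin-consecutiveness of the former directly from Theorem~\ref{thm:PCA is twin consecutive}. After the main loop the current model is exactly $U_1(\M)$, which retains at most one universal arc by construction; hence Theorem~\ref{thm:PCA is twin consecutive} already guarantees that $U_1(\M)$ is twin-consecutive. If the counter $u$ equals $0$, then Step~\ref{alg:normalization:stepDup} is vacuous and the output equals $U_1(\M)$, so in that case there is nothing more to prove. Thus I would assume $u>0$ and let $U$ be the unique universal arc that survives the loop, so that the output is obtained from $U_1(\M)$ by duplicating $U$ exactly $u-1$ times.

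The first thing I would pin down is the maximal twin family containing $U$. Since every arc whose closed neighbourhood equals $N[U]=V(G)$ is itself universal, the arcs that are twins of $U$ are precisely the universal arcs of the output, namely $U$ together with its $u-1$ duplicates; conversely $U$ is not a twin of any non-universal arc. I would then verify that this family is twin-consecutive: by the definition of duplication, each copy of $U$ contributes a beginning point in the segment immediately following $s(U)$ and an ending point in the segment immediately following $t(U)$, where the shift $\epsilon$ is small enough not to cross any existing extreme. Consequently the $u$ beginning points of the universal arcs are mutually consecutive and lie in a single $s$-sequence, and likewise for the $u$ ending points, so the universal family is twin-consecutive.

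It remains to check that no other maximal twin family is disturbed, which is the only delicate point of the argument. Every such family $\mathcal{F}$ consists of non-universal arcs and is therefore unchanged as a set of arcs by the duplications; moreover it was already twin-consecutive in $U_1(\M)$ by Theorem~\ref{thm:PCA is twin consecutive}, so its beginning points form a contiguous run inside one $s$-sequence and its ending points a contiguous run inside one $t$-sequence. The duplications could only destroy this by inserting a foreign extreme strictly between two extremes of such a run. But all the newly inserted beginning points sit in the single segment just after $s(U)$ and all the newly inserted ending points in the single segment just after $t(U)$; since $s(U)$ (respectively $t(U)$) is not situated strictly inside $\mathcal{F}$'s run---otherwise that run would not have been contiguous---the insertions only ever prepend beginning (respectively ending) points to $\mathcal{F}$'s run and never split it. Hence each non-universal twin family stays twin-consecutive, and together with the previous paragraph this shows that every maximal twin family of the output is twin-consecutive, \ie\ the output model is twin-consecutive. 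I expect this boundary bookkeeping to be the main, though minor, obstacle; everything else follows immediately from Theorem~\ref{thm:PCA is twin consecutive}.
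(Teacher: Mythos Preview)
Your proposal is correct and follows essentially the same approach as the paper: apply Theorem~\ref{thm:PCA is twin consecutive} to $U_1(\M)$, then argue that the duplications of the universal arc keep the model twin-consecutive. The paper's proof is terser---it simply notes that after the duplications the beginning and ending points of the universal arcs are consecutive and declares the result---whereas you additionally spell out why no non-universal maximal twin family can be split by the newly inserted extremes; this extra care is sound and makes explicit what the paper leaves implicit.
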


\begin{proof}
 Immediately before the execution of Step~\ref{alg:normalization:stepDup}, $\M$ is a PCA model with at most one universal arc.  Then, by Theorem~\ref{thm:PCA is twin consecutive}, $\M$ is twin-consecutive at this point of the execution.  After the insertion of the $u-1$ copies of the universal arc as in Lemma~\ref{lem:no_mellizos modelo}, the beginning points as well as the ending points of the universal arcs are consecutive.  Therefore, the generated model is twin-consecutive.
\end{proof}

\subsection{Recognition of NHCA graphs}
\label{sec:algorithms:NHCA}

In this section we develop an algorithm to test whether a circular-arc graph is NHCA, when the input is a circular-arc model.  The algorithm follows directly from Corollary~\ref{cor:CA-models-of-NHCA} and its time complexity is $O(n+m)$.

Let $\M$ be a circular-arc model of a graph $G$.  If there are no two nor three arcs covering $C(\M)$, then $G$ is an NHCA graph and $\M$ is an NHCA model of $G$, so there is nothing to be done for this case.  Otherwise, it is enough to check if $\M$ is equivalent to an interval model since, by Corollary~\ref{cor:CA-models-of-NHCA}, those NHCA graphs that admit a non NHCA model are interval graphs.  If $\M$ is equivalent to an interval model, then $G$ is an NHCA graph and any interval model of $G$ is an NHCA model of $G$.  

We use Algorithm~\ref{alg:CA is NHCA} to test whether there are two or three arcs covering the circle.  There, $NEXT(A)$ represents the arc crossing $t(A)$ whose ending point reaches farthest, whereas $NEXT^2(A)$ represents $NEXT(NEXT(A))$.  In the second traversal of the first loop, $N$ is the arc reaching farthest among those crossing $e$, when $e$ is reached.  Thus, $NEXT$ is correctly computed in the first loop.  Now, observe that if $A_1$ and $A_2$ cover $C(\M)$, then $A_1$ and $NEXT(A_1)$ also cover $C(\M)$.  Similarly, if $A_1, A_2, A_3$ cover $C(\M)$, then $A_1, NEXT(A_1)$ and $NEXT^2(A_1)$ also cover $C(\M)$.  Thus, $\M$ is an NHCA model if and only if neither $NEXT(A)$ nor $NEXT^2(A)$ crosses $s(A)$, for every $A \in \M$.  Therefore, Algorithm~\ref{alg:CA is NHCA} is correct.

\begin{algorithm}
 \caption{Authentication of an NHCA model.}\label{alg:CA is NHCA}

 \textbf{Input:} A circular-arc model $\M$.

 \textbf{Output:} If $\M$ is not NHCA, then two or three arcs that cover $\M$.  Otherwise, there is no output.

 \begin{AlgorithmSteps}
  \Step{Let $A_1$ be any arc of $\M$ and set $A := A_1$.}
  \Step{Traverse $C(\M)$ twice from $s(A_1)$ and apply the following operation when an extreme $e$ of an arc $A$ is reached.}
  \IncreaseIndent
    \Step{If $e = s(A)$ and $t(A)$ reaches farther than $t(N)$, then set $N := A$.}
    \Step{If $e = t(A)$, then set $NEXT(A) := N$.}
  \DecreaseIndent

  \Step{Traverse $C(\M)$ once again from $s(A_1)$ and apply the following evaluation when $s(A)$ is reached.}
  \IncreaseIndent
    \Step{If $NEXT(A)$ crosses $s(A)$, then output $A, NEXT(A)$.\label{alg:CA is NHCA:Step unneeded in PCA is HCA}}
    \Step{If $NEXT^2(A)$ crosses $s(A)$, then output $A, NEXT(A)$ and $NEXT^2(A)$.}
 \end{AlgorithmSteps}
\end{algorithm}

With respect to the time complexity, all the operations of both loops take $O(1)$ time, thus the total time complexity of Algorithm~\ref{alg:CA is NHCA} is $O(n)$.

To test if $\M$ is equivalent to an interval model when $\M$ is not NHCA, we compute the intersection graph $G$ of $\M$ and we invoke the $O(n+m)$ time certified algorithm by \citet{KratschMcConnellMehlhornSpinradSJC2006}.  Unfortunately, we were unable to find an $O(n)$ time algorithm to obtain an interval model from $\M$.

We now discuss the certification and the authentication procedures.  When $\M$ is an NHCA model, the positive certificate is just $\M$.  If $\M$ is not an NHCA model, but it is equivalent to an interval model, the certificate is provided by the certified interval graph recognition algorithm in $O(n+m)$ time~\citep{KratschMcConnellMehlhornSpinradSJC2006}.  If $\M$ is neither NHCA nor equivalent to an interval model, then the negative certificate is obtained by combining the certificate of the interval graph recognition algorithm with the two or three arcs that cover the circle.  This certificate is enough by Corollary~\ref{cor:CA-models-of-NHCA}.  The negative certificate can be authenticated in $O(n)$ time as in \citep{KratschMcConnellMehlhornSpinradSJC2006}.  To authenticate the positive certificates it is enough to test that the output model $\M'$ is NHCA and equivalent to $\M$.  For the NHCA authentication use Algorithm~\ref{alg:CA is NHCA}, and for the isomorphism authentication use the $O(n)$ time algorithm by \citet{Curtis2007}. The complete certified procedure is summarized in Algorithm~\ref{alg:CA->NHCA}.

\begin{algorithm}
 \caption{Recognition of NHCA graphs.}\label{alg:CA->NHCA}

 \textbf{Input:} A circular-arc model $\M$.

 \textbf{Output:} Either an NHCA model equivalent to $\M$ or a subset of arcs whose intersection graph is not NHCA.

 \begin{AlgorithmSteps}
  \Step{Execute Algorithm~\ref{alg:CA is NHCA} to authenticate if $\M$ is an NHCA model.  If so, output $\M$.}
  \Step{Otherwise, execute the algorithm by \citet{KratschMcConnellMehlhornSpinradSJC2006} to the intersection graph $G$ of $\M$.  If $G$ is an interval graph, then output the interval model obtained by the algorithm.  Otherwise, output the two or three arcs covering $C(\M)$ together with the negative certificate obtained by the interval graph recognition algorithm.}
 \end{AlgorithmSteps}

\end{algorithm}

\subsection{Recognition of PHCA graphs}
\label{sec:algorithms:PHCA}

In this section we show two algorithms that can be used to recognize PHCA graphs.  The first one transforms an NHCA model into a PHCA model in $O(n)$ time, if possible.  The second one transforms a PCA model into a PHCA model in $O(n)$ time, if possible.  

To transform an NHCA model into a PHCA model we need only to sort the extreme sequences as in Theorem~\ref{thm:caracterizacion-NHCA->PHCA}.  If the model so obtained is not PHCA, then we can search for an induced $K_{1,3}$.  We begin by describing how to sort all the extreme sequences in $O(n)$ time.

Let $\M$ be an NHCA model.  For arcs $A_i, A_j \in \A(\M)$ with nonempty intersection, say that $s(A_i)$ \emph{appears before} $s(A_j)$ if $s(A_i)$ appears before $s(A_j)$ in a traversal of $C(\M)$ from some point $p \in C(\M) \setminus (A_i \cup A_j)$.  Observe that the point $p$ must always exist because $A_i$ and $A_j$ do not cover $C(\M)$.  Similarly, say that $t(A_i)$ \emph{appears before} $t(A_j)$ if $t(A_i)$ appears before $t(A_j)$ in a traversal of $C(\M)$ from the same point $p$.  The sorting of the extremes sequences in Theorem~\ref{thm:caracterizacion-NHCA->PHCA} can be rephrased as follows.  First, sort each $t$-sequence $T$ so that, for $t(A_i), t(A_j) \in T$, if $s(A_i)$ appears before $s(A_j)$, then $t(A_i)$ appears before $t(A_j)$.  Next, sort each $s$-sequence $S$ so that, for $s(A_i), s(A_j) \in S$, if $t(A_i)$ appears before $t(A_j)$, then $s(A_i)$ appears before $s(A_j)$.  The algorithms to sort the $t$-sequences and $s$-sequences are symmetric, so we only describe how to sort the $t$-sequences.

Let $T_1, \ldots, T_k$ be the $t$-sequences of $\M$ and $T$ be the set of all the ending points corresponding to arcs that cross some fixed beginning point $s$.  Consider the $t$-sequence $T_i'$ that results from sorting the $t$-sequence $T_i$, for some $1 \leq i \leq k$.  In $T_i'$, all the ending points of $T_i \cap T$ appear before all the ending points of $T_i \setminus T$.  Thus, we can sort all the $t$-sequences with four traversals of $C(\M)$ as in Algorithm~\ref{alg:sort NHCA extremes}.  In the first traversal of $C(\M)$, Algorithm~\ref{alg:sort NHCA extremes} marks all those arcs that cross the fixed beginning point $s(A_1)$.  Thus, the ending points of $T$ are precisely those ending points corresponding to the marked arcs.  The second traversal is used to find all the $t$-sequences $T_1, \ldots, T_k$ of $\M$.  The third traversal computes and sorts the sequences $T_{i,1} = T_i \cap T$ and $T_{i,2} = T_i \setminus T$, for every $1 \leq i \leq k$.  Note that, in Step~\ref{alg:sort NHCA extremes:step 9}, $t(A)$ is stored at the end of either $T_{i,1}$ or $T_{i,2}$, and all the ending points corresponding to arcs whose beginning point appears before $s(A)$ were already stored.  Finally, the last traversal of $C(\M)$ replaces each $t$-sequence with the sorted $t$-sequence.  Thus, the algorithm is correct.

\begin{algorithm}[!ht]
 \caption{Sorting of the $t$-sequences of an NHCA model.}\label{alg:sort NHCA extremes}

 \textbf{Input:} An NHCA model $\M$.

 \textbf{Output:} An NHCA model $\M'$ equivalent to $\M$ in which every $t$-sequences is sorted.  That is, if $t(A_i), t(A_j)$ are extremes of the $t$-sequence $T$ of $\M'$ and $s(A_i)$ appears before $s(A_j)$ then $t(A_i)$ appears before $t(A_j)$.

 \begin{AlgorithmSteps}
  \Step{Let $A_1$ be any arc of $\M$.}
    \Comment{Find the arcs that cross $s(A_1)$.}
  \Step{Traverse $\M$ from $s(A_1)$ and apply the following operation when an extreme $e$ of an arc $A$ is reached:}
  \IncreaseIndent
   \Step{If $e = s(A)$, then mark $A$.}
   \Step{If $e = t(A)$, then clear the mark of $A$.}
  \DecreaseIndent
  \Comment{Sort the extremes}
  \Step{Traverse $\M$ to compute the family $T_1, \ldots, T_k$ of $t$-sequences of $\M$.}
  \Step{For each $i := 1, \ldots, k$, define $T_{i,1}$ and $T_{i,2}$ as empty sequences.}
  \Step{Traverse $\M$ from $s(A_1)$ and apply the following each time a beginning point $s(A)$ is reached:}
  \IncreaseIndent
   \Step{Find the $t$-sequence $T_i$ that contains $t(A)$.}
   \Step{If $A$ is marked, insert $t(A)$ at the end of $T_{i,1}$; otherwise, insert $t(A)$ at the end of $T_{i,2}$.\label{alg:sort NHCA extremes:step 9}}
  \DecreaseIndent
  \Step{Replace $T_i$ with $T_{i,1}, T_{i,2}$ in $\M$ for every $1 \leq i \leq k$.}
  \Step{Output $\M$.}
 \end{AlgorithmSteps}
\end{algorithm}

With respect to the time complexity, all the operations of both loops take $O(1)$ time, while the computation of the $t$-sequences can be easily done in $O(n)$ time.  Therefore, the total time complexity of the sorting algorithm is $O(n)$.

After sorting the extremes, we must check whether its output model $\M'$ is PHCA or not.  Algorithm~\ref{alg:sort NHCA extremes} does not modify the elements that compose each $t$-sequence, thus $\M'$ is NHCA.  So, it is enough to check whether $\M'$ is PCA.  That is, we ought to check if the beginning points of the arcs appear in the same order as the ending points.  If affirmative, then $\M'$ is a PHCA model equivalent to the input model $\M$.  Otherwise, there are two arcs $A_i$ and $A_j$ such that $s(A_i), s(A_j), t(A_j)$ and $t(A_i)$ appear in this order in a traversal $C(\M')$.  Let $L$ be the arc whose ending point appears immediately after $s(A_i)$ and $R$ be the arc whose beginning point appears immediately before $t(A_i)$.  Arcs $A_i, A_j, L$, and $R$ are taken as the negative certificate since, as in the proof of Theorem~\ref{thm:caracterizacion-NHCA->PHCA}, they induce a circular-arc model of $K_{1,3}$.  As for the authentication, the negative certificate can be tested to be an induced submodel of $K_{1,3}$ in $\M$ in $O(1)$ time, if it is implemented as a set of four pointers.  To authenticate the positive certificate we ought to verify that the output model $\M'$ is normal, proper, Helly and equivalent to $\M$.  The NHCA properties can be checked with Algorithm~\ref{alg:CA is NHCA}, while the test of whether $\M'$ is PCA or not is done as in the PHCA recognition algorithm (Steps \ref{alg:NHCA->PHCA:auth1}--\ref{alg:NHCA->PHCA:auth2} of Algorithm~\ref{alg:NHCA->PHCA}).  Finally, if $\M'$ is PCA, then the equivalence of $\M$ and $\M'$ can be tested by running the PCA isomorphism algorithm by \citet{LinSoulignacSzwarcfiter2008}.  Algorithm~\ref{alg:NHCA->PHCA} summarizes the complete procedure.

\begin{algorithm}
 \caption{Recognition of PHCA graphs from NHCA models.}\label{alg:NHCA->PHCA}
 
 \textbf{Input:} An NHCA model $\M$.

 \textbf{Output:} Either a PHCA model equivalent to $\M$ or an induced submodel of $K_{1,3}$.
 
 \begin{AlgorithmSteps}
  \Step{Apply Algorithm~\ref{alg:sort NHCA extremes} to sort the $t$-sequences, and the symmetric algorithm to sort the $s$-sequences.}
  \Step{Let $A_1 := (s_1, t_1)$ be any arc of $\M$.}
  \Step{For $i := 1, \ldots, n$ do:\label{alg:NHCA->PHCA:auth1}}
  \IncreaseIndent
   \Step{Find the arc $A_{i+1} = (s_{i+1}, t_{i+1})$ whose beginning point is the first after $s_i$.}
   \Step{If $A_i \supset A_{i+1}$, then output $A_i, A_{i+1}$, the arc whose ending point appears first from $s_i$, and the arc whose beginning point appears first from $t_i$ in a counterclockwise traversal of $C(\M)$.\label{alg:NHCA->PHCA:auth2}}
  \DecreaseIndent
  \Step{Output $\M$.}
 \end{AlgorithmSteps}
\end{algorithm}

We now proceed to describe the algorithm that transforms a PCA model $\M$ into a PHCA model $\M'$, when possible.  The algorithm is a direct consequence of Theorem~\ref{thm:caracterizacion-PCA->PHCA}.  That is, it verifies if either $U_1(\M)$ is an HCA model or $U_0(\M)$ is an interval model.  If affirmative, then $\M$ is equivalent to a PHCA model, and one such PHCA model can be obtained as in Theorem~\ref{thm:caracterizacion-PCA->PHCA}.  Otherwise, the algorithm finds an induced submodel of $W_4$ or $S_3$.

The PHCA recognition algorithm is obtained by gluing together several parts of the algorithms developed so far.  The first step is to compute $U_1(\M)$ as in Steps \ref{alg:normalization:Step1}--\ref{alg:normalization:Step4} of Algorithm~\ref{alg:normalization}.  The second step is to verify whether $U_1(\M)$ is HCA.  For this, it is enough to invoke Algorithm~\ref{alg:CA is NHCA} so as to verify if $U_1(\M)$ is NHCA, because $U_1(\M)$ is NCA by Lemma~\ref{lem:two covering arcs PCA}.  However, we can simplify the computation of $NEXT$ so that it takes only one traversal of $C(\M)$.  Let $t(A_1), \ldots, t(A_k)$ be a $t$-sequence of $\M$.  Since $\M$ is PCA, then $NEXT(A_i)$ is the arc of $\M$ whose beginning point is closest to $t(A_i)$ in the counterclockwise direction.  Hence $NEXT(A_i) = NEXT(A_1)$, for every $1 \leq i \leq k$.  Therefore, with only one traversal we can find $NEXT(A)$ for every $A \in \A(\M)$.  The last step is to test whether $U_0(\M)$ is a PIG model or not, whenever $U_1(\M)$ has a universal arc $A$.  Instead of doing this, we can check that no beginning point appears before an ending point inside $A$, as it is done in Theorem~\ref{thm:caracterizacion-PCA->PHCA}.  All these steps can be implemented so as to run in $O(n)$ time with techniques similar to those discussed so far.

The algorithm described above can be modified so as to produce certificates in $O(n)$ time.  When $U_1(\M)$ is a PHCA model, we can obtain a PHCA model equivalent to $\M$ by including the universal arcs that where possible removed by Steps~\ref{alg:normalization:Step1}--\ref{alg:normalization:Step4} of Algorithm~\ref{alg:normalization}.  This can be achieved as in Step~\ref{alg:normalization:stepDup} of Algorithm~\ref{alg:normalization}.  To obtain the positive certificate when $U_0(\M)$ is a PIG model, we refer to the proof of Theorem~\ref{thm:caracterizacion-PCA->PHCA}, in particular, the implication $(iii) \Longrightarrow (i)$.  In this situation, $U_1(\M)$ contains a universal arc $A$. To obtain the required model, we include in ${\M}_0$ the arc $\overline{A}$, and then we can include the possible remaining universal arcs as in Step~\ref{alg:normalization:stepDup} of Algorithm~\ref{alg:normalization}.  The authentication of these certificates can be done in $O(n)$ time, as discussed for Algorithm~\ref{alg:NHCA->PHCA}.

The algorithm fails to transform the input PCA model into a PHCA model when $U_1(\M)$ is not HCA and $U_0(\M)$ is not PIG.  According to Theorem~\ref{thm:caracterizacion-PCA->PHCA}, an induced submodel of $\M$ whose intersection graph is either isomorphic to $W_4$ or to $S_3$ can be obtained as follows. Let $A_1, A_2$, and $A_3$ be the three arcs that together cover the circle of $U_1(\M)$. If none of these arcs is universal, then, as in Theorem~\ref{thm:caracterizacion-PCA->PHCA}, we know that there are three arcs $B_1,B_2,B_3$, such that $B_i$ intersects $A_j$ and not $A_i$, for all $1 \leq i, j \leq 3$, $i \neq j$.  In this case, the arcs $A_1, A_2, A_3, B_1, B_2$, and $B_3$ either induce a model of $S_3$ or contain a model of $W_4$.  On the other hand, if one among $A_1 ,A_2, A_3$, say $A_1$, is a universal arc then there are arcs $L, R$, such that $s(R)$ precedes $t(L)$ in $A_1$.  In this situation, a negative certificate can be obtained as above by replacing $A_1$ with either $R$ or $L$ when $R = A_2$ or $L = A_3$.  Finally, when $R \neq A_2$ and $L \neq A_3$, the arcs $A_1, A_2, A_3, L$ and $R$ induce the model of a forbidden $W_4$.  The authentication takes $O(1)$ time if the forbidden submodel is stored as a set of five or six pointers to the corresponding arcs of the model.  We summarize the above discussion in Algorithm~\ref{alg:PCA->PHCA}.

\begin{algorithm}[!ht]
 \caption{Recognition of PHCA graphs from PCA models.}\label{alg:PCA->PHCA}

 \textbf{Input:} A PCA model $\M$.

 \textbf{Output:} Either a PHCA model equivalent to $\M$ or an induced submodel of $W_4$ or $S_3$.

 \begin{AlgorithmSteps}
  \Step{Apply Steps~\ref{alg:normalization:Step1}--\ref{alg:normalization:Step4} of Algorithm~\ref{alg:normalization} to obtain $U_1(\M)$.\label{alg:PCA->PHCA:step1}}
  \Step{Execute Algorithm~\ref{alg:CA is NHCA} (simplified for PCA graphs), to verify whether $U_1(\M)$ contains three arcs that cover the circle. If negative, output the model obtained by the execution of Step~\ref{alg:normalization:stepDup} of Algorithm~\ref{alg:normalization} on $U_1(\M)$.}
  \Step{Let $A_1, A_2$ and $A_3$ be the three arcs of $U_1(\M)$ that were obtained in the previous step.}
  \Step{If $A_1, A_2$ and $A_3$ are not universal, then:}
  \IncreaseIndent
   \Step{Let $B_i$ be the arc whose beginning point is the first from $t(A_i)$, for $i \in \{1,2,3\}$.\label{alg:PCA->PHCA:step5}}
   \Step{If $B_1, B_2$ and $B_3$ are pairwise disjoint, then output $\{A_i, B_i\}_{1 \leq i \leq 3}$; otherwise, output $A_1, A_2, A_3$ and two intersecting arcs of $B_1, B_2$ and $B_3$.}
  \DecreaseIndent
  \Step{Sort $A_1, A_2$ and $A_3$ so that $A_1$ is the universal arc and $s(A_2) \in A_1$.}
  \Step{Traverse $\M$ from $s(A_1)$ to $t(A_1)$ to find if there are two arcs $L$ and $R$ such that $s(R)$ appears before $t(L)$. If $L$ and $R$ are found then:}
  \IncreaseIndent
   \Step{If $R = A_2$, then set $A_1 := R$ and goto Step~\ref{alg:PCA->PHCA:step5}.  If $L = A_3$, then set $A_1 := L$ and goto Step~\ref{alg:PCA->PHCA:step5}.  Output $A_1, A_2, A_3, L$ and $R$.}
  \DecreaseIndent
  \Step{Let $\M' := (\M \setminus \{A_1\}) \cup \{\overline{A_1}\}$.}
  \Step{Duplicate $\overline{A_1}$ in $\M'$ as many times as arcs where removed in Step~\ref{alg:PCA->PHCA:step1}, and return the model so obtained.}
 \end{AlgorithmSteps}
\end{algorithm}

\subsection{Recognition of UHCA graphs}
\label{sec:algorithms:UHCA}

In this section we briefly discuss how to transform either a UCA or a PHCA model $\M$ into a UHCA model, when possible.  We begin with the case in which $\M$ is PHCA. In this case, apply the algorithm by \citet{LinSzwarcfiterSJDM2008} to transform $\M$ into a UCA model $\M'$, if possible.  Since this algorithm preserves the order of the extremes of $\M$ then $\M'$ is both UCA and PHCA, \ie, $\M'$ is UHCA.  This algorithm takes $O(n)$ time and the model $\M'$ so generated can be authenticated to be UCA in $O(n)$ time.  If $\M$ is not equivalent to a UHCA model, then apply the algorithm by \citet{KaplanNussbaumDAM2009} to generate a negative certificate in $O(n)$ time.  This negative certificate can also be authenticated in $O(n)$ time.

Finally, consider the case in which $\M$ is a UCA model.  If $\M$ has two arcs that cover the circle, then the intersection graph of $\M$ is a complete graph and an equivalent UIG model is easily obtained in $O(n)$ time.  Suppose, then, that $\M$ is an NCA model, and consider the model $U_1(\M)$.  If $U_1(\M)$ is HCA then it is also UHCA and a UHCA model equivalent to $\M$ can be easily obtained in $O(n)$ time by duplicating the universal arc of $U_1(\M)$, if existing, as in Lemma~\ref{lem:no_mellizos modelo}.  If $U_1(\M)$ is not HCA but $U_0(\M)$ is an interval model then a PIG model equivalent to $\M$ can be obtained by applying Algorithm~\ref{alg:PCA->PHCA}.  A UIG model equivalent to $\M$ can be constructed in $O(n)$ time as in \citep{CorneilKimNatarajanOlariuSpragueIPL1995,LinSoulignacSzwarcfiter2009,Mitas1994}.  In the last case, if $U_1(\M)$ is not an HCA model and $U_0(\M)$ is not an interval model then, by Theorem~\ref{thm:caracterizacion-PCA->PHCA}, $\M$ is not equivalent to a UHCA model and a negative certificate is obtained as in Algorithm~\ref{alg:PCA->PHCA}.  All the certificate authentications take $O(n)$ time as before.

\subsection{Summary of the transformation algorithms}

In Table~\ref{tab:algorithms} we summarize the complexities of the transformation algorithms between circular-arc subclasses.  The main open problems are how to transform a circular-arc model into an NCA model in polynomial time, and how to transform a circular-arc model into an NHCA model in $O(n)$ time.  We recall that this last problem can be reduced to transforming an HCA model into an interval model in $O(n)$ time.

\begin{table}[ht!]
 \centering
 \begin{tabular}{|l|l|c|c|}
  \hline
  From                     & To        & Time complexity   & References \\
  \hline
  CA                       & NCA       & open              &  \\
  co-bipartite CA          & NCA       & $O(n^5m^6\log m)$ & \citet{MullerDAM1997,HellHuangJGT2004} \\
  CA                       & HCA       & $O(n)$            & \citet{JoerisLinMcConnellSpinradSzwarcfiterA2011} \\
  CA $\cup$ NCA            & PCA       & $O(n)$            & \citet{Nussbaum2008} \\
  PCA                      & UCA       & $O(n)$            & \citet{LinSzwarcfiterSJDM2008} \\
  CA $\cup$ HCA $\cup$ NCA & NHCA      & $O(n+m)$          & \citet{KratschMcConnellMehlhornSpinradSJC2006} \& \S~\ref{sec:algorithms:NHCA}  \\
  NHCA                     & PHCA      & $O(n)$            & \S\ref{sec:algorithms:PHCA} \\
  PCA                      & PHCA      & $O(n)$            & \S\ref{sec:algorithms:PHCA} \\
  UCA                      & UHCA      & $O(n)$            & \S\ref{sec:algorithms:PHCA} \& \S\ref{sec:algorithms:UHCA} \\
  PHCA                     & UHCA      & $O(n)$            & \citet{LinSzwarcfiterSJDM2008} \& \S\ref{sec:algorithms:PHCA} \\
  NHCA                     & IG        & $O(n)$            & Corollary~\ref{cor:CA-models-of-NHCA} \\
  PHCA $\cup$ UHCA         & PIG       & $O(n)$            & \S\ref{sec:algorithms:PHCA} \\
  IG                       & PIG       & $O(n)$            & \S\ref{sec:algorithms:NHCA} \\
  PIG                      & UIG       & $O(n)$            & \parbox[c]{6.5cm}{\centering \citet{CorneilKimNatarajanOlariuSpragueIPL1995,LinSoulignacSzwarcfiter2009}; \\ \citet{Mitas1994}} \\
  \hline
 \end{tabular}
 \caption[Time complexities of the transformation algorithms]{Time complexities of the transformation algorithms.  All algorithms proposed in this chapter are certified.}\label{tab:algorithms}
\end{table}

\section{Some additional properties of the NHCA subclasses}
\label{sec:properties}

In this section we prove some properties about the NHCA subclasses that might be useful from an algorithmic point of view.  Some of these were presented in Section~\ref{sec:motivation} as natural generalizations of properties about interval graphs.  

\subsection{Counting NPHCA models of PHCA graphs}

The first problem is to count how many NPHCA models does a PHCA graph admit.  This question of how many representations of a graph are there has been solved for both PIG graphs and co-connected PCA graphs; \citet{Roberts1969} proved that the PIG model of a connected PIG graph is unique up to full reversal, while \citet{HuangJCTSB1995} proved that the PCA model of a connected and co-connected PCA graph is unique up to full reversal.  

\begin{theorem}[\citealp{Roberts1969}]\label{thm:unique-PIG-models}
 Every connected PIG graph admits at most two PIG models, one the reverse of the other.
\end{theorem}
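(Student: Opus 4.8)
The plan is to read a linear vertex order directly off the model, to show that this order is a straight enumeration forced by $G$ up to reversal, and finally to show that the order determines the entire arrangement of extremes; this will bound the number of non-equal models by two.

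First I would fix a PIG model $\M$ of $G$ and order the vertices $v_1, \ldots, v_n$ by their beginning points, so that $s(v_1) < \cdots < s(v_n)$. Since $\M$ is proper, no interval contains another, so $s(v_i) < s(v_j)$ forces $t(v_i) < t(v_j)$; thus the ending points occur in the same order. From this it is routine that the order is a straight enumeration: for $i<j$ we have $v_iv_j \in E(G)$ iff $s(v_j) < t(v_i)$, and whenever $v_iv_j \in E(G)$ every intermediate $v_k$ (with $i \le k \le j$) is adjacent to both, so that each $N[v_i]$ is a set of consecutive vertices $\{v_{l(i)}, \ldots, v_{r(i)}\}$. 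I would then record the bookkeeping fact that the order determines the model: each ending point $t(v_i)$ is forced to fall immediately after $s(v_{r(i)})$, where $v_{r(i)}$ is the last neighbour of $v_i$ in the order, and the ending points keep the same order as the beginning points. Hence the whole sequence of extremes is a function of the vertex order alone, and reversing the vertex order produces exactly the reverse model.

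The core of the argument, and the step I expect to be the main obstacle, is to show that a connected $G$ admits only one straight enumeration up to reversal. I would do this by peeling off ends inductively. Each end vertex $v_1$ is simplicial: because the maps $l$ and $r$ are nondecreasing, all vertices of $N[v_1] = \{v_1, \ldots, v_{r(1)}\}$ are pairwise adjacent, so $N[v_1]$ is a clique; a simplicial vertex is never a cut vertex, so $G \setminus \{v_1\}$ is again a connected proper interval graph with straight enumeration $v_2, \ldots, v_n$. By induction that shorter order is unique up to reversal, and connectivity forces $v_1$ to be reattached at one of the two ends on the side dictated by $N[v_1]$, the only residual freedom being which end counts as first, i.e.\ the reversal. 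The delicate point is to identify purely from $G$ which vertices are legitimate ends (there may be other simplicial vertices) and to check that the reattachment is forced; a clean way to package this is to note that the betweenness relation ``$v_k$ lies between $v_i$ and $v_j$'' is recoverable from $G$ — for example, if $v_iv_j \notin E(G)$ then every common neighbour of $v_i$ and $v_j$ lies strictly between them, since monotonicity of $r$ forbids a common neighbour on either side — and a finite linear order is determined up to reversal by its betweenness triples.

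Finally I would combine the two ingredients. Any PIG model of $G$ yields a straight enumeration, which by the previous paragraph is one of exactly two mutually reverse orders; here vertices sharing a closed neighbourhood form a block whose internal order is immaterial and is fixed by convention, consistently with the paper's treatment of twin arcs via Lemma~\ref{lem:no_mellizos modelo} and Corollary~\ref{cor:no_mellizos grafo}. Since the sequence of extremes is completely determined by the vertex order, and since reversing that order produces precisely the reverse model, $G$ admits at most two non-equal PIG models, one the reverse of the other, as claimed.
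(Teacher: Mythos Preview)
The paper does not prove this theorem: it is stated with attribution to \citet{Roberts1969} and used as a black box in the proof of Theorem~\ref{thm:unique-PHCA-models}. There is therefore no proof in the paper to compare your proposal against.

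As for the proposal itself, the overall strategy---extract a straight enumeration from the model, argue that the enumeration determines the extreme sequence, and then show the enumeration is unique up to reversal---is the standard route and is sound in outline. The weak point is exactly the one you flag yourself: the inductive peeling of end vertices. You note that $v_1$ is simplicial and that $G\setminus\{v_1\}$ is again connected PIG, and invoke induction; but you do not actually establish that the only way to extend the (by induction unique) enumeration of $G\setminus\{v_1\}$ is to place $v_1$ at an end, nor do you pin down which end. Your fallback via the betweenness relation is closer to a real argument, but the single case you treat (common neighbours of a non-adjacent pair lie strictly between them) is not enough to recover the full betweenness relation, and you would still need to show that connectedness yields enough such triples to rigidify the order. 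Finally, the twin issue is genuine here: distinct twins yield genuinely different extreme sequences unless you quotient them out, so the statement as written is really about twin-free connected PIG graphs (or, equivalently, about models up to permutation within twin classes), and your appeal to Lemma~\ref{lem:no_mellizos modelo} should be made precise rather than left as a convention.
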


\begin{theorem}[\citealp{HuangJCTSB1995}]\label{thm:unique-PCA-models}
 Every connected PCA graph whose complement is either connected or non-bipartite admits at most two PCA models, one the reverse of the other.
\end{theorem}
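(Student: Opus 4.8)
The plan is to translate PCA models into circular vertex enumerations and then prove that the enumeration is rigid. By Theorem~\ref{thm:PCA->normal} I may restrict to NPCA models, and after reducing (again via Theorem~\ref{thm:PCA->normal}) to a model with at most one universal arc, Theorem~\ref{thm:PCA is twin consecutive} guarantees that each maximal twin family occupies consecutive positions; so I group the vertices into \emph{blocks} of mutual twins, reinserting the removed universal arcs at the end by Lemma~\ref{lem:no_mellizos modelo}. From an NPCA model $\M$ I read off the cyclic order $\sigma$ of the beginning points. First I would verify that $\sigma$ is a \emph{round enumeration}: since $\M$ is proper no arc is contained in another, so the ending points occur in the same cyclic order as the beginning points; and since $\M$ is normal no two arcs cover the circle (Lemma~\ref{lem:two covering arcs PCA}), so the beginning points lying inside a given arc $A_v$ form a contiguous stretch of $\sigma$. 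Together these force $N[v]$ to be an interval of $\sigma$ for every $v$. Conversely every round enumeration is realised by an NPCA model, and the combinatorial type of that model (the cyclic order of all $2n$ extremes) is recovered from $\sigma$ together with the adjacencies of $G$. Hence counting non-equal PCA models of $G$ amounts to counting the circular block orders induced by its round enumerations, up to rotation and up to reversal.

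The core is to show that a connected $G$ satisfying the hypothesis admits a unique such block order up to reversal. Here I would mimic the linear argument behind Roberts' theorem (Theorem~\ref{thm:unique-PIG-models}): in the proper interval case the closed neighbourhoods are nested enough that each vertex has a forced successor, yielding a unique order up to reversal. In the circular case each block $B$ has a closed neighbourhood that is a contiguous range of blocks, and I would isolate a \emph{boundary relation} — which block opens, respectively closes, the neighbourhood range of $B$ — and argue that this relation is intrinsic to $G$, independent of the enumeration. Propagating the boundary relation around the circle then determines the entire cyclic order once one oriented step is fixed, the only residual freedom being the global orientation, that is, the reversal.

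Next I would pin down exactly when rigidity fails and match it to the excluded hypothesis. If $G$ is disconnected, each connected component occupies a contiguous arc whose neighbourhoods never reach the others, so the components may be cyclically permuted and individually reflected; this is why connectivity is indispensable. The remaining source of flexibility is the presence of large ``universal-like'' parts that let blocks slide past one another, which occurs precisely when $G$ is a join of co-bipartite pieces, equivalently when $\overline{G}$ is both disconnected and bipartite — the cocktail-party graphs and $K_n$ being the extreme instances. Demanding that $\overline{G}$ be connected or non-bipartite forbids exactly this decomposition. The interval regime, in which some point of $C(\M)$ is uncovered so that $G$ is an interval graph, is handled separately: since $K_{1,3}$ is not a PCA graph, such a $G$ is claw-free and interval, hence proper interval, and Theorem~\ref{thm:unique-PIG-models} applies directly.

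The main obstacle is the rigidity step of the second paragraph. Unlike the linear case, the circle carries a global rotational degree of freedom, so one must show that the boundary relation propagates consistently all the way around without admitting a nontrivial cyclic shift; making the claim ``the boundary relation is graph-intrinsic'' precise, and proving that its failure coincides exactly with the co-bipartite-join configurations excluded by the hypothesis, is where essentially all the difficulty lies.
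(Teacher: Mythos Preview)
The paper does not prove this theorem: it is quoted from \citet{HuangJCTSB1995} and used as a black box in the proof of Theorem~\ref{thm:unique-PHCA-models}. There is therefore no ``paper's own proof'' to compare your proposal against.

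On the merits of your sketch itself: the translation from NPCA models to round enumerations is correct and is indeed how Huang frames the problem, and your identification of the twin-block reduction via Theorem~\ref{thm:PCA is twin consecutive} is sound. However, you openly concede that the entire substance of the theorem --- that the boundary relation is intrinsic to $G$ and that its failure coincides precisely with $\overline{G}$ being disconnected and bipartite --- is left undone. That is not a minor technicality; Huang's proof of this rigidity runs through the structure theory of local tournaments and is the whole content of the result. What you have written is an accurate roadmap, not a proof.

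One further gap worth flagging: your handling of the interval regime via Theorem~\ref{thm:unique-PIG-models} is incomplete as stated. Theorem~\ref{thm:unique-PIG-models} bounds the number of \emph{PIG} models, but you need to bound the number of \emph{PCA} models, and a proper interval graph may in principle admit non-interval PCA models (arcs wrapping the circle). You would still need to argue that under the hypothesis on $\overline{G}$ no such additional models exist, or else fold this case back into the main circular argument rather than treating it separately.
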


Besides the theoretical interest behind these questions, it turns out that the uniqueness of PIG models is strongly used to solve the recognition problem.  In fact, interval graphs can be recognized in $O(1)$ time per edge insertion due to this property \citep[see][]{DengHellHuangSJC1996,HellShamirSharanSJC2001}.  Because of the strong relationship between PIG and PHCA graphs, it should come at no surprise that there is a unique NPHCA model of every PHCA graph, up to full reversal.  Even more, this property can be exploited so as to generalize the PIG recognition algorithms to the PHCA case, with no much effort \citep[see][]{Soulignac2010}.

\begin{theorem}\label{thm:unique-PHCA-models}
 Every connected PHCA graph admits at most two NPHCA models, one the reverse of the other.
\end{theorem}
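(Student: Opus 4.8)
The plan is to observe that every NPHCA model is in particular a PCA model, and then to bound the number of PCA (equivalently, NPHCA) models by invoking the known uniqueness theorems for PIG and PCA graphs, splitting the argument according to whether $G$ is an interval graph.

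First I would dispose of the interval case with a lemma: if $G$ is an interval graph, then every NHCA model of $G$ is in fact an interval model. To prove this, suppose some NHCA model $\M$ of $G$ covers its circle, and take a subfamily $\mathcal{B} = \{B_1,\dots,B_k\}$ that is minimal with respect to covering $C(\M)$. Ordering the $B_i$ by their beginning points, minimality forces each $B_i$ to own a private point (covered by no other arc of $\mathcal{B}$), whence consecutive arcs overlap and non-consecutive arcs are disjoint; thus $B_1,\dots,B_k$ induce a hole $C_k$. Since $\M$ is normal and Helly, no two and no three arcs cover the circle (Theorem~\ref{thm:NHCA->no2y3cubrenCirculo}), so $k\ge 4$ and $G$ contains an induced hole, contradicting that $G$ is an interval graph. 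Hence no NHCA model of $G$ covers the circle. Now, since $G$ is PHCA it contains no induced $K_{1,3}$, so by Theorem~\ref{thm:roberts} $G$ is a PIG graph; and by the lemma every NPHCA model of $G$ is a proper interval model, i.e.\ a PIG model, while conversely every PIG model is NPHCA. Theorem~\ref{thm:unique-PIG-models} then yields at most two such models, one the reverse of the other.

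For the non-interval case I would argue that $G$ is a connected PCA graph and appeal to Huang's theorem. When $\overline{G}$ is connected or non-bipartite, Theorem~\ref{thm:unique-PCA-models} gives at most two PCA models of $G$ up to reversal; as every NPHCA model is a PCA model, $G$ has at most two NPHCA models, one the reverse of the other. Here Corollary~\ref{cor:PCA-models-of-PHCA} guarantees that these models are genuinely Helly, although for the counting it suffices that the NPHCA models form a subset of the PCA models.

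The remaining, and hardest, case is Huang's exception: $G$ is non-interval, connected and PHCA while $\overline{G}$ is both disconnected and bipartite, so that $G$ is co-bipartite (the graph $C_4$ is the smallest instance). Here Theorem~\ref{thm:unique-PCA-models} no longer bounds the number of PCA models, so a direct structural analysis is needed. The plan is to write $V(G)=V_1\cup V_2$ with $V_1,V_2$ cliques and, in a genuine NPHCA model, to use the Helly property to fix common points $p_1\in\bigcap V_1$ and $p_2\in\bigcap V_2$ of the two cliques; since $G$ is not complete these points differ and are separated by a non-edge between $V_1$ and $V_2$. Using normality together with the fact that no three arcs cover the circle, one shows that the cyclic arrangement of the $V_1$-arcs around $p_1$ and of the $V_2$-arcs around $p_2$ is forced up to simultaneous reversal, so that again at most two NPHCA models arise. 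I expect this co-bipartite analysis to be the main obstacle, precisely because it is where the clean black-box results of Roberts and Huang give no information and all the work must be done by hand.
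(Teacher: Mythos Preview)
Your interval case is correct and clean: the minimal-cover argument (if an NHCA model covers its circle, a minimal covering subfamily is an induced hole of length $\ge 4$) is exactly the mechanism the paper uses in its Case~2, though you package it as a standalone lemma. Your Huang case matches the paper. The gap is precisely where you locate it---the non-interval case with $\overline{G}$ disconnected and bipartite---and your sketch there (fix Helly points $p_1,p_2$ for the two cliques and argue the arrangement is forced) is not carried out; done directly it would be laborious, since this case contains not only $C_4$ but every graph obtained from $C_4$ by adding twins.

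The device you are missing is a reduction modulo twins. If $G$ has a universal vertex, then any hole together with that vertex would induce a wheel, so $G$ is hole-free and your own lemma shows every NPHCA model is a PIG model; Theorem~\ref{thm:unique-PIG-models} finishes. If $G$ has no universal vertex, then no NPHCA model has a universal arc, and Theorem~\ref{thm:PCA is twin consecutive} says every such model is twin-consecutive; hence deleting one twin of a twin pair and inducting is legitimate. This reduces everything to the twin-free case, where your residual class essentially evaporates: with no twins and no induced $W_4$, a disconnected bipartite $\overline{G}$ is forced to have exactly two components, and a short argument (an induced $P_3$ in one co-component together with a non-edge from the other already yields a $W_4$) shows that either one component is a singleton---giving a universal vertex, hence the interval case---or $G\cong C_4$, which has a unique circular-arc model up to reversal.
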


\begin{proof}
 Suppose, for the base case, that $G$ contains no pair of twin vertices.  If $\overline{G}$ is connected or it is non-bipartite then the result follows from Theorem~\ref{thm:unique-PCA-models}.  Then, it is enough to deal with the case in which $\overline{G}$ has $k > 1$ components $\overline{H_1}, \ldots, \overline{H_k}$, all of which are bipartite.  We denote by $H_i$ the subgraph of $G$ induced by the vertices of $\overline{H_i}$, for $1 \leq i \leq k$.  Since $G$ has no twin vertices and it is $W_4$-free by Corollary~\ref{cor:caracterizacion-PCA->PHCA}, then it follows that $k = 2$.  Analyze the following two cases:

\begin{description}
 \item [Case 1:] $|H_1| \geq |H_2| > 1$.  If $H_1$ contains an induced path of four vertices $v_1, v_2, v_3, v_4$ then $v_1, v_2, v_3$ together with a pair of non-adjacent vertices of $H_2$ induce a $W_4$ in $G$, a contradiction to Corollary~\ref{cor:caracterizacion-PCA->PHCA}.  Then, $\overline{H_1}$ is bipartite and $H_1$ contains no twins and no paths of four vertices.  Thus, $H_1$ if isomorphic to $\overline{P_2}$.  Then, by the case hypothesis, $H_2$ is also isomorphic to $\overline{P_2}$.  Consequently, $G$ is isomorphic to $C_4$, which admits a unique circular-arc model.

 \item [Case 2:] $|H_2| = 1$.  In this case $G$ contains no hole, or otherwise the hole and the vertex of $H_2$ would induce a wheel. Let $\M$ be an NPHCA model of $G$. Since $G$ contains no hole, it follows that some point of $C(\M)$ is not covered by the arcs of $\M$.  In other words, every NPHCA model of $G$ is a PIG model.  Hence, by Theorem~\ref{thm:unique-PIG-models}, $G$ admits two NPHCA models, one the reverse of the other.
\end{description}

 For the inductive case, observe that either $G$ contains no universal vertices or every NPHCA model of $G$ is PIG as in Case 2 above.  In the former case, every NPHCA model of $G$ is twin-consecutive by Theorem~\ref{thm:PCA is twin consecutive}, thus there is a unique model by the inductive hypothesis.  In the latter case, $G$ admits a unique PIG model by Theorem~\ref{thm:unique-PIG-models}.
\end{proof}

\subsection{Circular-ones properties of the clique matrix}

For the second part of this section, we study the relationship between PHCA graphs and the circular-ones properties of the clique matrix.  A $0$-$1$ matrix $M$ has the \emph{consecutive-ones property for rows} if its columns can be ordered so that, in every row, the ones are consecutive.  Matrix $M$ has the \emph{circular-ones property for rows} if the columns can be ordered so that, in every row, either the zeros or the ones are consecutive.  The consecutive and circular-ones properties for columns are defined analogously.  That is, $M$ has the \emph{consecutive-ones property for columns} if $M^T$ has the consecutive-ones property for rows, while $M$ has the \emph{circular-ones property for columns} if $M^T$ has the circular-ones property for rows.  Here $M^T$ is the transpose matrix of $M$.

Let $C_1, \ldots, C_k$ be the cliques of a graph $G$ and $v_1, \ldots, v_n$ be its vertices.  The \emph{clique-vertex incidence matrix} of $G$, or simply the \emph{clique matrix} of $G$, is the $0$-$1$ matrix $Q(G)$ with $k$ rows and $n$ columns where $Q(G)_{i,j} = 1$ if and only if vertex $v_j$ belongs to clique $C_i$, for every $1 \leq i \leq k$, $1 \leq j \leq n$.  

We mentioned in Section~\ref{sec:motivation} that $Q(G)$ has the consecutive-ones (resp.\ circular-ones) property for columns if and only if $G$ is an interval graph (resp.\ HCA graph).  The stronger condition of $Q(G)$ having also the consecutive-ones property for rows is equivalent to the condition of $G$ being a proper interval graph.  We prove the analogous result for the circular-ones property, \ie, $Q(G)$ has the circular-ones property for both rows and columns if and only if $G$ is a PHCA graph.

\begin{theorem}
 A graph $G$ is a PHCA graph if and only if $Q(G)$ has the circular-ones property for both rows and columns.
\end{theorem}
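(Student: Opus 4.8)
The plan is to prove both implications through the standard dictionary between a circular-arc model and its clique matrix, relying on the already-cited fact that $Q(G)$ has the circular-ones property for columns if and only if $G$ is HCA. The whole content of the theorem therefore lies in matching the circular-ones property for \emph{rows} with the \emph{proper} condition of the model, on top of the Helly one.

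For the forward implication I would start from a PHCA model $\M$ of $G$. Since $\M$ is Helly, every clique is represented by a clique point, and ordering these points around $C(\M)$ yields a cyclic ordering of the cliques in which the cliques containing any fixed arc are exactly the clique points inside that arc, hence a circular range; this is the column property (essentially Gavril's argument). For the row property I would use that $\M$ is proper: ordering the vertices by the cyclic order of the beginning points of their arcs, I would first establish the auxiliary fact that in a proper model the arcs covering any fixed point are consecutive in this cyclic order (here the beginning points and the ending points occur in the same cyclic order precisely because no arc contains another). Applying this to each clique point, the vertices of every clique form a circular range of the vertex ordering, which is exactly the row property.

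For the backward implication I would first invoke the column property to obtain that $G$ is HCA, together with a cyclic clique order $C_1,\dots,C_k$; from it I build the canonical HCA model whose clique points $p_1,\dots,p_k$ follow the clique order and in which the arc $A_v$ of each vertex $v$ covers exactly the clique points of the cliques containing $v$ (a circular range, since $v$'s cliques are circularly consecutive). I would then use the vertex ordering $\pi$ supplied by the row property to place the actual beginning and ending points inside each segment, aiming to make the beginning points and the ending points appear in the same cyclic order, that is, to turn the canonical model into a proper one while preserving all intersections.

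The hard part will be this last step: showing that the resulting model is genuinely proper, i.e.\ that no arc ends up strictly containing another. A forced containment $A_w \subsetneq A_v$ survives all local reorderings precisely when $w$'s clique range is strictly interior to $v$'s on \emph{both} sides, that is, when there are two cliques containing $v$ but not $w$ lying on opposite sides of every clique of $w$. I would rule this out from the row property through the elementary observation that on a circle one cannot have three ranges sharing a common position while each range contains a private element missed by the other two; applied to a putative induced claw (center $v$, leaves through three mutually exclusive cliques) this already forbids $K_{1,3}$, matching the fact (Theorem~\ref{thm:caracterizacion-NHCA->PHCA}) that PHCA graphs are claw-free, and a variant of the same counting handles the two-sided nesting. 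Universal vertices must be treated apart, since the canonical arc of a universal vertex swallows every other arc; here I would replace such an arc by its complement exactly as in the implication $(iii)\Rightarrow(i)$ of Theorem~\ref{thm:caracterizacion-PCA->PHCA}, and I would dispose of the interval case by appealing to the classical equivalence between proper interval graphs and the consecutive-ones property for both rows and columns. As a safety net, once the model is known to be proper (hence $G$ is PCA) one may instead finish via Corollary~\ref{cor:caracterizacion-PCA->PHCA}, checking that the obstructions $W_4$ and $S_3$ both violate the row property.
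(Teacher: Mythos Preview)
Your forward direction is essentially the paper's: order the arcs by beginning points and the cliques by their clique points, and use properness to show that the arcs through any fixed clique point form a circular range in the vertex ordering.

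For the backward direction, the paper takes a much shorter route than your constructive one. It argues by contrapositive: if $G$ is HCA but not PHCA, then by Corollary~\ref{cor:caracterizacion-HCA->NHCA} and Theorem~\ref{thm:caracterizacion-NHCA->PHCA} the graph $G$ contains an induced $K_{1,3}$, $W_4$, $W_5$, or $S_3$ (the larger wheels, the rising suns, and the umbrella all contain a $K_{1,3}$). One then checks by hand that none of these four small clique matrices has the circular-ones property for rows; since restricting $Q(G)$ to the relevant cliques and vertices preserves that property, $Q(G)$ cannot have it either. No model construction is needed.

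Your constructive approach has a genuine gap at exactly the point you flag. Ruling out a forced containment $A_w\subsetneq A_v$ does not reduce to forbidding a claw: in that configuration every clique through $w$ also contains $v$, so $N[w]\subseteq N[v]$, and the two private cliques of $v$ need not furnish an independent triple centered at $v$ (the witnesses $u_L,u_R$ one would pick in $C_L,C_R$ may be adjacent to each other). The ``variant of the same counting'' you allude to does not obviously close this. Your safety-net paragraph is in fact the paper's actual argument, but you frame it as a postscript \emph{after} establishing that $G$ is PCA, which is precisely the step you have not secured; and you list only $W_4$ and $S_3$, omitting $W_5$ (which is claw-free and must be checked separately). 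The fix is to drop the construction altogether and run the contrapositive directly through the four obstructions.
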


\begin{proof}
 Let $\M$ be a PHCA model of $G$ and let $A_1, \ldots, A_n$ be the arcs of $\M$ in order of appearance of its beginning points.  Since $\M$ is HCA, it follows that each clique is represented by some clique point.  Let $p_1, \ldots, p_k$ be the clique points in circular order.  Define $Q$ as the $k \times n$ matrix where $Q_{i,j} = 1$ if $A_j$ crosses $p_i$, and $0$ otherwise.  By definition, $Q$ is a clique matrix of $G$.  Since we used the same construction as \citet{GavrilN1974}, it follows that $Q$ has the circular-ones property for columns.  We now show that $Q$ has also the circular-ones property for rows.  Let $r$ be some row of $Q$ and represent by $r_i$ the $i$-th column of $r$.  Suppose that the ones in $r$ are not all consecutive.  Then, there exist $a, b, c$ such that $r_a = r_c = 1$, $r_b = 0$ and $1 \leq a < b < c \leq n$.  In other words, the clique point $p_i$ is crossed by $A_a$ and $A_c$, but not by $A_b$.  Since $s(A_a), s(A_b), s(A_c)$ appear in this order in $\M$ and $\M$ is proper, then $s(A_a), p_i, t(A_c), s(A_b), t(A_b)$ appear in this order in $\M$.  Even more, since $\M$ is proper, it follows that $A_d$ crosses $p_i$ for every $d$ such that $1 \leq d \leq  a$ or $c \leq d \leq n$.  Thus, every zero in $r$ is consecutive, and so $Q$ has the circular-ones property for rows.

 For the converse, we show that if $G$ is not a PHCA graph, then $Q(G)$ does not have the circular-ones property for either the rows or the columns.  If $G$ is not an HCA graph, then $Q(G)$ does not have the circular-ones property for columns \citep{GavrilN1974}.  Suppose then that $G$ is HCA and it is not PHCA.  Then, by Corollary~\ref{cor:caracterizacion-HCA->NHCA} and Theorem~\ref{thm:caracterizacion-NHCA->PHCA}, $G$ contains a $K_{1,3}$, a $W_4$, a $W_5$, or an $S_3$ as an induced subgraph.  Clique matrices for these graphs are depicted in Figure~\ref{fig:clique-matrices-of-nonPHCA}.  By inspection, none of these matrices has the circular-ones property for rows.  Thus, as $Q(G)$ contains at least one (permutation) of these matrices as a submatrix, it follows that $Q(G)$ does not have the circular-ones property for rows.
\end{proof}

\begin{figure}
 \hspace*{\stretch{1}}
 \parbox{29mm}{%
\begin{displaymath}%
 \left(
 \begin{array}{cccc}
  1 & 1 & 0 & 0 \\
  1 & 0 & 1 & 0 \\
  1 & 0 & 0 & 1 
 \end{array}
 \right)
\end{displaymath}%
 }%
 \hspace*{\stretch{1}}%
\parbox{34mm}{%
\begin{displaymath}%
 \left(
 \begin{array}{ccccc}
  1 & 1 & 0 & 0 & 1 \\
  1 & 1 & 1 & 0 & 0 \\
  1 & 0 & 1 & 1 & 0 \\
  1 & 0 & 0 & 1 & 1
 \end{array}
 \right)
\end{displaymath}%
 }%
 \hspace*{\stretch{1}}%
\parbox{40mm}{%
\begin{displaymath}%
 \left(
 \begin{array}{cccccc}
  1 & 1 & 0 & 0 & 0 & 1 \\
  1 & 1 & 1 & 0 & 0 & 0 \\
  1 & 0 & 1 & 1 & 0 & 0 \\
  1 & 0 & 0 & 1 & 1 & 0 \\
  1 & 0 & 0 & 0 & 1 & 1
 \end{array}
 \right)
\end{displaymath}%
 }%
 \hspace*{\stretch{1}}%
\parbox{40mm}{%
\begin{displaymath}%
 \left(
 \begin{array}{cccccc}
  1 & 0 & 1 & 0 & 1 & 0 \\
  1 & 1 & 0 & 0 & 0 & 1 \\
  1 & 1 & 1 & 0 & 0 & 0 \\
  0 & 1 & 1 & 1 & 0 & 0 \\
 \end{array}
 \right)
\end{displaymath}%
 }%
 \hspace*{\stretch{1}}
 \caption[Clique matrices of $K_{1,3}$, $W_4$, $W_5$ and $S_3$]{From left to right, the clique matrices of $K_{1,3}$, $W_4$, $W_5$, and $S_3$ are shown.}\label{fig:clique-matrices-of-nonPHCA}
\end{figure}

\subsection{The local interval property for graphs}
\label{sec:orientations}

Recall that, by the local interval property, $\M$ is an NHCA model if and only if $N[A]$ is an interval model, for every arc $A \in \A(\M)$.  As a consequence, we obtain that if $G$ is an NHCA graph, then $N[v]$ is an interval graph for every $v \in V(G)$.  The converse does not hold; the umbrella is not an NHCA graph, but the closed neighborhood of each of its vertices is an interval graph.  Nevertheless, it is possible to give a necessary and sufficient condition that reflects the fact that NHCA graphs are local interval, independent of how their models look like.  This local interval property about graphs is based on straight and round oriented graphs.  Unlike the previous properties described in this paper, these characterizations can be generalized to NCA and PCA graphs and, for this reason, they fill a gap by showing exactly what is lost in the jump from interval graph to NCA graphs and from PIG graphs to PCA graphs \citep[][present many classes of graphs defined as those graphs that can be oriented with some particularity]{Bang-JensenGutin2001}.  We begin by introducing some useful notation and terminology. 

For an ordering $X = x_1, \ldots, x_n$, we say that $x_i$ is to the \emph{left} of $x_j$ and $x_j$ is to the \emph{right} of $x_i$, for every $1 \leq i < j \leq n$.  The elements $x_1$ and $x_n$ are respectively called the \emph{leftmost} and \emph{rightmost} of $X$.  For $x_i, x_j \in X$, the \emph{circular range} $\cir[x_i, x_j]$ is defined as the ordering $x_i, x_{i+1}, \ldots, x_j$ where, as usual, all the operations are taken modulo $n$.  Similarly, the range $\cir[x_i, x_j)$ is obtained by removing the rightmost element from $\cir[x_i, x_j]$, the range $\cir(x_i, x_j]$ is obtained by removing the leftmost element from $\cir[x_i, x_j]$, and $\cir(x_i, x_j)$ is obtained by removing both the leftmost and rightmost elements from $\cir[x_i, x_j]$.  For $1 \leq i \leq j \leq n$, define the \emph{linear range} $\lin[x_i, x_j]$ as $[x_i, x_j]$, while for $i > j$ define $\lin[x_i, x_j]$ as the empty ordering.  The linear ranges $\lin[x_i, x_j)$, $\lin(x_i, x_j]$ and $\lin(x_i, x_j)$ are defined analogously.

An oriented graph $D$ is \emph{out-straight} if there is an ordering $v_1, \ldots, v_n$ of $V(D)$ such that, for every vertex $v_i$,  $N^+[v_i] =  \lin[v_{i}, v_{i+r}]$, where $r = d^+(v_i)$.  The ordering $v_1, \ldots, v_n$ is referred to as an \emph{out-straight enumeration} of $D$.  It is not hard to see that a graph $G$ is an interval graph if and only if it admits an out-straight orientation $D$.  Indeed, $\{(s_i,t_i) \mid v_i \in V(G)\}$ is an interval model of a graph $G$ such that $s_1 < \ldots < s_n$ if and only if $v_1, \ldots, v_n$ is an out-straight enumeration of the digraph with vertex set $\{v_1, \ldots, v_n\}$ and edge set $\{v_i \to v_j \mid i < j \text{ and } t_i > s_j\}$.  In Figure~\ref{fig:out-straight}, an interval graph, together with one of its out-straight orientations, is depicted.

\begin{figure}
 \centering
 \includegraphics{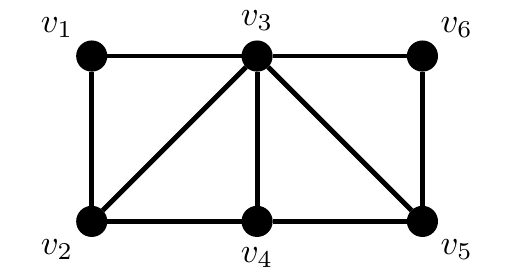} \hspace*{1cm} \includegraphics{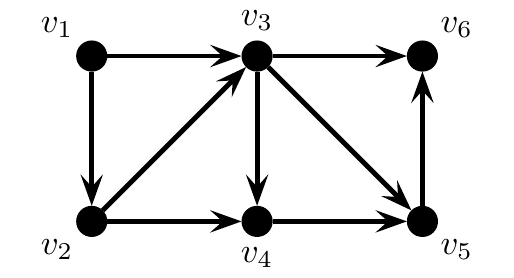}  
 \caption{An interval graph and one of its corresponding out-straight orientations.  The ordering $v_1, \ldots, v_6$ is an out-straight enumeration of the orientation.}\label{fig:out-straight}
\end{figure}

The most common way to generalize the definition of out-straight enumerations is to exchange the linear range $\lin[\bullet, \bullet]$ with a circular range $\cir[\bullet, \bullet]$.  That is, an oriented graph $D$ is \emph{out-round} if there is an ordering $v_1, \ldots, v_n$ of $V(D)$ such that, for every vertex $v_i$, $N^+[v_i] = \cir[v_i, v_{i+r}]$, where $r = d^+(v_i)$.  The ordering $v_1, \ldots, v_n$ is referred to as an \emph{out-round enumeration} of $D$.  Figure~\ref{fig:out-round} shows three examples of out-round oriented graphs.

Although out-round oriented graphs are a natural generalization of out-straight oriented graphs, there is a key property about the scopes of the vertices that is completely lost.  Let $\phi = v_1, \ldots, v_n$ be an ordering of $V(D)$ for an oriented graph $D$.  For $v_i \in V(D)$, define the \emph{leftmost} neighbor of $v_i$ in $\phi$ as the vertex $v_l \in N^-[v]$ that appears last when traversing $\phi$ from $v_i$ in reverse order.  Similarly, define the \emph{rightmost} neighbor of $v_i$ in $\phi$ as the vertex $v_r \in N^+[v]$ that appears last when traversing $\phi$ from $v_i$ in forward order.  The \emph{scope} of $v_i$ in $\phi$ is the range $\cir[v_l, v_r]$ where $v_l$ and $v_r$ are the leftmost and rightmost neighbors of $v_i$ in $\phi$ (see Figure \ref{fig:out-round}~(c)).  The scope of $v_i$ can be thought as the unique range $S = \cir[v_l, v_r]$ such that $N[v_i] \subseteq S$, $v_l \in N^-[v_i]$ and $v_r \in N^+[v_i]$ (the inconvenient is that such range $S$ is not well defined when $v_i \not \in \cir[v_l, v_r]$).

It is easy to see that if $\phi = v_1, \ldots, v_n$ is an out-straight enumeration of an oriented graph $D$, then the scope $S$ of the vertex $v_i$ is an out-straight enumeration of $D[S]$, for every $v_i \in V(D)$.  This property does not hold for out-round oriented graphs.  The oriented $4$-wheel graph $D$ in Figure \ref{fig:out-round}~(a) is out-round, but the scope $S$ of its universal vertex contains an induced hole and, therefore, $D[S]$ is not out-straight.  We define the locally out-straight oriented graphs specifically to restore this property back.  That is, an oriented graph $D$ is \emph{locally out-straight} if there is an out-round enumeration $\phi = v_1, \ldots, v_n$ of $D$ such that $S$ is an out-straight enumeration of $D[S]$, for every $v_i \in V(D)$ with scope $S$ in $\phi$.  As before, the enumeration $\phi$ is referred to as a \emph{locally out-straight enumeration} of $D$.  The graph depicted in Figure \ref{fig:out-round}~(b) is locally out-straight and it is not out-straight.

\begin{figure}
 \centering
 \begin{tabular}{c@{\hspace{5mm}}c@{\hspace{5mm}}c}
 \includegraphics{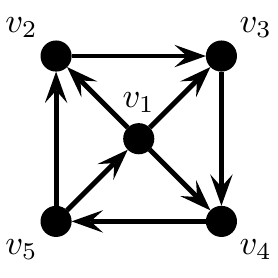} & \includegraphics{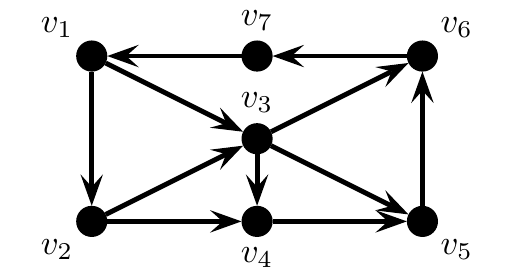} & \includegraphics{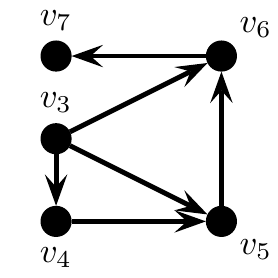}\\
 (a) & (b) & (c)
 \end{tabular}
 \caption[Examples of round oriented graphs]{Examples of round oriented graphs.  An out-round orientation of the $4$-wheel graph is depicted in~(a). Figure~(b) shows an orientation of a non-interval graph together with a locally out-straight enumeration $\phi = v_1, \ldots, v_7$.  The scope of $v_6$ in $\phi$ appears at~(c).}\label{fig:out-round}
\end{figure}

The next two theorems relate the out-round and the locally out-straight oriented graphs with the NCA and NHCA graphs, respectively.

\begin{theorem}\label{thm:NCA = right-consecutive}
 A graph is an NCA graph if and only if it admits an out-round orientation.
\end{theorem}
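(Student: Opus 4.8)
The plan is to set up an explicit correspondence between NCA models and out-round orientations, in direct analogy with the interval/out-straight correspondence recorded just above the statement. Given a circular-arc model $\M$ of $G$, I would order its arcs by the clockwise appearance of their beginning points, say $s(A_1), \ldots, s(A_n)$, and orient the underlying graph by declaring $v_i \to v_j$ precisely when $s(A_j) \in A_i$. The first task is to check that, when $\M$ is normal, this rule is a genuine orientation of the intersection graph of $\M$. On one hand, $s(A_j) \in A_i$ immediately forces $A_i \cap A_j \neq \emptyset$, so the rule only ever orients actual edges; on the other hand, every pair of intersecting arcs places at least one beginning point inside the other (the only intersection pattern hiding both beginning points is the disjoint one). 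The single configuration that produces both $s(A_j) \in A_i$ and $s(A_i) \in A_j$ is the interleaving $s(A_i), t(A_j), s(A_j), t(A_i)$, which by the (already used) lemma is exactly the pattern in which $A_i$ and $A_j$ cover the circle. Hence normality is equivalent to the rule selecting exactly one direction per edge, and the orientation is well defined.

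With the orientation fixed, the out-round property falls out: the beginning points lying inside a fixed arc $A_i$ are consecutive in the cyclic order of beginning points, so $N^+[v_i] = \cir[v_i, v_{i+r}]$ with $r = d^+(v_i)$, which is precisely the defining condition of an out-round enumeration. This settles the forward implication.

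For the converse I would reverse the construction. Starting from an out-round enumeration $v_1, \ldots, v_n$ of an orientation $D$ of $G$, I place beginning points $s(A_1), \ldots, s(A_n)$ in this cyclic order and, for each $i$, place the ending point $t(A_i)$ immediately after $s(A_{i+r})$, where $r = d^+(v_i)$, so that $A_i$ contains exactly the beginning points $s(A_{i+1}), \ldots, s(A_{i+r})$; since $N^+[v_i] = \cir[v_i, v_{i+r}]$, these are exactly the beginning points of the out-neighbors of $v_i$, and no arc covers the whole circle since $r \leq n-1$. I would then verify that this model realizes $G$: if neither beginning point lies inside the other arc, the extremes appear in the order $s(A_i), t(A_i), s(A_j), t(A_j)$ and the two arcs are disjoint, whereas if some beginning point lies inside the other arc the arcs meet; thus $A_i \cap A_j \neq \emptyset$ if and only if $s(A_j) \in A_i$ or $s(A_i) \in A_j$, i.e.\ if and only if $v_i \to v_j$ or $v_j \to v_i$, i.e.\ if and only if $v_iv_j \in E(G)$. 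Normality follows from the same dictionary: two arcs cover the circle only in the interleaved pattern $s(A_i), t(A_j), s(A_j), t(A_i)$, and that pattern would force both $s(A_j) \in A_i$ and $s(A_i) \in A_j$, hence both $v_i \to v_j$ and $v_j \to v_i$, contradicting that $D$ is an oriented graph.

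The main obstacle is the exhaustive but delicate case analysis translating between arc-intersection patterns and the orientation rule: isolating the interleaving $s(A_i), t(A_j), s(A_j), t(A_i)$ as the unique signature of two arcs covering the circle, and confirming that every other intersecting configuration (including one arc properly contained in another, where the inner beginning point is interior to the outer arc) places exactly one beginning point inside the other arc. Getting this dictionary exactly right, together with the routine bookkeeping needed to place the ending points without creating coinciding extremes (which we may assume away by the paper's standing general-position convention), is where the care lies; the two implications then read off directly from the definitions of NCA model and of out-round enumeration.
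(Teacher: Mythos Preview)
Your proposal is correct and follows essentially the same route as the paper's own proof: both directions use the same explicit correspondence, orienting $v_i \to v_j$ exactly when $s(A_j) \in A_i$ and, conversely, placing $t(A_i)$ just past $s(A_{i+d^+(v_i)})$. The only difference is expository: you spell out the intersection-pattern dictionary (in particular that the interleaving $s(A_i), t(A_j), s(A_j), t(A_i)$ is the unique signature of two arcs covering the circle) more explicitly than the paper does, but the underlying argument is identical.
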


\begin{proof}
 Let $\M$ be an NCA model of a graph $G$ with arcs $A_1, \ldots, A_n$, where $s(A_1)$, $\ldots$, $s(A_n)$ appear in this order in a traversal of $C(\M)$.  For $1 \leq i \leq n$, call $v_i$ to the vertex of $G$ that corresponds to $A_i$.  Define $D$ as the digraph with vertex set $V(G)$, where $v_i \to v_j$ if and only if $A_i$ crosses $s(A_j)$, for every $1 \leq i,j \leq n$.  We claim that $D$ is an out-round orientation of $G$ with $\phi = v_1, \ldots, v_n$ as an out-round enumeration.  Fix $i$ and $j$ such that $1 \leq i,j \leq n$.  First, notice that $s(A_i) \in A_j$ only if $s(A_j) \not \in A_i$ because $\M$ is NCA.  Thus, $v_i \to v_j$ if and only if $v_i$ is adjacent to $v_j$ and $v_j \nto v_i$, \ie, $D$ is an orientation of $G$.  Second, if $v_i \to v_j$ then, since $s(A_j) \in A_i$, it follows that $v_i \to v_k$ for every $v_k \in \cir(v_i, v_j]$.  Hence $N^+[v_i] = \cir[v_i, v_{i+r}]$, where $r = d^+(v_i)$, implying that $\phi$ is an out-round enumeration of $D$.

 For the converse, let $\phi = v_1, \ldots, v_n$ be an out-round enumeration of some orientation $D$ of $G$.  Pick $n$ points $s(1), \ldots, s(n)$ of a circle $C$ in such a way that $s(1), \ldots, s(n)$ appear in this order in a traversal of $C$.  For each vertex $v_i \in V(D)$, define $A_i$ as the arc of $C$ whose beginning point is $s(i)$ and whose ending point lies in $(s(i + r_i), s(i + r_i + 1))$, where $r_i = d^+(v_i)$.  We claim that $\M = (C, \{A_i\}_{1 \leq i \leq n})$ is an NCA model of $G$.  Fix $i$ and $j$ such that $1 \leq i, j \leq n$.  By definition, $v_i \to v_j$ if and only if $i < j \leq i + r_i$, thus $A_i$ crosses $s(j) = s(A_j)$ if and only if $v_i \to v_j$, \ie, $\M$ is a circular-arc model of $G$.  On the other hand, if $v_i \to v_j$, then $v_j \nto v_i$, hence if $A_i$ crosses $s(A_j)$, then $A_j$ does not cross $s(A_i)$.  That is, $\M$ is an NCA model of $G$.
\end{proof}

\begin{theorem}\label{thm:locally-interval}
 A graph is an NHCA graph if and only if it admits a locally out-straight orientation.
\end{theorem}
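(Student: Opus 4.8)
The plan is to reuse the correspondence of Theorem~\ref{thm:NCA = right-consecutive} between NCA models and out-round orientations, together with a single observation that drives both directions: the failure of a scope to be out-straight is witnessed exactly by a ``backward'' out-edge, which in the model means an arc wrapping around far enough to help cover the circle. Throughout, given an NCA model with arcs $A_1,\dots,A_n$ sorted by beginning point, I take $D$ to be the orientation with $v_i\to v_j$ iff $A_i$ crosses $s(A_j)$ and $\phi=v_1,\dots,v_n$; by Theorem~\ref{thm:NCA = right-consecutive} this is an out-round enumeration, and conversely every out-round enumeration comes from such a model. The dictionary is: for a scope $S=\cir[v_l,v_r]$ read as the linear ordering $v_l,\dots,v_r$, the enumeration of $D[S]$ is out-straight precisely when $D[S]$ has no out-edge $v_b\to v_a$ with $v_a$ preceding $v_b$ in $S$. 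Indeed, since $D$ is out-round the out-neighbours of any vertex inside $S$ already form a forward-consecutive block, so the only possible obstruction to out-straightness is such a backward edge; and in the model $v_b\to v_a$ with $s(A_a)$ lying before $s(A_b)$ inside the scope window forces $A_b$ to wrap around, making its complement $(t(A_b),s(A_b))$ a short arc contained in $(s(A_a),s(A_b))$.

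For the direction ($\Leftarrow$), suppose $D$ is locally out-straight with enumeration $\phi$; it is in particular out-round, so by Theorem~\ref{thm:NCA = right-consecutive} $G$ admits the NCA model $\M$ built from $\phi$. If $\M$ were not Helly then Corollary~\ref{cor:pca_no_hca} would give three arcs covering $C(\M)$, which, since no two of them cover it, may be relabelled $A_1,A_2,A_3$ so that $s(A_1),t(A_3),s(A_2),t(A_1),s(A_3),t(A_2)$ occur in this cyclic order. Then $v_1\to v_2\to v_3\to v_1$ is a directed triangle. The scope of $v_1$ contains the out-neighbour $v_2$ (to its right) and the in-neighbour $v_3$ (to its left), so in the scope ordering $v_3$ precedes $v_1$ precedes $v_2$; hence the edge $v_2\to v_3$ is a backward out-edge inside the scope of $v_1$, contradicting that this scope is out-straight. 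Therefore $\M$ is HCA, and being NCA it is an NHCA model of $G$.

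For the direction ($\Rightarrow$), I start from an NHCA model $\M$ of $G$ and take $D,\phi$ as above; $\phi$ is out-round and it remains to check that every scope is out-straight. Suppose the scope $S=\cir[v_l,v_r]$ of some $v_x$ had a backward out-edge $v_b\to v_a$. By the dictionary $A_b$ wraps around, its complement being a short arc inside $(s(A_a),s(A_b))\subseteq(s(A_l),s(A_r))$. Since $v_r$ is an out-neighbour of $v_x$ we have $t(A_x)>s(A_r)\ge s(A_b)$, and since $v_l$ is an in-neighbour of $v_x$ we have $s(A_l)<s(A_x)<t(A_l)$. I would then cover the complement $(t(A_b),s(A_b))$ using $A_x$ or $A_l$: if $v_x$ precedes $v_a$ in $S$ then $s(A_x)\le s(A_a)<t(A_b)$, so $A_x$ contains the complement and $A_x\cup A_b=C(\M)$; otherwise $v_a$ precedes $v_x$, and a short case analysis on the position of $t(A_l)$ relative to $t(A_b)$ shows that either $A_l\cup A_b=C(\M)$, or $A_x\cup A_b=C(\M)$, or the three arcs $A_l,A_x,A_b$ cover $C(\M)$. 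Each alternative contradicts the fact that $\M$ is normal and Helly, so no backward out-edge exists and $\phi$ is a locally out-straight enumeration.

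I expect the positional case analysis in the forward direction -- verifying that the wrapping arc $A_b$ together with $A_x$ and/or $A_l$ always covers the circle -- to be the main obstacle, since this is where both normality (no two arcs cover) and the no-three-arcs-cover consequence of the Helly property are genuinely used. The reverse direction is comparatively short, because the directed triangle produced by three covering arcs exhibits a backward out-edge inside a single scope immediately.
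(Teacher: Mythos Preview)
Your proof is correct, and the converse direction you give is genuinely different from (and considerably shorter than) the paper's.

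For the forward direction ($\Rightarrow$), your argument and the paper's are the same idea phrased contrapositively. The paper observes directly that for any $v_j$ in the scope $S=\cir[v_l,v_r]$ of $v_x$, the arc $A_j$ cannot cross $s(A_l)$, since otherwise $A_j,A_l,A_x$ would cover the circle; hence the submodel induced by $S$ is an interval model and $S$ is out-straight. Your version detects a backward edge $v_b\to v_a$ and then argues that $A_b$ together with $A_x$ and/or $A_l$ covers the circle. This is the same obstruction; your case split on whether $v_x$ precedes $v_a$ and on the position of $t(A_l)$ is unnecessary, since $A_l\cup A_x$ always covers the clockwise arc from $s(A_l)$ to $s(A_r)$ (because $s(A_x)\in A_l$ and $s(A_r)\in A_x$), and $\overline{A_b}\subset (s(A_a),s(A_b))$ lies inside that arc, so $A_l,A_x,A_b$ always cover $C(\M)$.

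For the converse ($\Leftarrow$), the two proofs diverge. The paper invokes Corollary~\ref{cor:caracterizacion-NCA->NHCA} and then verifies, one by one, that wheels, the $3$-sun, the tent, the umbrella and the rising suns admit no locally out-straight orientation; the rising-sun case alone takes a page of position arguments. Your route is direct and model-theoretic: from the locally out-straight enumeration $\phi$ build the NCA model of Theorem~\ref{thm:NCA = right-consecutive}; if it failed to be Helly, Corollary~\ref{cor:pca_no_hca} would yield three arcs covering the circle with no two covering, hence a directed triangle $v_p\to v_q\to v_r\to v_p$, and then $v_q\to v_r$ is a backward edge inside the scope of $v_p$. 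One point worth making explicit in your write-up is that the scope of $v_p$ indeed contains $v_p$ (so that $v_r$ lies to its left and $v_q$ to its right): if $v_l\in\cir(v_p,v_R]$ then out-roundness would give $v_p\to v_l$ while $v_l\to v_p$, contradicting that $D$ is an orientation. With that, your argument is complete, and it sidesteps the forbidden-subgraph machinery entirely.
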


\begin{proof}
 Let $\M$ be an NHCA model of a graph $G$ with arcs $A_1, \ldots, A_n$, where $s(A_1)$, $\ldots$, $s(A_n)$ appear in this order in a traversal of $C(\M)$.  Define the set $\{v_i\}_{1 \leq i \leq n}$, the orientation $D$ of $G$, and the out-straight enumeration $\phi$ of $D$ as in the proof of Theorem~\ref{thm:NCA = right-consecutive}.  We claim that $D$ is actually a locally out-straight orientation of $G$ with $\phi$ as a locally out-straight enumeration.  Fix $i$ such that $1 \leq i \leq n$, and take a small enough $\epsilon$.  Let $A_l$ be the arc crossing $s_i+\epsilon$ whose beginning point is farthest from $s_i$ in a counterclockwise traversal of $C(\M)$.  Similarly, let $A_r$ be the arc crossing $t_i-\epsilon$ whose beginning point is nearest to $t_i$ in a counterclockwise traversal of $C(\M)$.  By the way $\phi$ is defined, it follows that $v_l$ and $v_r$ are the leftmost and rightmost neighbors of $v_i$ in $\phi$.  Hence, $S = \cir[v_l, v_r]$ is the scope of vertex $v_i$ in $\phi$.  

 Since $s(A_1), \ldots, s(A_n)$ appear in this order in $\M$, it follows that $v_j \in S$ if and only if $s(A_j) \in (s_l-\epsilon, t_i+\epsilon)$.  Thus, $A_j$ cannot cross $s_l$, or otherwise $A_j$ together with $A_l$ and $A_i$ would cover the circle.  Therefore, the submodel $\M'$ of $\M$ induced by the arcs $A_l, \ldots, A_r$ is an interval model of $G[S]$.  This implies that $S$ is a locally out-straight enumeration of $D[S]$.  Consequently, $\phi$ is a locally out-straight enumeration of $D$ as claimed.

 For the converse, let $G$ be a graph that admits a locally out-straight orientation.  By Theorem~\ref{thm:NCA = right-consecutive}, $G$ must be an NCA graph.  So, it is enough to prove that $G$ contains no wheels, $3$-suns, umbrellas, rising suns, nor tents as induced subgraphs, by Corollary~\ref{cor:caracterizacion-NCA->NHCA}.  It is not hard to see that the class of graphs that admit a locally out-straight orientation is hereditary, hence we need only prove that wheels, $3$-suns, umbrellas, rising suns, and tents admit no locally out-straight orientations.  We shall do this for the proof.

 For the first case, let $D$ be an out-round orientation of a wheel and take $\phi$ as an out-round enumeration of $D$.   Clearly, if $v$ is the universal vertex of the wheel and $S$ is its scope, then $N[v] = V(D) \subseteq S$.  Consequently, $D[S]$ is not out-straight because it contains a hole, implying that $\phi$ is not a locally out-straight enumeration.  Therefore, $D$ is not locally out-straight.

 For the second case, let $D$ be an out-round orientation of the $n$-rising sun graph, for $n \geq 4$.  Recall that the $n$-rising sun is the graph obtained by inserting two universal vertices $v_1$ and $v_n$ into a path $P = v_2, \ldots, v_{n-1}$, and then inserting three vertices $w_1, w_{n-1}, w_n$ such that $w_i$ is adjacent only to $v_{i}$ and $v_{i+1}$, for $i \in \{1,n-1,n\}$ (see Figure~\ref{fig:NHCA = los} (a)).  Suppose, to obtain a contradiction, that $D$ admits a locally out-straight enumeration $\phi$.  It is not hard to see that the induced path $w_1, P, w_{n-1}$ admits only the following out-round enumerations, all of which are out-straight: 
 \begin{itemize}
  \item $\rho = w_1, v_2, \ldots, v_{n-1}, w_{n-1}$,
  \item $\gamma = v_2, w_1, v_3, \ldots, v_{n-1}, w_{n-1}$,
  \item $\rho' = w_{n-1}, v_{n-1}, \ldots, v_2, w_{1}$, and
  \item $\gamma' = v_{n-1}, w_{n-1}, v_{n-2}, \ldots, v_2, w_1$.
 \end{itemize}
  
 \begin{figure}
  \begin{tabular*}{\textwidth}{@{\extracolsep{\fill}}*3c}
   \includegraphics{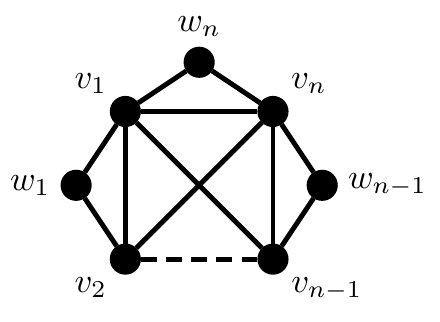} &  \includegraphics{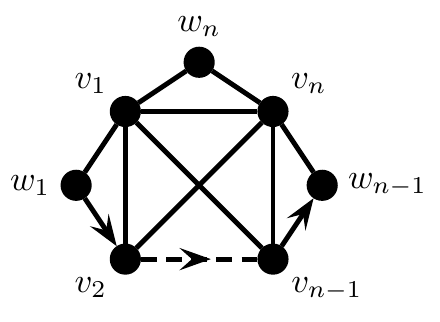} & \includegraphics{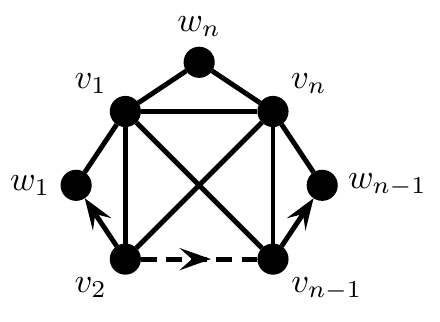} \\
    (a) & (b) & (c)
  \end{tabular*}
  \caption{The $n$-rising sun graph and the orientations corresponding to $\rho$ and $\gamma$.}\label{fig:NHCA = los}
 \end{figure}  
  
 Each of these enumerations corresponds to one of the four possible out-straight orientations of a path (see Figures \ref{fig:NHCA = los}~(b)~and~(c)).  So, one of these enumerations must appear in this order inside $\phi$; call it $\phi'$.  Furthermore, a vertex $z_1$ in $\phi'$ has a directed edge to another vertex $z_2$ in $\phi'$ only if $z_1$ is to the left of $z_2$ in $\phi'$.  Enumerations $\rho$ and $\rho'$ are symmetric, in the sense that the former is obtained from the latter by exchanging the labels of the vertices.  This is also true for enumerations $\gamma$ and $\gamma'$, so we need to consider only two cases, either $\phi' = \rho$ or $\phi' = \gamma$.  For the sake of simplicity, define $u_1 = w_1$ and $u_2 = v_2$ in $\rho$, and $u_1 = v_2$ and $u_2 = w_1$ in $\gamma$.  Thus, $\phi' = u_1, u_2, v_3, \ldots, v_{n-1}, w_{n-1}$.  Denote by $\ell(v)$ and $r(v)$ the leftmost and rightmost neighbors of $v$ in $\phi$ for every $v \in V(D)$.  The following claims analyze the positions of $w_n$ and $v_n$ in $\phi$.
 \begin{description}
  \item[Claim 1:] $w_n \not \in \cir[u_1, w_{n-1}]$ in $\phi$.  Otherwise, there would be two adjacent vertices $z_1$ and $z_2$ such that $z_1, z_2$ appear in this order in $\phi'$ and $z_1, w_n, z_2$ appear in this order in $\phi$.  But this is impossible, because $z_1 \to z_2$ and $z_1 \nto w_n$.
  \item[Claim 2:] $v_n \in \cir[w_1, w_{n-1}]$ in $\phi$.  Again, suppose otherwise, and observe that $v_n \not \in \cir[w_1, v_3]$ in $\phi$.  Vertices $v_3$ and $v_n$ are adjacent, so either $v_n \to v_3$ or $v_3 \to v_n$.  The former is impossible because it implies that $w_1 \in [v_n, r(v_n)]$, contradicting the fact that $v_n$ is not adjacent to $w_1$.  In the latter case, $v_2 \nto v_n$ because $w_{n-1} \not \in [v_2, r(v_2)]$.  But this is also impossible since $v_3 \in [\ell(v_n), v_n]$, $v_2 \in [v_n, r(v_n)]$, and $v_2 \to v_3$, contradicting the fact that $[\ell(v_n), r(v_n)]$ is out-straight.
 \end{description}
 By Claims 1~and~2, $w_1, v_n, w_{n-1}, w_n$ must appear in this order in a traversal of $\phi$.  So, $v_n \to w_n$ in $D$ because $w_1 \not \in [w_n, r(w_n)]$ or, otherwise, $\phi$ would not be out-round.  Hence, $v_1, w_{n-1}, w_n$ cannot appear in this order in the range $[\ell(v_n), r(v_n)]$, because $v_1$ is adjacent to $w_n$ and not to $w_{n-1}$.  Consequently, $v_n, w_{n-1}, v_1$ appear in this order in $[\ell(v_n), r(v_n)]$.  Analogously, $v_2, w_{n-1}, v_1$ cannot appear in this order in $[\ell(v_n), r(v_n)]$, so $v_n, w_{n-1}, v_2$ must appear in this order in $[\ell(v_n), r(v_n)]$.  Recall that $v_2 \to v_3$ in $D$, thus $v_n, w_{n-1}, v_2, v_3$ must appear in this order in $[v_n, r(v_n)]$ and, therefore, $v_n \to v_3$.  But since $\phi'$ is either $\rho$ or $\gamma$, it follows that $v_n, w_{n-1}, u_1, u_2, v_3$ appear in this order in $\phi$.  This is a contradiction to the fact that $\phi$ is out-round, because $v_n \to v_3$ but $v_n \nto w_1 \in \{u_1, u_2\}$.

 Finally, by using backtracking arguments, it can be proved that there are no locally out-straight orientations of the $3$-sun, the tent, and the umbrella.
\end{proof}

PIG, PCA, and PHCA graphs can also be described in terms of orientations and enumerations.  The analogous of an out-straight oriented graph, for PIG graphs, is called a straight oriented graph.  An oriented graph $D$ is \emph{straight} if there is an ordering $v_1, \ldots, v_n$ of $V(D)$ such that, for every vertex $v_i$,  $N^-[v_i] = \lin[v_{i-\ell}, v_i]$ and $N^+[v_i] =  \lin[v_{i}, v_{i+r}]$, where $\ell = d^-(v_i)$ and $r = d^+(v_i)$ (see Figure \ref{fig:round orientations}~(a)).  As before, the ordering $v_1, \ldots, v_n$ is called a \emph{straight enumeration} of $D$.  \citet{DengHellHuangSJC1996} \citep[see also][]{Huang1992} proved that $G$ is a PIG graph if and only if it admits a straight orientation.  

\begin{figure}
 \begin{tabular}{c@{\ \ }c@{\ \ }c}
  \includegraphics{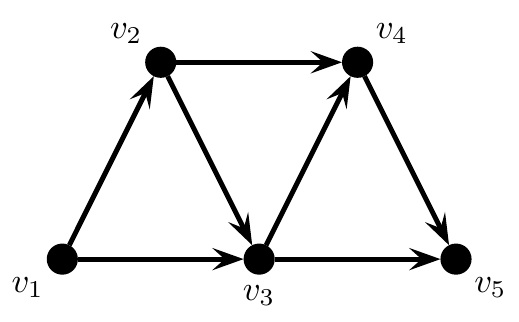} & \includegraphics{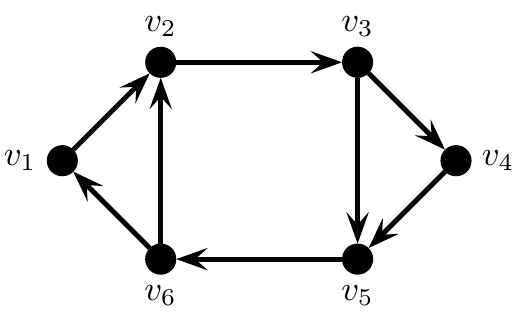} & \includegraphics{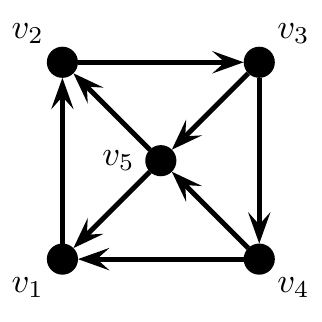} \\
  (a) & (b) & (c)
 \end{tabular}
 \caption[Examples of straight, locally straight and round oriented graphs.]{Examples of straight (a), locally straight (b), and round (c) oriented graphs.}\label{fig:round orientations}
\end{figure}

As out-straight oriented graphs, straight oriented graphs can be generalized either as locally straight oriented graphs or as round oriented graphs. An oriented graph $D$ is \emph{round} if there is an ordering $v_1, \ldots, v_n$ of $V(D)$ such that $N^-[v_i] = \cir[v_{i-\ell}, v_i]$ and $N^+[v_i] = \cir[v_{i}, v_{i+r}]$, for every vertex $v_i$ with $\ell = d^-(v_i)$ and $r = d^+(v_i)$.  When the scope $S$ of each vertex is a straight enumeration of $D[S]$, then $D$ is also a \emph{locally straight} oriented graph.  As usual, the circular orderings corresponding to the round and locally straight oriented graphs are called \emph{round enumerations} and \emph{locally straight enumerations}, respectively.  Examples of straight, locally straight and round oriented graphs are depicted in Figure~\ref{fig:round orientations}. \citet{HellHuangJGT1995} \citep[see also][]{SkrienJGT1982} proved that the class of PCA graphs is exactly the class of graphs that admit a round orientation.  As with NCA graphs, we can restore the Helly condition of PIG graphs by restricting the attention to locally straight graphs.

\begin{theorem}\label{thm:phca iff locally straight orientable}
 A graph is a PHCA graph if and only if it admits a locally straight orientation.
\end{theorem}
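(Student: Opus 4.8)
The plan is to prove the two directions asymmetrically, reusing as much of the existing machinery as possible. The forward implication will be obtained by \emph{strengthening} the construction already used for Theorem~\ref{thm:locally-interval}, while the converse will follow almost for free by combining that theorem with the characterization of PHCA graphs among NHCA graphs (Theorem~\ref{thm:caracterizacion-NHCA->PHCA}) and the Hell--Huang correspondence between PCA graphs and round orientations.

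For the forward direction I would start from an NPHCA model $\M$ of $G$ (such a model exists for every PHCA graph, as follows from Theorem~\ref{thm:caracterizacion-PCA->PHCA} and Corollary~\ref{cor:PCA-models-of-PHCA}). Ordering the arcs $A_1, \ldots, A_n$ by their beginning points, I define the orientation $D$ and enumeration $\phi = v_1, \ldots, v_n$ exactly as in the proofs of Theorems~\ref{thm:NCA = right-consecutive} and~\ref{thm:locally-interval}, namely $v_i \to v_j$ iff $s(A_j) \in A_i$. That argument already shows $\phi$ is a locally out-straight enumeration, so it remains only to upgrade ``out-round'' to ``round'' and ``out-straight scopes'' to ``straight scopes'', and both upgrades come from the properness of $\M$. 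First, since no arc of $\M$ is contained in another, any arc whose interior contains $s(A_i)$ cannot have an arc nested between it and $s(A_i)$; hence the in-neighbors of $v_i$ have consecutive beginning points ending at $v_i$, giving $N^-[v_i] = \cir[v_{i-\ell}, v_i]$ with $\ell = d^-(v_i)$, so $D$ is round. Second, the scope $S = \cir[v_l, v_r]$ of $v_i$ induces in $\M$ a submodel on $A_l, \ldots, A_r$ which is an interval model (as in Theorem~\ref{thm:locally-interval}) and is proper, hence a PIG model of $G[S]$; cutting the circle at the uncovered gap of that scope turns the circular ranges into genuine linear ranges, so reading the arcs from $v_l$ to $v_r$ realizes $D[S]$ as the canonical straight orientation of the PIG submodel (cf.\ \citealp{DengHellHuangSJC1996}). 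Thus $S$ is a straight enumeration of $D[S]$, and $\phi$ is a locally straight enumeration of $D$.

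For the converse, suppose $G$ admits a locally straight orientation $D$ with enumeration $\phi$. Since every linear range is a circular range and every straight enumeration is in particular an out-straight enumeration, $D$ is round and $\phi$ is a locally out-straight enumeration; hence $G$ is NHCA by Theorem~\ref{thm:locally-interval}. On the other hand, a round orientation makes $G$ a PCA graph \citep{HellHuangJGT1995}, and since the class of PCA graphs is hereditary while $K_{1,3}$ is not a proper circular-arc graph, $G$ contains no induced $K_{1,3}$. Applying Theorem~\ref{thm:caracterizacion-NHCA->PHCA} to the NHCA graph $G$ then gives that $G$ is PHCA. This route silently disposes of the obstructions $W_4$ and $S_3$: both fail to be NHCA, so they are excluded by the NHCA step, exactly as $K_{1,3}$ is excluded by the PCA step, and no separate case analysis on $S_3$ is needed.

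I expect the main obstacle to be the two properness-based upgrades in the forward direction. Establishing that the insets are circular ranges needs the short containment argument above, and identifying $D[S]$ with the straight orientation of the PIG submodel requires verifying that the interval structure on the scope is preserved by the induced orientation. The converse is essentially bookkeeping once Theorems~\ref{thm:locally-interval} and~\ref{thm:caracterizacion-NHCA->PHCA} and the round-orientation characterization of PCA graphs are invoked; its only delicate point is the elementary fact that $K_{1,3}$ admits no proper circular-arc model.
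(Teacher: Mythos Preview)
Your proposal is correct and follows essentially the same approach as the paper: in the forward direction the paper also starts from an NPHCA model, reuses the orientation and enumeration from Theorem~\ref{thm:locally-interval}, and upgrades the scopes from out-straight to straight by observing that the scope submodel is both interval and proper, hence PIG; in the converse the paper likewise deduces NHCA via Theorem~\ref{thm:locally-interval} and then excludes $K_{1,3}$ via Theorem~\ref{thm:caracterizacion-NHCA->PHCA}. The only cosmetic difference is that the paper rules out $K_{1,3}$ by noting directly that it admits no round orientation, whereas you route through the Hell--Huang equivalence and the fact that $K_{1,3}$ is not PCA---these are the same statement viewed from opposite sides.
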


\begin{proof}
 Let $\M$ be an NPHCA model of a graph $G$ and define $D$ and $\phi = v_1, \ldots, v_n$ as in the proof of Theorem~\ref{thm:locally-interval}.  By Theorem~\ref{thm:locally-interval}, the scope $S$ of each vertex $v_i$ in $\phi$ is an out-straight enumeration of $D[S]$, for every $1 \leq i \leq n$.  Also, as in the proof of Theorem~\ref{thm:locally-interval}, the submodel $\M'$ of $\M$ induced by the vertices of $S$ is an interval model of $G[S]$.  Since $\M'$ is also PCA then $\M'$ is a PIG model and so $S$ is in fact a straight enumeration of $D[S]$.  Therefore, $D$ is locally straight.

 For the converse, let $G$ be a locally straight orientable graph.  Since locally straight oriented graphs are locally out-straight, then $G$ is an NHCA graph by Theorem~\ref{thm:locally-interval}.  Thus, it is enough to see that $G$ contains no induced $K_{1,3}$ by Theorem~\ref{thm:caracterizacion-NHCA->PHCA}.  But, since the class of locally straight orientable graphs is hereditary, it suffices to prove that no orientation of $K_{1,3}$ is locally straight.  This is clearly true because the $K_{1,3}$ graph admits no round orientations at all.
\end{proof}

In Figure~\ref{fig:round hierarchy}, we depict the class hierarchy of NCA graphs.  This hierarchy uses the same notation as the hierarchy depicted in Figure~\ref{fig:hierarchy}.  This time, however, the label beside the edge between $\C_1$ and $\C_2$ indicates the additional condition that a round enumeration of an orientation of a graph $G \in \C_1$ has to satisfy in order to also prove that $G \in \C_2$.  

\begin{figure}[htb!]
  \centering
  \includegraphics{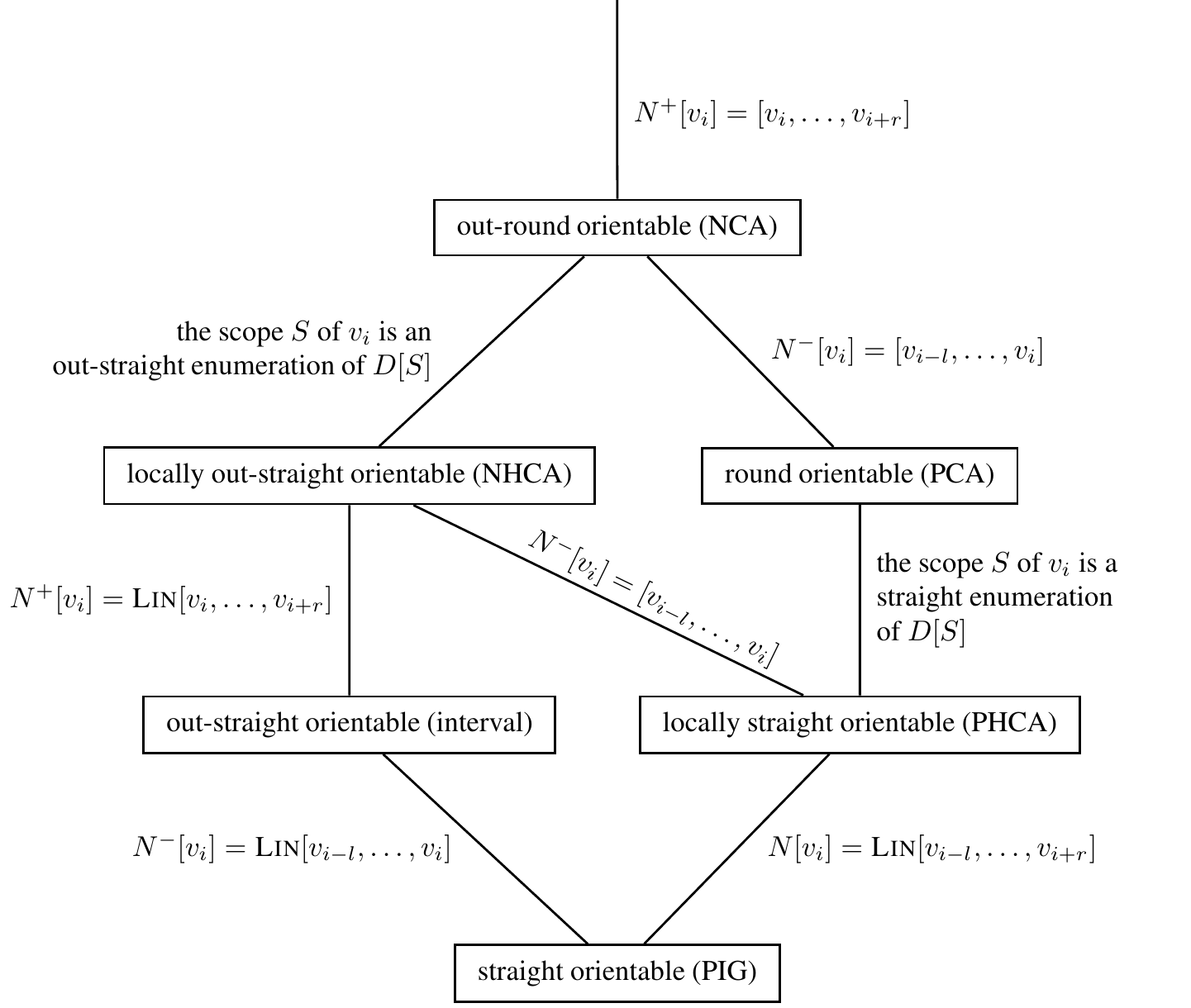}
  \caption{The NCA class hierarchy, depicted in terms of round enumerations.}\label{fig:round hierarchy}
\end{figure}

\section{Conclusions}
\label{sec:conclusions}

In this paper we studied the NHCA, PHCA, and UHCA classes of graphs.  In Section~\ref{sec:motivation} we showed that these classes arise naturally when some properties are generalized from a family of intervals on the real line to a family of arcs of a circle.  NHCA, PHCA, and UHCA also appear when studying some problems on circular-arc graphs that are to hard for the general case but easy for interval graphs.  Given the close relationship between interval and NHCA graphs, we could expect to find good results for such problems, restricted to NHCA or PHCA graphs.  The coloring problem on circular-arc graphs is perhaps the most well known example in which NHCA and PHCA play a major role.  Let $\chi$ and $\omega$ be the chromatic number and clique number of a circular-arc graph $G$, and $r$ be the maximum number of arcs that share a common point of a circular-arc model $\M$ of $G$. \cite{TuckerSJAM1975} proved that $\chi \leq 3/2\omega$ when $G$ is NHCA, while $\chi \leq 3/2r \leq 3/2\omega$ when $\M$ is a PHCA model.  Given the close relationship between interval and NHCA graphs, we expect that many properties and algorithms on (proper) interval graphs can be easily generalized for (PHCA) NHCA graphs.

In Section~\ref{sec:characterizations} we described characterizations for the classes of NHCA, PHCA, and UHCA graphs.  These characterizations imply a complete family of forbidden induced subgraphs for the PHCA and UHCA classes.  That is, adding $W_4$ and the $3$-sun to Tucker's forbidden subgraphs for PCA graphs, we obtain the complete list of forbidden subgraphs for PHCA graphs, while adding $W_4$ to Tucker's forbidden family for UCA graphs, we obtain all the subgraphs forbidden for UHCA graphs.  On the other hand, the characterization for NHCA graphs is partial; we know which circular-arc graphs are not NHCA, but the complete family of forbidden subgraphs for NHCA is still unknown.  

On the algorithmic side, \citet{Nussbaum2008} showed how to transform a circular-arc model into an equivalent PCA model in $O(n)$ time.  Combining this results with those in Section~\ref{sec:algorithms}, we obtain $O(n)$ time algorithms for transforming a circular-arc model into an equivalent PHCA or UHCA model.  For NHCA graphs we did not find such an algorithm.  Furthermore, the $O(n+m)$ time algorithm for the recognition of NHCA graphs is not based in transformations on the input circular-arc model.  The problem is that did not find how to transform an HCA model into an interval model in $O(n)$ time.  

The recognition problem for NHCA, PHCA, and UHCA graphs is well solved when the input is given by the adjacency lists of the graph.  However, the algorithm for the recognition of NHCA graphs given in Section~\ref{sec:algorithms} is hard, because it requires the build of a circular-arc model of the input graph.  So, it would be nice to find a direct algorithm for the recognition of NHCA graphs.  The local interval property could play an important part in such an algorithm.  In fact, many ideas that were successful for other recognition algorithms can perhaps be used for the recognition of NHCA graphs.  For instance, we could test whether $G$ is an interval graph.  If so, output the interval model of $G$.  Otherwise, try to split $G$ into several interval graphs, resembling the process done by \citet{DengHellHuangSJC1996}.  Then, recognize each part as an interval graph and ``rebuild'' the model.  Other possibility is to find the cliques of each part so as to build a $PC$-tree latter \citep[see][]{HsuMcConnellTCS2003}, as \citet{BoothLuekerJCSS1976} do.  The major inconvenient of this approach is that finding the cliques of an NHCA graphs directly seems difficult.  The third possibility is to extend the incremental data structure used by~\cite{KorteMohringSJC1989} in their interval graph recognition algorithm.  Since the insertion of a vertex depends only on its neighborhood, then it is perhaps possible to exploit the local interval property.  Also, it could be interesting to generalize the incremental algorithm by~\cite{Crespelle2010}, so as to insert each vertex in $O(n)$ time.

\end{document}